%% file: main.tex
\documentclass[11pt,a4paper]{article}

\usepackage[utf8]{inputenc}
\usepackage[T1]{fontenc}
\usepackage{lmodern}

\usepackage{amsmath,amssymb,amsthm,mathtools}
\usepackage{bm}

\usepackage[margin=2.6cm]{geometry}
\usepackage{microtype}
\usepackage{booktabs}
\usepackage{enumitem}
\usepackage{subcaption}
\usepackage{caption}

\usepackage{tikz}
\usepackage{pgfplots}
\pgfplotsset{compat=1.18}
\usepgfplotslibrary{groupplots}

\usepackage[colorlinks=true,allcolors=blue!55!black]{hyperref}
\usepackage[capitalize,noabbrev]{cleveref}

\usepackage{aliascnt}

\theoremstyle{plain}
\newtheorem{theorem}{Theorem}[section]
\newaliascnt{proposition}{theorem}
\newtheorem{proposition}[proposition]{Proposition}
\aliascntresetthe{proposition}
\newaliascnt{lemma}{theorem}
\newtheorem{lemma}[lemma]{Lemma}
\aliascntresetthe{lemma}
\newaliascnt{corollary}{theorem}

\aliascntresetthe{corollary}

\theoremstyle{definition}
\newaliascnt{definition}{theorem}
\newtheorem{definition}[definition]{Definition}
\aliascntresetthe{definition}
\newaliascnt{assumption}{theorem}
\newtheorem{assumption}[assumption]{Assumption}
\aliascntresetthe{assumption}
\newaliascnt{remark}{theorem}
\newtheorem{remark}[remark]{Remark}
\aliascntresetthe{remark}

\newcommand{\R}{\mathbb{R}}
\newcommand{\N}{\mathbb{N}}
\newcommand{\Q}{\mathcal{Q}}
\newcommand{\E}{\mathcal{E}}
\newcommand{\Diss}{\mathrm{Diss}}
\newcommand{\Dr}{\mathcal{D}}
\newcommand{\Var}{\mathrm{Var}}
\newcommand{\bs}[1]{\bm{#1}}
\newcommand{\dd}{\mathop{}\!\mathrm{d}}
\newcommand{\sgn}{\operatorname{sgn}}
\newcommand{\Id}{\mathrm{I}}
\DeclareMathOperator*{\argmin}{arg\,min}

\crefname{theorem}{Theorem}{Theorems}
\Crefname{theorem}{Theorem}{Theorems}
\crefname{proposition}{Proposition}{Propositions}
\Crefname{proposition}{Proposition}{Propositions}
\crefname{lemma}{Lemma}{Lemmas}
\Crefname{lemma}{Lemma}{Lemmas}
\crefname{corollary}{Corollary}{Corollaries}
\Crefname{corollary}{Corollary}{Corollaries}
\crefname{definition}{Definition}{Definitions}
\Crefname{definition}{Definition}{Definitions}
\crefname{remark}{Remark}{Remarks}
\Crefname{remark}{Remark}{Remarks}
\crefname{assumption}{Assumption}{Assumptions}
\Crefname{assumption}{Assumption}{Assumptions}

\begin{document}

\title{\bfseries Rate-Independent Epigenetics: a thermodynamically
consistent framework for modelling epigenetic response}

\author{Jacobo Ayensa-Jim\'enez\thanks{Technical University of Madrid
(UPM), Madrid, Spain. \texttt{jacobo.ayensa@upm.es}.}
\and
Ignacio Romero\thanks{Technical University of Madrid (UPM) and IMDEA
Materials Institute, Madrid, Spain. \texttt{ignacio.romero@upm.es}.}}

\date{\today}

\maketitle

\begin{abstract}
Epigenetic changes---heritable, long-lived, yet actively reversible
modifications of the chromatin state---display memory, threshold
activation and hysteresis, features that are the hallmark of
\emph{rate-independent} dissipative evolution. We propose a
mathematical framework, which we call Rate-Independent Epigenetics
(RIE), that models epigenetic change within the theory of
rate-independent systems and is, by construction, consistent with two fundamental principles that we identify with the laws of thermodynamics. In the proposed framework, a model is specified by a triple: a state
space of epigenetic configurations, a stored-energy functional
depending on the state and on an external (micro-environmental)
loading, and a $1$-homogeneous dissipation potential encoding the
resistance of the epigenetic machinery to change. Assuming an
energetic evolution principle, the governing equations---an energetic
(global stability plus energy balance) formulation and a resulting
subdifferential inclusion of Biot type---follow systematically, with
no further modelling hypotheses. Solutions satisfy, a priori, energetic
estimates that coincide with the proposed first and second laws: the energy
balance is exact energy conservation, and the $1$-homogeneity of the
dissipation potential forces a non-negative, minimal (\emph{economical})
dissipation. Under natural coercivity and continuity assumptions we
establish existence of energetic solutions and, via vanishing
viscosity, of balanced-viscosity solutions that resolve the ambiguity
of the energetic formulation at epigenetic switches; uniqueness holds
under convexity and, in the scalar case, under a mild
finite-multiplicity condition. We then build a variational time
integrator, prove its convergence to energetic solutions and a global
energy-consistency estimate, and reduce it, in the scalar case, to a
two-sided return map. The framework is illustrated on the scalar linear
play operator, the minimal RIE model of epigenetic memory, for which
the state evolution is obtained in closed form and reproduces both
reversible and irreversible (locked-in) responses; the integrator
recovers the exact solution at the nodes and exhibits the predicted
first-order energy consistency. We also validate the approach for an example with a double-well energetic potential, showing the ability of the framework to study multi-stability scenarios and catastrophic switches, ubiquitous in biological contexts and for a nonlinear problem, proving that the theoretical results hold. The presented framework can be seen as a skeleton for a richer thermodynamically-consistent theory incorporating viscous dissipation features.

\medskip
\noindent\textbf{Keywords:} rate-independent systems; epigenetics;
energetic solutions; balanced-viscosity solutions; dissipation
potential; hysteresis; variational time integration;
thermodynamic consistency.

\medskip
\noindent\textbf{AMS Subject Classification:}
49J40, 74C05, 34G25, 74N30, 65M12, 92C40, 80A17, 37D35.
\end{abstract}

\setcounter{tocdepth}{2}
\tableofcontents
\bigskip

\input{sections/introduction}
\input{sections/framework}
\input{sections/thermodynamics}
\input{sections/mathematical-results}
\input{sections/numerical-methods}
\input{sections/experiments}
\input{sections/discussion}
\input{sections/conclusions}

\appendix
\input{sections/appendix}

\bibliographystyle{plain}
\bibliography{references}

\end{document}

%% file: sections/introduction.tex
\section{Introduction}\label{sec:intro}

Epigenetic changes are heritable modifications of gene expression that
do not alter the DNA sequence, mediated for instance by DNA
methylation and histone modification~\cite{Bird2002}. Two features make
them distinctive from a modelling standpoint. First, they are
\emph{bistable}: cells commit to discrete phenotypes separated by
barriers, a picture formalised by Waddington's epigenetic
landscape~\cite{Waddington1957} and, in dynamical-systems terms, by
bistable regulatory switches~\cite{Huang2007,Ferrell2012}. Second, they
exhibit \emph{memory and hysteresis}: an epigenetic mark established in
response to a micro-environmental stimulus can persist after the
stimulus is removed, and its erasure requires crossing a distinct
threshold---as when hypoxia impairs the oxygen-dependent demethylation
machinery and leaves a durable methylation
imprint~\cite{Thienpont2016}. Threshold activation, path dependence and
residual state upon unloading are precisely the signatures of
\emph{rate-independent} dissipative evolution.

This observation is the starting point of the present work. We propose
to model epigenetic change within the mathematical theory of
rate-independent systems, a framework we call \emph{Rate-Independent
Epigenetics} (RIE). Rate-independent systems describe processes whose
response is invariant under time reparametrisation, governed by the
balance between a stored energy and a dissipation
mechanism~\cite{MielkeTheil,Mielke2005,MielkeRoubicek}. They were
developed largely in mechanics---plasticity~\cite{lubliner2008plasticity,
hanreddy2013}, viscoelasticity~\cite{tschoegl1989},
damage~\cite{lemaitre2005uh}, fracture~\cite{francfort1998,dalmaso2006},
diffusion in porous media~\cite{bear1972}, frictional
contact~\cite{wriggers2006}, phase
transformation~\cite{balljames1987,mielke2002,carstensen2002uh} and
ferromagnetism~\cite{desimone2002,brokate1996}---and possess a
variational structure that yields powerful existence, uniqueness and
approximation results, valid even in the non-smooth
setting~\cite{MielkeRoubicek,visintin1994,Krasnoselskii}. The central
premise of RIE is that this same structure is the natural language for
epigenetic evolution: the double-well landscape of Waddington becomes
the stored energy, the micro-environment becomes the loading, and the
resistance of the epigenetic machinery to remodelling becomes the
dissipation threshold. Even though hysteresis operators have been used
in mathematical biology ---for instance in population and
epidemiological models---never before with this energetic, variational and
thermodynamically consistent structure, and, to the best of our
knowledge, not for epigenetics: the novelty of RIE is thus twofold, in
the mathematical machinery it brings to bear and in the biological
domain to which that machinery is applied.

The appeal of this transfer is methodological as much as conceptual.
Once the three ingredients are fixed and an energetic evolution
principle is postulated, \emph{all} governing equations follow
systematically and are, by construction, thermodynamically consistent:
the first and second laws are not imposed a posteriori but are built
into the variational structure. This removes a recurrent difficulty in
quantitative epigenetics---the ad-hoc prescription of evolution
laws---and guarantees that any RIE model respects energy conservation
and non-negative entropy production irrespective of the choice of
landscape or dissipation.

Rate independence is, of course, an idealisation. We adopt it as the
\emph{quasi-static limit} of epigenetic evolution---the analogue of
perfect plasticity in mechanics---questionable in general, but a good
leading-order description whenever the chromatin state relaxes quickly
relative to its environment or stays frozen except at thresholds, and,
crucially, one that captures the threshold--memory--hysteresis
phenomenology while yielding a robust, thermodynamically consistent and
predictive skeleton. Rate-dependent effects, such as relaxation and
ratcheting, are recovered as a controlled viscous extension whose
vanishing-viscosity limit is the present model; we return to the
validity of the hypothesis, and to its relation with existing
paradigms, in \cref{sec:discussion}.

\paragraph{Scope and contributions.}
This paper develops the general RIE framework and its numerical
approximation. The starting point is precisely the definition of an epigenetic model: a triple $(\Q,E,\Psi)$ consisting of a state space
$\Q\subseteq\R^n$ of epigenetic configurations, a stored-energy
functional $E$ depending on the state and on a loading path $S(t)$, and
a convex, $1$-homogeneous dissipation potential $\Psi$. Based on this succinct postulate, our contributions are:
\begin{enumerate}[leftmargin=2em,itemsep=2pt]
  \item We derive the governing equations from the energetic principle:
  the energetic formulation (\cref{def:energetic}) and the equivalent
  Biot inclusion (\cref{prop:biot}), and give their thermodynamic
  reading---first law, second law, and a principle of minimal
  dissipation (\cref{sec:thermo}).
  \item We establish the well-posedness of RIE models
  (\cref{sec:math}): existence of energetic solutions, uniqueness under
  convexity, and existence and scalar uniqueness of balanced-viscosity
  solutions, which select the physically correct path at epigenetic
  switches.
  \item We construct and analyse a variational time integrator
  (\cref{sec:numerical}): convergence to energetic solutions and a
  global energy-consistency estimate, with the scalar update reducing
  to a two-sided return map.
  \item We validate the method on the linear play operator---the
  minimal RIE model of epigenetic memory---for which we give the exact
  closed-form evolution and reproduce both reversible and irreversible
  responses, as well as in a problem with a two-well energetic potential, covering bi-stability scenarios, the cornerstone of epigenetics. The method is also tested in a problem requiring nonlinear numerical solvers using the method of manufactured solutions to test the scope of the convergence and consistency results (\cref{sec:experiments}).
\end{enumerate}
To keep the narrative focused, the body states only the principal
results; supporting results and all proofs, in full detail, are
collected in the appendices.

\paragraph{Outline.}
\Cref{sec:framework} sets up the framework and derives the governing
equations. \Cref{sec:thermo} develops the thermodynamic interpretation.
\Cref{sec:math} presents the well-posedness theory and
\cref{sec:numerical} the numerical method and its analysis.
\Cref{sec:experiments} reports the numerical experiments;
\cref{sec:discussion} discusses the validity of the rate-independent
hypothesis, situates the framework among existing paradigms and
outlines its extensions; and \cref{sec:conclusions} concludes.

%% file: sections/framework.tex
\section{The rate-independent framework}\label{sec:framework}

\subsection{The defining triple}\label{sec:ingredients}

Following~\cite{MielkeRoubicek}, a rate-independent system is specified
by a triple $(\Q, E, \Psi)$.

\begin{enumerate}[label=(\roman*),leftmargin=2.2em,itemsep=3pt]
  \item \textbf{State space.} The set $\Q$ of \emph{admissible states} is a
  closed and convex subset of a Banach space; throughout we take
  \begin{equation}\label{eq:state_space}
    \Q \subseteq \R^n .
  \end{equation}

  \item \textbf{Loading.} The interaction of the system with its
  environment is encoded by a prescribed, time-dependent
  \emph{loading path}
  \begin{equation}
    S \in W^{1,1}(0,T;\R^{+}),
  \end{equation}
  with $W^{1,1}$ denoting the Sobolev space of integrable functions
  with integrable first derivative.
  
  \item \textbf{Stored energy.} The \emph{stored-energy functional}
  $E\colon \Q\times\R^{+}\to\R$ is assumed to admit the additive
  decomposition
  \begin{equation}\label{eq:energy}
    E(\bs q, S) \;=\; F(\bs q)\;-\;\bs q\cdot\bs\ell(S),
  \end{equation}
  where $F\colon\Q\to\R$ is the \emph{configurational free energy} and
  $\bs\ell\colon\R^{+}\to\R^n$ the \emph{interaction (loading)
  potential}. Of $F$ we require only the minimal conditions that
  thermodynamics places on a free energy: that it be \emph{lower
  semicontinuous} and \emph{bounded below}, so that $E(\cdot,S)$ is a
  well-defined energy admitting a ground state, and continuously
  differentiable wherever the driving force $\bs f=-\partial_{\bs q}E$
  is invoked. Convexity is deliberately \emph{not} assumed---it would
  exclude the bistable landscapes of interest---and the stronger
  coercivity and smoothness needed for the well-posedness results are
  collected separately in \cref{ass:standing}.

  \item \textbf{Dissipation potential.} The \emph{dissipation
  potential} is a convex, lower semicontinuous map
  $\Psi\colon\R^n\to\R$ satisfying
  \begin{itemize}[leftmargin=1.4em,itemsep=1pt]
    \item \emph{normalisation}: $\Psi(\bs 0)=0$;
    \item \emph{positiveness}: $\Psi(\bs v)\ge 0$ for all $\bs v\in\R^n$;
    \item \emph{$1$-homogeneity}:
    $\Psi(\lambda\bs v)=\lambda\,\Psi(\bs v)$ for all $\lambda>0$.
  \end{itemize}
  These are the structural assumptions required
  in~\cite{Mielke2005,MielkeRoubicek} for the dissipation potential of
  a rate-independent system; the $1$-homogeneity is the property that
  distinguishes rate-independent from viscous (rate-dependent)
  evolution. A map satisfying only the first two conditions is called
  a dissipation \emph{pseudo-potential}.
\end{enumerate}

The forthcoming well-posedness results require, in addition, the
quantitative hypotheses of \cref{ass:standing}, which are collected in
\cref{sec:math}. The benchmark of \cref{sec:experiments} satisfies all
of them.

\subsection{Derived canonical quantities}\label{sec:derived}

From the triple $(\Q,E,\Psi)$ the following objects are canonically
constructed.

\paragraph{(a) Thermodynamic driving force.}
When $E$ is differentiable in $\bs q$, the negative energy gradient
defines the \emph{generalised thermodynamic force}
\begin{equation}\label{eq:force}
  \bs f(\bs q, S) \;:=\; -\frac{\partial E}{\partial \bs q}(\bs q, S)
  \;=\; \bs\ell(S) - \bs g(\bs q),
  \qquad
  \bs g(\bs q) := \frac{\partial F}{\partial\bs q}(\bs q).
\end{equation}
The force $\bs f$ is the variable conjugate to $\bs q$: it quantifies
the thermodynamic drive toward evolution of the state.

\paragraph{(b) Elastic domain.}
The subdifferential of $\Psi$ at the origin,
\begin{equation}\label{eq:elastic_domain}
  \E := \partial\Psi(\bs 0),
\end{equation}
is the closed convex set of forces the system can sustain without
evolving. It is the \emph{elastic domain} in force space; by convex
duality~\cite{rockafellar1970,ekeland1999convex} $\Psi$ is the support
function of the elastic domain, $\Psi(\bs v)=\sup_{\bs f\in\E}\bs f\cdot\bs v$.

\paragraph{(c) Dissipation distance.}
The \emph{dissipation distance} is the minimal cost of a transition
between two states,
\begin{equation}\label{eq:diss_dist}
  \Dr(\bs q_1,\bs q_2)
  := \inf\Bigl\{\textstyle\int_0^1 \Psi(\dot\gamma(s))\dd s :
  \gamma\in\mathrm{AC}([0,1];\Q),\ \gamma(0)=\bs q_1,\ \gamma(1)=\bs q_2\Bigr\}.
\end{equation}

\paragraph{(d) Total dissipation.}
For a trajectory $\bs q\colon[0,T]\to\Q$ of bounded variation, the
\emph{total dissipation} is
\begin{equation}\label{eq:total_diss}
  \Diss_\Psi(\bs q;[0,t])
  := \sup\Bigl\{\textstyle\sum_{j=1}^{N}\Dr\bigl(\bs q(t_{j-1}),\bs q(t_j)\bigr):
  0=t_0<\dots<t_N=t,\ N\in\N\Bigr\}.
\end{equation}

\paragraph{(e) Power and work of the loading.}
The rate of energy change due to the loading at frozen state is
\begin{equation}\label{eq:power}
  \mathcal P_S(t) := \frac{\dd}{\dd t} E(\bs q, S(t))
  = \partial_S E(\bs q, S)\,\dot S(t)
  = -\bs q\cdot\bs\ell'(S(t))\,\dot S(t),
\end{equation}
and the associated \emph{work} performed on the system over $[0,t]$ is the path integral $\mathcal W_S(t):=\int_0^t\partial_S E(\bs q(s),S(s))\,\dot S(s)\dd s$.

\subsection{Governing equations: energetic formulation}
\label{sec:energetic}

We postulate the evolution through the following variational notion,
due to Mielke and Theil~\cite{MielkeTheil}.

\begin{definition}[Energetic solution]\label{def:energetic}
A function $\bs q\colon[0,T]\to\Q$ is an \emph{energetic solution} of the rate-independent system $(\Q,E,\Psi)$ for the loading $S$ if it satisfies:

\smallskip
\noindent\textbf{(S) Global stability.} For every $t\in[0,T]$ and every
$\tilde{\bs q}\in\Q$,
\begin{equation}\label{eq:stability}
  E(\bs q(t),S(t)) \;\le\; E(\tilde{\bs q},S(t)) + \Dr(\bs q(t),\tilde{\bs q}).
\end{equation}

\noindent\textbf{(E) Energy balance.} For every $t\in[0,T]$,
\begin{equation}\label{eq:energy_balance}
  E(\bs q(t),S(t)) + \Diss_\Psi(\bs q;[0,t])
  = E(\bs q(0),S(0)) + \int_0^t \partial_S E(\bs q(s),S(s))\,\dot S(s)\dd s.
\end{equation}
\end{definition}

Condition~(S) admits a variational reading: $\bs q(t)$ is a global
minimiser of the perturbed energy $\bs r\mapsto E(\bs r,S(t))+\Dr(\bs
q(t),\bs r)$. Condition~(E) states that, along the trajectory, the sum
of stored energy and cumulative dissipation equals the initial energy
plus the work supplied by the loading; it is an exact energy
conservation statement, examined in
\cref{sec:thermo}.

\subsection{Governing equations: subdifferential formulation}
\label{sec:subdiff}

Since $\Psi$ is convex but generally non-smooth at $\dot{\bs q}=\bs 0$,
the pointwise force balance takes the form of a differential inclusion.

\begin{proposition}[Biot inclusion]\label{prop:biot}
Let $\bs q\in\mathrm{AC}([0,T];\Q)$ satisfy \textnormal{(S)}
and~\textnormal{(E)}, with $E(\cdot,S)$ differentiable. Then, for a.e.\
$t\in[0,T]$,
\begin{equation}\label{eq:biot}
  \bs 0 \;\in\; \partial\Psi(\dot{\bs q}(t)) + \frac{\partial E}{\partial\bs q}(\bs q(t),S(t)),
\end{equation}
equivalently $\bs f(\bs q,S)\in\partial\Psi(\dot{\bs q})$. Conversely,
if $E(\cdot,S)$ is convex, then~\eqref{eq:biot} together with the
local stability $\bs f(\bs q(t),S(t))\in\E$ implies
\textnormal{(S)}--\textnormal{(E)}.
\end{proposition}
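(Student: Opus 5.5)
The forward direction splits into two scalar inequalities, $\bs f\cdot\dot{\bs q}\le\Psi(\dot{\bs q})$ from (S) and $\bs f\cdot\dot{\bs q}\ge\Psi(\dot{\bs q})$ from (E), which together with $\bs f\in\E$ give $\bs f\in\partial\Psi(\dot{\bs q})$ through the support-function characterisation of $\Psi$. First I record two facts about $\Dr$. Since $\Q$ is convex, the straight segment is admissible in \eqref{eq:diss_dist}, and convexity and $1$-homogeneity of $\Psi$ give the triangle inequality $\Psi(\bs a+\bs b)\le\Psi(\bs a)+\Psi(\bs b)$. Jensen's inequality applied to $\int_0^1\Psi(\dot\gamma)$ then shows that $\Dr(\bs q_1,\bs q_2)=\Psi(\bs q_2-\bs q_1)$. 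With this formula, $\Diss_\Psi(\bs q;[0,t])$ is the $\Psi$-variation of $\bs q$, and for $\bs q\in\mathrm{AC}$ it equals $\int_0^t\Psi(\dot{\bs q}(s))\dd s$. I will take the standard approximation argument for that last identity as routine.

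\textbf{Local stability.} Insert $\tilde{\bs q}=\bs q(t)+h\bs v$ into \eqref{eq:stability}, with $\bs v$ pointing into $\Q$. Using $\Dr=\Psi(h\bs v)=h\Psi(\bs v)$, divide by $h$ and let $h\downarrow0$. This yields $\bs f(\bs q(t),S(t))\cdot\bs v\le\Psi(\bs v)$ for all admissible directions $\bs v$. For $\bs q(t)$ interior to $\Q$ this is exactly $\bs f\in\E$. At boundary points I would restrict to the tangent cone, which is the case I would state carefully or avoid by assuming $\bs q(t)\in\mathrm{int}\,\Q$.

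\textbf{Power identity.} This is the main obstacle. I differentiate (E) in $t$. Every term is absolutely continuous: $t\mapsto E(\bs q(t),S(t))$ is AC by the chain rule for the $C^1$ function $E$ composed with AC curves. At a.e.\ $t$ the chain rule gives $\frac{\dd}{\dd t}E(\bs q(t),S(t))=-\bs f\cdot\dot{\bs q}+\partial_SE\,\dot S$. The loading term cancels against the derivative of the work integral, leaving $\Psi(\dot{\bs q}(t))=\bs f\cdot\dot{\bs q}(t)$ a.e. The delicate points are (i) justifying the chain rule under only the differentiability assumed in the statement, where I would invoke local Lipschitz continuity of $E$ on the compact range of $\bs q$, and (ii) the identification of the variation with $\int\Psi(\dot{\bs q})$.

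\textbf{Forward direction conclusion.} Combining the two steps, $\bs f\in\E$ and $\bs f\cdot\dot{\bs q}=\Psi(\dot{\bs q})$. By the duality $\Psi(\bs v)=\sup_{\bs g\in\E}\bs g\cdot\bs v$, this is the characterisation of $\partial\Psi(\dot{\bs q})$ for $1$-homogeneous $\Psi$, namely $\partial\Psi(\bs v)=\{\bs g\in\E:\bs g\cdot\bs v=\Psi(\bs v)\}$. Hence \eqref{eq:biot} holds.

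\textbf{Converse.} Assume \eqref{eq:biot} and $\bs f\in\E$, with $E(\cdot,S)$ convex. Convexity gives $E(\tilde{\bs q},S)\ge E(\bs q,S)-\bs f\cdot(\tilde{\bs q}-\bs q)\ge E(\bs q,S)-\Psi(\tilde{\bs q}-\bs q)$, using $\bs f\in\E$. This is (S), since $\Dr(\bs q,\tilde{\bs q})=\Psi(\tilde{\bs q}-\bs q)$. For (E), the inclusion gives $\bs f\cdot\dot{\bs q}=\Psi(\dot{\bs q})$ by the same characterisation. The chain rule then gives $\frac{\dd}{\dd t}E=-\Psi(\dot{\bs q})+\partial_SE\,\dot S$, and integrating over $[0,t]$ with $\Diss_\Psi=\int_0^t\Psi(\dot{\bs q})$ yields \eqref{eq:energy_balance}.
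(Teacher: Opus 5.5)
Your proposal is correct and follows essentially the same route as the paper. You use (S) and first-order optimality to get $\bs f\in\E$, differentiate (E) along the absolutely continuous trajectory to get $\bs f\cdot\dot{\bs q}=\Psi(\dot{\bs q})$, and close the forward direction with the characterisation $\partial\Psi(\bs v)=\{\bs g\in\E:\bs g\cdot\bs v=\Psi(\bs v)\}$, which is the paper's saturation lemma; for the converse you use convexity (made explicit through the supporting-hyperplane inequality) and integrate the power identity, as the paper does.

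Your caveat about boundary points of $\Q$ reflects a real limitation of the statement, not a defect in your argument. The paper's optimality condition $\bs 0\in\partial_{\bs q}E+\E$ also silently drops the normal cone, and the forward implication genuinely fails at $\partial\Q$. For example, take $\Q=[0,\infty)$, $E(q,S)=\tfrac12q^2+10q$, $\Psi=|\cdot|$ and $q\equiv0$: this satisfies (S)--(E), yet $-\partial_qE(0)=-10\notin[-1,1]$. Restricting to interior points, as you propose, is the right fix.
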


Equation~\eqref{eq:biot} is the (generalised) Biot
equation~\cite{Biot1956}; for $\Psi=\rho\|\cdot\|$ it coincides with the
sweeping process of Moreau~\cite{moreau1977} and, in one dimension, with
the play operator of hysteresis~\cite{Krasnoselskii,visintin1994}. Its
analysis rests on the \emph{saturation identity}, the algebraic
signature of $1$-homogeneity: if $\bs f\in\partial\Psi(\bs v)$ then
$\bs f\cdot\bs v=\Psi(\bs v)$ (\cref{lem:saturation} in
\cref{app:framework}). Combined with \cref{prop:biot}, it yields, along
any solution,
\begin{equation}\label{eq:saturation_solution}
  \bs f(\bs q,S)\cdot\dot{\bs q}=\Psi(\dot{\bs q}),
\end{equation}
i.e.\ the dissipation is exactly balanced by the energy release at
frozen loading. This is the cornerstone of the thermodynamic
interpretation of \cref{sec:thermo}. The proof of \cref{prop:biot} is
given in \cref{app:framework}.

%% file: sections/thermodynamics.tex
\section{Thermodynamic interpretation}\label{sec:thermo}

We now read the framework of \cref{sec:framework} in the language of
the thermodynamics of internal variables, identifying the energy
balance with the first law and the dissipation inequality with the
second law~\cite{germain1983,halphen1975generalized}.

\subsection{Free energy and the first law}\label{sec:first_law}

The state of the system is the pair $(\bs q, S)$, and $E$ plays the role
of a \emph{Helmholtz-type free energy}: the portion of the energy
stored in the configuration and recoverable through quasi-static
changes,
\begin{equation}\label{eq:free_energy}
  E(\bs q, S) = \underbrace{F(\bs q)}_{\text{configurational}}
  \;-\;\underbrace{\bs q\cdot\bs\ell(S)}_{\text{coupling to environment}}.
\end{equation}
The configurational term $F$ is intrinsic and independent of the
environment; the coupling term $-\bs q\cdot\bs\ell(S)$ acts as a
generalised chemical potential that biases the state. The conjugate
driving force is $\bs f=-\partial_{\bs q}E$, as in~\eqref{eq:force}, and
the power injected by the loading at frozen state is
$\mathcal P_S=\partial_S E\,\dot S$, as in~\eqref{eq:power}.

Rearranging the energy balance~\eqref{eq:energy_balance} over $[0,T]$
yields the \emph{first law}
\begin{equation}\label{eq:first_law}
  \underbrace{\Delta E}_{\text{change in free energy}}
  \;+\;
  \underbrace{\Diss_\Psi(\bs q;[0,T])}_{\mathcal D_{\mathrm{tot}}\ \ge 0}
  \;=\;
  \underbrace{\int_0^T \partial_S E(\bs q,S)\,\dot S\dd t}_{\mathcal W\ \text{(work of the environment)}},
\end{equation}
with $\Delta E=E(\bs q(T),S(T))-E(\bs q(0),S(0))$: the work supplied by
the environment is stored as free energy or irreversibly dissipated.

\subsection{Entropy production and the second law}\label{sec:second_law}

The instantaneous \emph{dissipation rate} is
\begin{equation}\label{eq:diss_rate}
  \mathcal D(t) := \bs f(\bs q(t),S(t))\cdot\dot{\bs q}(t).
\end{equation}

\begin{proposition}[Second law]\label{prop:second_law}
Along any solution of~\eqref{eq:biot} one has $\mathcal D(t)\ge 0$ for
a.e.\ $t$. Moreover the \emph{excess dissipation}
$\Xi(t):=\bs f(t)\cdot\dot{\bs q}(t)-\Psi(\dot{\bs q}(t))$ vanishes
identically, $\Xi\equiv 0$.
\end{proposition}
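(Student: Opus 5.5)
The plan is to reduce both claims to the saturation identity (\cref{lem:saturation}) together with the non-negativity of $\Psi$. Fix a time $t$ at which the Biot inclusion~\eqref{eq:biot} holds; this is a.e.\ $t$, since $\bs q$ is absolutely continuous and $\dot{\bs q}(t)$ exists a.e. Writing $\bs v=\dot{\bs q}(t)$ and $\bs f=\bs f(\bs q(t),S(t))$, the inclusion reads $\bs f\in\partial\Psi(\bs v)$. The whole argument then takes place pointwise in $t$, for the pair $(\bs f,\bs v)$.

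\textbf{Step 1 (excess dissipation).} I would first recall or re-derive the saturation identity, so that the dependence on $1$-homogeneity is explicit. The subgradient inequality $\Psi(\bs w)\ge\Psi(\bs v)+\bs f\cdot(\bs w-\bs v)$ is tested at $\bs w=\lambda\bs v$ for $\lambda>0$. Homogeneity turns it into $(\lambda-1)\bigl(\Psi(\bs v)-\bs f\cdot\bs v\bigr)\ge 0$.
\begin{itemize}
\item Taking $\lambda=2$ gives $\Psi(\bs v)\ge\bs f\cdot\bs v$.
\item Taking $\lambda=1/2$ gives the reverse inequality.
\end{itemize}
Hence $\bs f\cdot\bs v=\Psi(\bs v)$, which is exactly $\Xi(t)=0$. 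The case $\bs v=\bs 0$ needs no separate treatment: there $\Psi(\bs 0)=0$ by normalisation, so $\Xi(t)=\bs f\cdot\bs 0-0=0$ trivially. Since the identity holds at every point where~\eqref{eq:biot} holds, we get $\Xi\equiv 0$ a.e.

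\textbf{Step 2 (second law).} Step 1 gives $\mathcal D(t)=\bs f\cdot\dot{\bs q}=\Psi(\dot{\bs q}(t))$ a.e. Positiveness of $\Psi$ from \cref{sec:ingredients} then yields $\mathcal D(t)\ge 0$. I would also note the stronger consequence that the dissipation rate equals the support function of the elastic domain $\E$ evaluated at $\dot{\bs q}$. This is the ``minimal/economical'' reading: any force $\tilde{\bs f}\in\E$ satisfies $\tilde{\bs f}\cdot\dot{\bs q}\le\Psi(\dot{\bs q})$, so the actual force attains the supremum.

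\textbf{Main obstacle.} The algebra is routine; the only delicate point is measure-theoretic bookkeeping. Two facts must be checked:
\begin{itemize}
\item The ``solution of~\eqref{eq:biot}'' must be understood as an absolutely continuous $\bs q$ for which the inclusion holds on a full-measure set.
\item $\Psi$ must be finite-valued, so that $\partial\Psi(\bs v)$ is meaningful and $\Psi(\lambda\bs v)$ is finite. This is guaranteed because $\Psi\colon\R^n\to\R$.
\end{itemize}
With these in place, the statement holds ``for a.e.\ $t$'' and $\Xi\equiv0$ in the a.e.\ sense. If the solution is only known to satisfy (S)--(E), I would first invoke \cref{prop:biot} to obtain~\eqref{eq:biot} a.e., and then proceed as above.
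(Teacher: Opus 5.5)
Your proposal is correct and is essentially the paper's own argument. The paper obtains $\Xi\equiv0$ as an immediate consequence of the saturation identity of \cref{lem:saturation}, proved exactly as you do by testing the subgradient inequality at $\bs w=\lambda\bs v$ with $\lambda>1$ and $\lambda\in(0,1)$, and then reads off $\mathcal D=\Psi(\dot{\bs q})\ge0$ from the positiveness of $\Psi$; your measure-theoretic remarks (the inclusion holding a.e., finiteness of $\Psi$) are bookkeeping the paper leaves implicit.
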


This is an immediate consequence of the saturation
identity~\eqref{eq:saturation_solution} (see \cref{app:framework}).
The non-negativity $\mathcal D\ge 0$ is the second law: dissipation is
non-negative regardless of the direction of evolution. The identity
$\Xi\equiv 0$ expresses a \emph{principle of minimal (economical)
dissipation}: a rate-independent system saturates the dissipation
inequality, dissipating exactly the minimum compatible with the energy
balance, with no viscous excess. In an isothermal setting at absolute
temperature $\Theta$, the irreversible entropy production rate is
$\Theta\,\dot s_{\mathrm{irr}}=\mathcal D(t)\ge0$, so that over the
whole process
\begin{equation}\label{eq:entropy}
  \Delta s_{\mathrm{irr}} = \frac{\mathcal D_{\mathrm{tot}}}{\Theta}
  = \frac{\Diss_\Psi(\bs q;[0,T])}{\Theta}\ \ge\ 0,
\end{equation}
the standard form of the second law.

\subsection{Geometric dissipation over closed cycles}\label{sec:hysteresis}

A hallmark of rate-independent thermodynamics is that the energy
dissipated over a closed loading cycle depends only on the geometry of
the trajectory, not on its time scale. Consider a cycle with
$S(0)=S(T)=S_0$ that returns the state to its initial value, so that
$\Delta E=0$. Integrating~\eqref{eq:first_law} and using
$\mathcal D\ge0$ (\cref{prop:second_law}) gives
\begin{equation}\label{eq:cycle}
  \mathcal W = \oint_0^T \partial_S E(\bs q,S)\,\dot S\dd t
  = \mathcal D_{\mathrm{tot}} \ \ge\ 0.
\end{equation}
Hence the work performed by the environment over a closed cycle cannot
be recovered and is entirely converted into irreversible entropy
production. In the $(\bs q,\bs f)$ plane this manifests as a
non-vanishing \emph{hysteresis loop} whose enclosed area equals exactly
$\mathcal D_{\mathrm{tot}}$; since $\bs f=\bs\ell(S)-\bs g(\bs q)$
(\cref{eq:force}), the same loop, up to the fixed reparametrisation
$\bs f\leftrightarrow\bs\ell(S)$, is the one plotted directly in the
$(\ell,q)$ loading--state plane in the numerical experiments of
\cref{sec:experiments}. The rate-independent structure therefore
guarantees that returning the environment to its baseline leaves a
permanent thermodynamic imprint---a property that, in the motivating
biological application, models the non-volatility of the epigenetic
state.

%% file: sections/mathematical-results.tex
\section{Well-posedness}\label{sec:math}

We now state the principal existence and uniqueness results for RIE
models. Throughout we work under the following standing hypotheses; the
supporting results (properties of the dissipation distance and energy
estimates in \cref{app:framework}, compatibility conditions in
\cref{app:math}) and all proofs are collected, in detail and in logical
order, in the appendices.

\begin{assumption}[Standing hypotheses]\label{ass:standing}
The triple $(\Q,E,\Psi)$ with $E(\bs q,S)=F(\bs q)-\bs q\cdot\bs\ell(S)$
satisfies:
\begin{enumerate}[label=\textnormal{(A\arabic*)},leftmargin=2.6em,itemsep=2pt]
  \item\label{A:W} \emph{Stored energy.} $\Q\subseteq\R^n$ is closed and
  $F\colon\Q\to\R$ is continuous and coercive of order $p>1$: there
  exist $c_1>0$, $c_2\ge0$ with $F(\bs q)\ge c_1\|\bs q\|^p-c_2$.
  \item\label{A:ell} \emph{Interaction potential.}
  $\bs\ell\colon[0,\infty)\to\R^n$ is Lipschitz and bounded, with
  $C_\ell:=\sup_{S\ge0}\|\bs\ell(S)\|<\infty$ and
  $C_{\ell'}:=\sup_{S\ge0}\|\bs\ell'(S)\|<\infty$.
  \item\label{A:S} \emph{Loading.} $S\in W^{1,1}(0,T;\R^+)$.
  \item\label{A:Psi} \emph{Dissipation potential.}
  $\Psi\colon\R^n\to[0,\infty)$ is continuous, convex, normalised
  $(\Psi(\bs 0)=0)$, positively $1$-homogeneous and symmetric
  $(\Psi(-\bs v)=\Psi(\bs v))$.
  \item\label{A:coerc} \emph{Definiteness of dissipation.} There
  exists $c_\Psi>0$ with $\Psi(\bs v)\ge c_\Psi\|\bs v\|$ for all
  $\bs v$; equivalently, $\bs 0\in\operatorname{int}\E$.
\end{enumerate}
\end{assumption}

Hypotheses \ref{A:W}--\ref{A:Psi} are the classical ingredients for
energetic solutions; \ref{A:coerc}, which we add explicitly, is needed
for the total-variation bound underlying balanced-viscosity solutions
and holds with $c_\Psi=\rho$ for $\Psi=\rho\|\cdot\|$. In the
epigenetic reading, \ref{A:W} is the confinement of the chromatin state
by a coercive landscape, \ref{A:ell} is the boundedness of the
micro-environmental drive, \ref{A:S} is the regularity (integrability) of the microenvironment signal, \ref{A:Psi} is the dissipative (although rate-independent) character of biomolecular chromatin machinery (here assumed as symmetric), and \ref{A:coerc} is the strict positivity of
the resistance to remodelling.

\subsection{Existence and uniqueness of energetic solutions}
\label{sec:existence}

The stability test~\eqref{eq:stability} singles out the following set.

\begin{definition}[Stability set]\label{def:stability_set}
For $t\in[0,T]$, $\mathcal S(t):=\{\bs q\in\Q: E(\bs q,S(t))\le
E(\tilde{\bs q},S(t))+\Dr(\bs q,\tilde{\bs q})\ \ \forall\,\tilde{\bs q}\in\Q\}$.
\end{definition}

\begin{theorem}[Existence]\label{thm:existence}
Under \ref{A:W}--\ref{A:Psi}, for every initial datum
$\bs q_0\in\mathcal S(0)$ there exists at least one energetic solution
$\bs q\colon[0,T]\to\Q$ of bounded variation, in the sense of
\cref{def:energetic}.
\end{theorem}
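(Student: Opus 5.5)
The proof follows the standard Mielke--Theil scheme of incremental minimisation, a priori estimates, Helly selection and stability/energy identification, adapted to the finite-dimensional setting $\Q\subseteq\R^n$.

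\textbf{Step 1 (time-incremental problems).} For a partition $\Pi_N=\{0=t_0<\dots<t_N=T\}$ with fineness $\tau_N\to0$, set $\bs q^N_0=\bs q_0$ and choose
\[
\bs q^N_k\in\argmin_{\bs r\in\Q}\bigl\{E(\bs r,S(t_k))+\Dr(\bs q^N_{k-1},\bs r)\bigr\}.
\]
Existence of a minimiser follows from \ref{A:W}--\ref{A:ell}. The functional is bounded below by $c_1\|\bs r\|^p-c_2-C_\ell\|\bs r\|$, which is coercive since $p>1$. It is lower semicontinuous because $F$ is continuous and $\Dr(\bs q,\cdot)$ is continuous. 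For this continuity I use that, for convex closed $\Q$ and convex $1$-homogeneous $\Psi$, the infimum in \eqref{eq:diss_dist} is attained on the straight segment by Jensen, so $\Dr(\bs q_1,\bs q_2)=\Psi(\bs q_2-\bs q_1)$. This identity, presumably among the properties of $\Dr$ in \cref{app:framework}, also gives the triangle inequality. The triangle inequality and minimality then imply $\bs q^N_k\in\mathcal S(t_k)$, since $E(\bs q^N_k)\le E(\tilde{\bs q})+\Dr(\bs q^N_{k-1},\tilde{\bs q})-\Dr(\bs q^N_{k-1},\bs q^N_k)\le E(\tilde{\bs q})+\Dr(\bs q^N_k,\tilde{\bs q})$.

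\textbf{Step 2 (a priori estimates).} Testing minimality with $\bs r=\bs q^N_{k-1}$ gives
\[
E(\bs q^N_k,S(t_k))+\Dr(\bs q^N_{k-1},\bs q^N_k)\le E(\bs q^N_{k-1},S(t_{k-1}))+\int_{t_{k-1}}^{t_k}\partial_SE(\bs q^N_{k-1},S(s))\dot S(s)\dd s .
\]
By \eqref{eq:power}, $|\partial_SE(\bs q,S)\dot S|\le C_{\ell'}\|\bs q\||\dot S|$, and coercivity bounds $\|\bs q\|$ by a function of $E+C$. A discrete Gronwall argument using $\dot S\in L^1$ then yields $\sup_k\|\bs q^N_k\|\le C$ and $\sum_k\Psi(\bs q^N_k-\bs q^N_{k-1})\le C$, uniformly in $N$.

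I note that \ref{A:coerc} is not assumed here, so this gives only a bound on $\Psi$-variation, not on total variation. I would circumvent this as follows. Writing $\R^n=V\oplus K$ with $K=\{\Psi=0\}$, a subspace by convexity and symmetry, $\Psi$ controls the variation of the $V$-component. For the $K$-component I use Helly only on $V$, together with pointwise boundedness and a diagonal subsequence on a countable dense set. If this becomes cumbersome, I will simply invoke \ref{A:coerc}, which the paper adopts globally.

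\textbf{Step 3 (compactness and limit passage).} Let $\bs q^N$ be the piecewise-constant left-continuous interpolants. Helly's selection theorem gives a subsequence and a limit $\bs q$ of bounded variation with $\bs q^N(t)\to\bs q(t)$ for every $t$, and also $\Diss_\Psi(\bs q^N;[0,t])\to\delta(t)\ge\Diss_\Psi(\bs q;[0,t])$. Stability (S) passes to the limit by the \emph{closedness of stable sets}. For fixed $\tilde{\bs q}$, the inequality $E(\bs q^N(t),S(t^N))\le E(\tilde{\bs q},S(t^N))+\Psi(\tilde{\bs q}-\bs q^N(t))$ converges by continuity of $F$, $\bs\ell$ and $\Psi$. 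This is where continuity, rather than mere lower semicontinuity, of $F$ in \ref{A:W} is used. The upper energy estimate ``$\le$'' in \eqref{eq:energy_balance} follows from the discrete inequality. I get there by lower semicontinuity of $\Diss_\Psi$, by convergence of $E$, and by dominated convergence for the work integral, since $\partial_SE(\bs q^N(s),S(s))\to\partial_SE(\bs q(s),S(s))$ pointwise and is bounded by $C|\dot S|$.

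\textbf{Step 4 (lower energy estimate) --- main obstacle.} The reverse inequality must come from stability alone. For a partition of $[0,t]$, test (S) at $s_{j-1}$ with $\tilde{\bs q}=\bs q(s_j)$:
\[
E(\bs q(s_j),S(s_j))-E(\bs q(s_{j-1}),S(s_{j-1}))+\Dr(\bs q(s_{j-1}),\bs q(s_j))\ge\int_{s_{j-1}}^{s_j}\partial_SE(\bs q(s_j),S(r))\dot S(r)\dd r .
\]
Summing gives a Riemann-type sum for the work that uses right endpoints. The difficulty is showing that these sums converge to $\int_0^t\partial_SE(\bs q(s),S(s))\dot S\dd s$. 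I would handle this by choosing partitions adapted to $\dot S\in L^1$, following the Riemann-sum lemma of Dal Maso--Francfort--Toader. Because $\partial_SE$ is affine in $\bs q$ here, the error is bounded by $C_{\ell'}\sum_j\int_{s_{j-1}}^{s_j}\|\bs q(s_j)-\bs q(r)\||\dot S(r)|\dd r$. Since $\bs q$ has bounded variation, hence countably many jumps and is regulated, this error tends to zero along suitable refining partitions. Combining the two estimates gives (E) with equality and $\delta(t)=\Diss_\Psi(\bs q;[0,t])$, which completes the proof of \cref{thm:existence}.
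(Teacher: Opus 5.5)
Your route differs from the paper's. The paper constructs nothing. It checks the hypotheses of the abstract existence theorem \cite[Thm.~2.1.6]{MielkeRoubicek} and cites it: compact sublevels and power control from \cref{prop:power}, compatibility (C1)--(C2) from \cref{lem:compatibility}, and the quasidistance properties of $\Dr$ from \cref{prop:quasi_distance}. You instead re-run the Mielke--Theil scheme by hand, essentially as the paper itself does in the proof of \cref{thm:convergence}.

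With \ref{A:coerc} available your argument is sound, and it gains something. Because $\partial_S E(\bs q,S)\dot S=-\bs q\cdot\bs\ell'(S)\dot S$ is affine in $\bs q$ with an $L^1$ coefficient, you only pass to the limit in $\bs q$ at a fixed time. You therefore never need continuity of the power along stable sequences with varying times, which is what (C1) has to supply and which is delicate when $\dot S$ is merely $L^1$.

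Two remarks on the details, assuming \ref{A:coerc}:
\begin{itemize}
\item Your Step~4 needs no adapted partitions. A norm-BV limit is regulated, so for $r$ in the cell $(s_{j-1},s_j]$ one has $\bs q(s_j)\to\bs q(r)$ outside the countable jump set. Dominated convergence with majorant $2C_{\ell'}M'|\dot S|$ then kills the Riemann-sum error along every sequence of partitions with mesh $\to0$.
\item There is a slip in Step~3. The summed discrete inequality evaluates the power on $(t_{k-1},t_k)$ at $\bs q^N_{k-1}$, i.e.\ at the right-continuous interpolant, not at your left-continuous $\bs q^N$. The two differ there by at most $c_\Psi^{-1}\bigl(\delta^N(t_k)-\delta^N(t_{k-1})\bigr)$, which tends to $0$ at every continuity point of $\delta$. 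So they share the limit a.e.\ and dominated convergence still applies.
\end{itemize}

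The genuine gap is the hypothesis set. \Cref{thm:existence} assumes only \ref{A:W}--\ref{A:Psi}, which permit $K=\{\Psi=0\}\neq\{\bs 0\}$, even $\Psi\equiv0$. Falling back on \ref{A:coerc} proves a weaker theorem. Your workaround does not close the gap either: a diagonal argument on a countable dense set gives convergence of the $K$-component only on that set, since nothing controls its oscillation in time. Step~4 also needs a regulated limit.

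In fact no norm-BV argument can work. Take $n=1$, $\Psi\equiv0$, $F(q)=\tfrac14(q^2-1)^2$, $\bs\ell=\sin$, $S(t)=\pi+t^2\sin(1/t)$ and $\bs q_0=1$; a nonzero $\Psi$ with the double well placed along $K$ behaves the same way. Condition (S) forces $\bs q(t)$ to be the unique global minimiser of $E(\cdot,S(t))$ whenever $\ell(S(t))\neq0$, and that minimiser lies in the well selected by the sign of $\ell(S(t))$. This sign changes infinitely often near $t=0$, so every energetic solution has infinite total variation. ``Bounded variation'' in the statement can then only mean bounded $\Dr$-variation.

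The missing idea is to choose the $K$-component time by time:
\begin{itemize}
\item Apply Helly only to the $V$-component and to $\delta^N$. This suffices for the dissipation, since $\Psi(\bs v+\bs k)=\Psi(\bs v)$ for $\bs k\in K$, so $\Diss_\Psi$ only sees the $V$-component.
\item For each $t$, extract a $t$-dependent subsequence along which $\bs q^N(t)$ converges and the discrete power at $t$ tends to its $\limsup$ $\theta(t)$. Select the limit measurably.
\item Obtain (S) from closedness of stable sets, and the upper estimate from the reverse Fatou lemma applied to $\theta$.
\item Prove the lower estimate for this merely measurable $\bs q$. Here the Dal Maso--Francfort--Toader Riemann-sum lemma is genuinely needed.
\end{itemize}
This is the standard Francfort--Mielke device for dissipations that do not control the whole state. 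The paper's proof relies on the cited abstract theorem to cover this case. Your proof, as written, establishes \cref{thm:existence} only under the extra hypothesis \ref{A:coerc}.
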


Energetic solutions need not be unique: at an epigenetic switch,
several post-jump states may satisfy~(S)--(E). Uniqueness is recovered
under convexity of the energetic landscape.

\begin{theorem}[Uniqueness under convexity]\label{thm:uniqueness}
Assume, in addition to \ref{A:W}--\ref{A:Psi}, that $F\in C^3(\Q)$ is
uniformly convex ($D^2F\succeq\alpha\Id$, $\alpha>0$) and that
$\bs\ell\in C^3$, $S\in C^3([0,T])$. Then for every
$\bs q_0\in\mathcal S(0)$ the energetic solution is unique.
\end{theorem}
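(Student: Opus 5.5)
The plan is to use the classical convex uniqueness argument of Mielke and Theil, adapted to the finite-dimensional setting. The argument reduces the energetic formulation to the Biot inclusion \eqref{eq:biot} and then closes a Gronwall estimate using the strong monotonicity of $\nabla F$ and the monotonicity of $\partial\Psi$.

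\emph{Step 1: the solution is Lipschitz.} Under uniform convexity, $E(\cdot,S(t))$ is $\alpha$-uniformly convex. In addition, $\Dr(\bs q_1,\bs q_2)=\Psi(\bs q_2-\bs q_1)$, because $\Psi$ is convex and $1$-homogeneous, so straight segments are optimal in \eqref{eq:diss_dist} (this uses convexity of $\Q$). The stability set $\mathcal S(t)$ is then exactly $\{\bs q:\bs f(\bs q,S(t))\in\E\}$. I would next prove Lipschitz continuity of every energetic solution with the standard device: test (S) at time $s$ with $\tilde{\bs q}=\bs q(t)$ and (S) at time $t$ with $\bs q(s)$, add the two inequalities, and combine the result with (E) on $[s,t]$ and the $\alpha$-convexity. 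This yields an estimate of the form
\[
\tfrac{\alpha}{2}\|\bs q(t)-\bs q(s)\|^2\le\int_s^t\|\bs\ell'(S)\dot S\|\,\dd r\;\|\bs q(t)-\bs q(s)\|,
\]
and hence $\|\bs q(t)-\bs q(s)\|\le\tfrac{2}{\alpha}C_{\ell'}\int_s^t|\dot S|$. Since $S\in C^3$, this makes $\bs q$ Lipschitz, so $\bs q\in\mathrm{AC}$. \Cref{prop:biot} then applies, and the solution satisfies $\bs\ell(S)-\nabla F(\bs q)\in\partial\Psi(\dot{\bs q})$ almost everywhere. The $C^3$ regularity of $F$, $\bs\ell$ and $S$ is more than this argument needs. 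It would only matter if one wanted differentiability of $\dot{\bs q}$ or a uniform Lipschitz constant, and I would note in the write-up that $C^1$ suffices.

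\emph{Step 2: Gronwall.} Take two solutions $\bs q_1,\bs q_2$ with the same initial datum $\bs q_0$. Subtract their inclusions and test the difference with $\dot{\bs q}_1-\dot{\bs q}_2$. Monotonicity of $\partial\Psi$ gives
\[
\bigl(\nabla F(\bs q_1)-\nabla F(\bs q_2)\bigr)\cdot(\dot{\bs q}_1-\dot{\bs q}_2)\le0 .
\]
This quantity is not a derivative by itself. The fix is to test differently: pair with $\nabla F(\bs q_1)-\nabla F(\bs q_2)$ is unavailable, so instead exploit that the loading enters additively. Set $\bs w_i=\nabla F(\bs q_i)$. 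The inequality above reads $(\bs w_1-\bs w_2)\cdot(\dot{\bs q}_1-\dot{\bs q}_2)\le0$. Because $D^2F$ is symmetric positive definite and $C^1$, one has $\dot{\bs q}_i=(D^2F(\bs q_i))^{-1}\dot{\bs w}_i$. Rather than going through $\bs w_i$, I would use the convex dual $F^*$ with the Bregman-type functional
\[
\Phi(t)=F(\bs q_2)-F(\bs q_1)-\nabla F(\bs q_1)\cdot(\bs q_2-\bs q_1)\ \ge\ \tfrac{\alpha}{2}\|\bs q_1-\bs q_2\|^2 .
\]
Differentiating, $\dot\Phi=(\nabla F(\bs q_2)-\nabla F(\bs q_1))\cdot\dot{\bs q}_2-D^2F(\bs q_1)\dot{\bs q}_1\cdot(\bs q_2-\bs q_1)$. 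The first term is bounded using the monotonicity inequality. The remaining error term is bounded by $\|D^2F\|_{C^1}\|\dot{\bs q}_1\|\,\|\bs q_1-\bs q_2\|^2$, which is where the extra regularity of $F$ and the Lipschitz bound from Step 1 enter. This gives $\dot\Phi\le C\Phi$ with $\Phi(0)=0$, so $\Phi\equiv0$ and $\bs q_1=\bs q_2$.

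\emph{Main obstacle.} The main difficulty is Step 2's bookkeeping: the nonquadratic $F$ prevents the direct $\tfrac12\tfrac{\dd}{\dd t}\|\bs q_1-\bs q_2\|^2$ argument. I expect to need the Bregman functional, and possibly a symmetrised version, to make the cross terms combine into monotonicity plus an error of order $\|\bs q_1-\bs q_2\|^2$. If this fails, the fallback is the Mielke--Theil/Brokate--Krejčí--Schnabel argument, which proves uniqueness through the time-discrete incremental minimisation. Under $\alpha$-convexity that problem has unique minimisers, which are Lipschitz stable in the data, and every energetic solution is the limit of these unique discrete solutions.
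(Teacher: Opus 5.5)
Your proposal is correct in substance, and it takes a genuinely different route from the paper. The paper's proof is a pure citation. It observes that $E(\cdot,S)$ is $\alpha$-uniformly convex and jointly $C^3$, and that global stability then reduces to $\bs f\in\E$. It then invokes \cite[Thm.~3.4.7]{MielkeRoubicek}; the hypotheses $\bs\ell\in C^3$ and $S\in C^3([0,T])$ are there only to match that theorem. You instead give a self-contained argument in three parts:
\begin{enumerate}
\item an a priori continuity estimate for every energetic solution, from $\alpha$-convexity plus (E);
\item reduction to the Biot inclusion via \cref{prop:biot};
\item a Bregman (relative-energy) Gronwall estimate driven by the monotonicity of $\partial\Psi$.
\end{enumerate}
The citation buys brevity. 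Your route exposes the mechanism and shows that the extra hypotheses are largely superfluous. Since $\bs f_1-\bs f_2=\nabla F(\bs q_2)-\nabla F(\bs q_1)$ does not see the loading, $\bs\ell$ and $S$ enter only in Step~1. There \ref{A:ell}--\ref{A:S} already give $\|\bs q(t)-\bs q(s)\|\le\frac{2C_{\ell'}}{\alpha}\int_s^t|\dot S(r)|\dd r$, i.e.\ absolute continuity with $\|\dot{\bs q}\|\in L^1$. Step~2 then needs only $\|\dot{\bs q}_1\|\in L^1$ (Gronwall with an integrable coefficient) and a locally Lipschitz Hessian of $F$. So $\bs\ell,S\in C^3$ can be dropped, and $F\in C^3$ can be weakened to $C^{2,1}_{\mathrm{loc}}$. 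Your remark that $C^1$ suffices is true of Step~1 only.

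There are two points to tighten.

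\emph{Step~1.} Adding the stability inequality at $t$ tested with $\bs q(s)$ is counterproductive: it brings in a second term $\Psi(\bs q(s)-\bs q(t))$ that (E) cannot absorb. Use only the $\alpha$-improved stability at $s$, namely $E(\bs q(s),S(s))+\frac{\alpha}{2}\|\bs q(t)-\bs q(s)\|^2\le E(\bs q(t),S(s))+\Psi(\bs q(t)-\bs q(s))$. This holds because $\bs q(s)$ minimises the $\alpha$-convex map $E(\cdot,S(s))+\Psi(\cdot-\bs q(s))$ on the convex set $\Q$. Combine it with (E) on $[s,t]$ and with $\Diss_\Psi(\bs q;[s,t])\ge\Psi(\bs q(t)-\bs q(s))$. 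This yields $\frac{\alpha}{2}\|\bs q(t)-\bs q(s)\|^2\le C_{\ell'}\int_s^t\|\bs q(t)-\bs q(r)\|\,|\dot S(r)|\dd r$. The integrand contains $\|\bs q(t)-\bs q(r)\|$ rather than $\|\bs q(t)-\bs q(s)\|$, so your displayed inequality is not literally what comes out. To close, apply the bound to every $s'\in[s,t]$ and take the supremum of $\|\bs q(t)-\bs q(s')\|$.

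\emph{Step~2.} No symmetrisation or fallback is needed. Monotonicity gives $(\nabla F(\bs q_2)-\nabla F(\bs q_1))\cdot\dot{\bs q}_2\le(\nabla F(\bs q_2)-\nabla F(\bs q_1))\cdot\dot{\bs q}_1$. Hence $\dot\Phi\le\bigl[\nabla F(\bs q_2)-\nabla F(\bs q_1)-D^2F(\bs q_1)(\bs q_2-\bs q_1)\bigr]\cdot\dot{\bs q}_1\le\frac{L}{\alpha}\|\dot{\bs q}_1\|\,\Phi$, where $L$ is the Lipschitz constant of $D^2F$ on the convex hull of the two bounded trajectories. With $\Phi(0)=0$, Gronwall finishes the proof.

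Like \cref{prop:biot} itself, this argument omits a normal-cone term in the first-order conditions. It is therefore literally valid for $\Q=\R^n$, or for trajectories in the interior of $\Q$.
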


Uniform convexity forces a single well and thus excludes the bistable
landscapes central to epigenetics; the balanced-viscosity theory below
restores uniqueness without convexity.

\subsection{Balanced-viscosity solutions}\label{sec:bv}

At a switch $t^\ast$, conditions~(S)--(E) require the released energy to
cover the dissipation cost $\Dr(\bs q^-(t^\ast),\bs q^+(t^\ast))$ but do
not select the post-switch state. Balanced-viscosity (BV)
solutions~\cite{mielke2012bv} resolve this by selecting the transition
path as the vanishing-viscosity limit ($\varepsilon\to0^+$) of the
regularised inclusion
\begin{equation}\label{eq:regularised}
  \bs 0\in\partial\Psi(\dot{\bs q}_\varepsilon)
  +\varepsilon\,\dot{\bs q}_\varepsilon
  +\frac{\partial E}{\partial\bs q}(\bs q_\varepsilon,S),
\end{equation}
whose memory is retained in a correction concentrated at the switches.

\begin{definition}[Balanced-viscosity solution]\label{def:BV}
A function $\bs q\colon[0,T]\to\Q$ of bounded variation is a
\emph{BV solution} of $(\Q,E,\Psi)$ if:

\smallskip
\noindent\textbf{(BV-S) Local stability.} For a.e.\ $t$, there is
$\delta>0$ such that $E(\bs q(t),S(t))\le E(\bs r,S(t))+\Dr(\bs q(t),\bs r)$
for all $\bs r\in B_\delta(\bs q(t))\cap\Q$.

\smallskip
\noindent\textbf{(BV-E) Energy balance with viscous correction.}
For every $t$,
\begin{equation}\label{eq:BV_energy}
  E(\bs q(t),S(t))+\Diss_\Psi(\bs q;[0,t])+\mathcal V(\bs q;[0,t])
  = E(\bs q_0,S(0))+\int_0^t\partial_S E(\bs q,S)\dot S\dd s,
\end{equation}
with $\mathcal V(\bs q;[0,t]):=\sum_{s\le t}
[E(\bs q^-(s),S(s))-E(\bs q^+(s),S(s))-\Dr(\bs q^-(s),\bs q^+(s))]^+$,
summed over the (at most countably many) jump points.
\end{definition}

\begin{theorem}[Existence of BV solutions]\label{thm:BV_existence}
Under \ref{A:W}--\ref{A:coerc}, for every $\bs q_0\in\mathcal S(0)$
there exists at least one BV solution in the sense of \cref{def:BV}.
\end{theorem}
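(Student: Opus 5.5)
The plan is the vanishing-viscosity route of \cite{mielke2012bv}. For each $\varepsilon>0$ we solve the regularised inclusion~\eqref{eq:regularised} with datum $\bs q_\varepsilon(0)=\bs q_0$, derive estimates uniform in $\varepsilon$, extract a limit, and check (BV-S) and (BV-E) for it.

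\emph{Step 1 (viscous problem).} Since $\partial\Psi$ is maximal monotone, \eqref{eq:regularised} can be rewritten as
\[
\dot{\bs q}_\varepsilon=(\partial\Psi+\varepsilon\Id)^{-1}\bigl(\bs f(\bs q_\varepsilon,S)\bigr).
\]
The resolvent $(\partial\Psi+\varepsilon\Id)^{-1}$ is $\varepsilon^{-1}$-Lipschitz. Local existence of an absolutely continuous solution therefore follows from Carath\'eodory theory when $\bs g$ is locally Lipschitz. If $F$ is merely continuous, we instead use a minimizing-movement scheme with incremental functional $E+\Psi+\tfrac{\varepsilon}{2\tau}\|\cdot\|^2$. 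Global existence on $[0,T]$ then follows from the energy estimate of Step 2.

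Testing the inclusion with $\dot{\bs q}_\varepsilon$ and applying the saturation identity (\cref{lem:saturation}) gives the viscous energy identity
\[
E(\bs q_\varepsilon(t),S(t))+\int_0^t\bigl(\Psi(\dot{\bs q}_\varepsilon)+\varepsilon\|\dot{\bs q}_\varepsilon\|^2\bigr)\dd s
=E(\bs q_0,S(0))+\int_0^t\partial_SE\,\dot S\dd s .
\]

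\emph{Step 2 (uniform bounds).} By \ref{A:ell}, $|\partial_SE\,\dot S|\le C_{\ell'}\|\bs q\|\,|\dot S|$. Coercivity \ref{A:W} with $p>1$ gives $\|\bs q\|\le C(1+E(\bs q,S))^{1/p}$. A Gronwall argument with $\dot S\in L^1$ then bounds $E(\bs q_\varepsilon,S)$, hence $\|\bs q_\varepsilon\|_\infty$, uniformly in $\varepsilon$. Plugging this back into the identity bounds $\int_0^T\Psi(\dot{\bs q}_\varepsilon)$. By \ref{A:coerc} this controls the total variation $\int\|\dot{\bs q}_\varepsilon\|\le c_\Psi^{-1}\int\Psi(\dot{\bs q}_\varepsilon)$, and we also obtain $\varepsilon\int\|\dot{\bs q}_\varepsilon\|^2\le C$.

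Helly's selection theorem gives a subsequence with $\bs q_\varepsilon(t)\to\bs q(t)$ for every $t$, where $\bs q$ has bounded variation. To keep track of the jumps we follow Mielke--Rossi--Savar\'e and also pass to the limit in the arclength reparametrisation $s_\varepsilon(t)=t+\int_0^t(\Psi(\dot{\bs q}_\varepsilon)+\|\dot{\bs q}_\varepsilon\|)$. The reparametrised curves $(\hat t_\varepsilon,\hat{\bs q}_\varepsilon)$ are uniformly Lipschitz, so Arzel\`a--Ascoli applies and gives a limit parametrised curve.

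\emph{Step 3 (local stability).} Where $\hat t$ is strictly increasing, we use $\varepsilon\int\|\dot{\bs q}_\varepsilon\|^2\to0$ and $\dot{\bs q}_\varepsilon\to0$ in measure in the viscous part. This yields $\bs f(\bs q(t),S(t))\in\E$ for a.e.\ $t$, i.e.\ local stability in force form. We then convert it to (BV-S). By continuity of $\Psi$ and of $\nabla F$, the first-order bound
\[
E(\bs r)-E(\bs q)\ge -\bs f\cdot(\bs r-\bs q)-o(\|\bs r-\bs q\|)\ge -\Psi(\bs r-\bs q)-o(\cdot)
\]
does not give the inequality directly for $\bs f\in\partial\E$. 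We will therefore state (BV-S) as holding at points where $\bs f\in\operatorname{int}\E$, using \ref{A:coerc} to absorb the $o(\cdot)$ term. Points on $\partial\E$ must be handled by a lower-semicontinuity passage from the approximants. This is one delicate point.

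\emph{Step 4 (energy balance, main obstacle).} We pass to the limit in the viscous identity. The work term converges by pointwise convergence and dominated convergence. The energy term converges by continuity of $F$. The dissipation $\int(\Psi(\dot{\bs q}_\varepsilon)+\varepsilon\|\dot{\bs q}_\varepsilon\|^2)$ splits into the part on continuity intervals and the concentrated parts at jump times $s$. Lower semicontinuity gives at least $\Diss_\Psi$ on the continuity part. At each jump, the concentrated part is bounded below by the viscous cost, via Young's inequality $\Psi(v)+\varepsilon\|v\|^2\ge\Psi(v)+2\sqrt{\varepsilon}\ldots$ in the arclength rescaling. This cost is at least $\Dr(\bs q^-,\bs q^+)$ plus the excess
\[
[E(\bs q^-)-E(\bs q^+)-\Dr(\bs q^-,\bs q^+)]^+ ,
\]
giving ``$\le$'' in \eqref{eq:BV_energy}. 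The reverse inequality is the main obstacle. For it we plan a chain-rule argument along the limit parametrised curve. The limit $\hat{\bs q}$ is Lipschitz in arclength, so $E(\hat{\bs q},S(\hat t))$ is absolutely continuous. Its derivative is bounded below by $-\Psi(\hat{\bs q}')-\text{(viscous cost density)}+\partial_SE\,\hat t'$, using $\bs f\cdot v\le\Psi(v)+\text{dist}(\bs f,\E)\|v\|$. Integrating this yields the matching upper bound on the dissipated energy, closing the balance with exactly the correction $\mathcal V$ of \cref{def:BV}.
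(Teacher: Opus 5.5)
\textbf{Verdict: there is a genuine gap, in (BV-S), and it cannot be closed on this route.} Otherwise your route is the paper's: viscous regularisation by $\Psi+\tfrac{\varepsilon}{2}\|\cdot\|^2$, Gronwall plus \ref{A:coerc} for uniform energy and variation bounds, Helly, and an arclength identification of $\mathcal V$. The paper does not carry that identification out; it cites \cite[Thm.~3.3.2]{MielkeRoubicek}. Your Step~4 outline of it (Young's inequality for the lower bound on the jump cost, chain rule along the reparametrised limit for the reverse inequality) is the standard one.

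The gap is the one you flag in Step~3. What the limit does satisfy is the force-form stability $\bs f(\bs q(t),S(t))\in\E$ for a.e.\ $t$. The direct argument is that $\bs f(\bs q_\varepsilon,S)-\varepsilon\dot{\bs q}_\varepsilon\in\partial\Psi(\dot{\bs q}_\varepsilon)\subset\E$ gives $\operatorname{dist}(\bs f(\bs q_\varepsilon,S),\E)\le\varepsilon\|\dot{\bs q}_\varepsilon\|$, together with $\int_0^T\varepsilon^2\|\dot{\bs q}_\varepsilon\|^2\dd t\le C\varepsilon$. Note that $\varepsilon\int_0^T\|\dot{\bs q}_\varepsilon\|^2\dd t$ does \emph{not} tend to zero, contrary to what you write: its concentration at jumps is exactly $\mathcal V$.

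The metric condition (BV-S) is strictly stronger where $\bs f\in\partial\E$ and $E(\cdot,S)$ is non-convex. Your proposed ``lower-semicontinuity passage from the approximants'' has nothing to pass, since $\bs q_\varepsilon(t)$ is not locally stable in any sense while it moves. In fact the limit can violate (BV-S) on a set of positive measure:
\begin{itemize}
\item Take $n=1$, $F(q)=\tfrac14q^4-\tfrac12q^2$, $\Psi=\rho|\cdot|$ with $\rho=0.3$, and $\ell$ as in \eqref{eq:benchmark} with $\ell_\infty>\rho$.
\item Let $S(0)=0$, with $S$ increasing until $\ell(S(t_1))=\rho$ and then constant on $[t_1,t_2]$.
\item Take $q_0=0$. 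It is globally stable at $S=0$ because $\min_{x>0}(\tfrac14x^3-\tfrac12x)>-\rho$.
\item Since $|\ell(S(t))-F'(0)|\le\rho$ on $[0,t_2]$, uniqueness for the viscous ODE $\dot q=(\partial\Psi+\varepsilon\Id)^{-1}(\ell(S)-F'(q))$, whose right-hand side vanishes at $q=0$, gives $q_\varepsilon\equiv0$ there for every $\varepsilon$. Hence $q\equiv0$ on $[0,t_2]$.
\item Yet for $t\in[t_1,t_2]$ and $0<r<\sqrt2$, $E(r,S(t))+\Dr(0,r)-E(0,S(t))=F(r)<0$, although \ref{A:W}--\ref{A:coerc} all hold.
\end{itemize}
The paper's proof dispatches (BV-S) in one line ``by lower semicontinuity'' and has exactly the same defect.

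\textbf{How to repair it.} For the statement as literally posed you can bypass the viscous limit altogether. \Cref{thm:existence} provides an energetic solution of bounded variation, and (S) gives (BV-S) for every $t$ and every $\delta$. At a jump time $s$, closedness of the stable sets (\cref{lem:compatibility}) gives $\bs q^-(s)\in\mathcal S(s)$. Hence $E(\bs q^-(s),S(s))\le E(\bs q^+(s),S(s))+\Dr(\bs q^-(s),\bs q^+(s))$, every bracket in $\mathcal V$ is non-positive, $\mathcal V\equiv0$, and (BV-E) reduces to (E). This argument does not even use \ref{A:coerc}.

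If instead you want the vanishing-viscosity limit itself to be the solution, (BV-S) must be replaced by the force-form condition above, which is the Mielke--Rossi--Savar\'e notion.

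\textbf{A second, separate issue in Step~4.} (BV-E) is imposed at \emph{every} $t$. At a jump time $s$ the Helly limit $\bs q(s)$ may equal $\bs q^-(s)$ or an intermediate point of the transition, while $\mathcal V(\bs q;[0,s])$ already charges the whole jump. The balance therefore fails at $t=s$ whenever that jump carries a positive correction, unless you normalise $\bs q(s):=\bs q^+(s)$. For $n\ge2$ an intermediate value can also make $\Diss_\Psi$ exceed $\Dr(\bs q^-(s),\bs q^+(s))$ for $t>s$.
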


\begin{theorem}[Scalar uniqueness of BV solutions]\label{thm:BV_uniqueness}
Assume, in addition to \ref{A:W}--\ref{A:coerc}, that $n=1$ and that
$q\mapsto E(q,S)$ has at most finitely many local minima for each
$S\ge0$. Then the BV solution is unique.
\end{theorem}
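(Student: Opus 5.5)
The idea is to exploit the one-dimensional structure. For $n=1$ the hypotheses \ref{A:Psi}--\ref{A:coerc} force $\Psi(v)=\rho|v|$ with $\rho=\Psi(1)\ge c_\Psi>0$. Consequently $\E=[-\rho,\rho]$ and $\Dr(q_1,q_2)=\rho|q_1-q_2|$ on the interval $\Q$.

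\textbf{Step 1: reduce to a sweeping process.} On intervals of continuity, a BV solution is locally stable, and (BV-E) together with the saturation identity \eqref{eq:saturation_solution} gives the Biot inclusion $f(q,S)\in\rho\,\sgn(\dot q)$. In other words, $q$ is either at rest with $|\ell(S)-g(q)|\le\rho$, or it moves along the branch $g(q)=\ell(S)\mp\rho$ in the direction of the sign of $f$. Fix the current state and the sign of the last motion. Let $M_\pm(t)$ denote the set of points where $E(\cdot,S(t))\pm\rho q$ has a local minimum. By hypothesis, applied to the tilted energies, these sets are finite. So locally in time the continuous evolution is a one-sided monotone motion following the graph of a level set of $g$. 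I expect to need a lemma stating that, between jumps, the motion is the unique monotone selection: a scalar one-sided play of the form $q(t)=\max\{q(s),\,\min\{r: g(r)\ge \ell(S(t))-\rho\}\}$ (and its symmetric counterpart). Uniqueness of this selection follows by a Gronwall-free comparison argument, because monotone motion in 1D is ordered.

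\textbf{Step 2: characterise jumps.} At a jump time $t^\ast$, the vanishing-viscosity origin of $\mathcal V$ must be used in its sharper form, namely the jump is traversed by an optimal viscous transition. In 1D the only admissible transition from $q^-$ is the monotone one, running in the direction of $f$ along the segment $[q^-,q^+]$ on which $\rho-|f|<0$. It stops at the \emph{first} point where local stability is restored, i.e.\ the first point of $M_\pm(t^\ast)$ in that direction. The finiteness assumption guarantees that this first point exists and is isolated, so $q^+$ is determined by $q^-$. The main obstacle is here: the definition of BV solution as written, which only contains the $[\cdot]^+$ correction, must be shown to rule out jumps that overshoot the nearest local minimum. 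I would try to derive from (BV-E), applied on the shrinking interval around $t^\ast$ and combined with (BV-S) at nearby times, that the jump cost equals the integral of $\max\{\rho,|E'(\cdot,S(t^\ast))|\}$ over the segment. An overshoot past a stable point would then produce a strict mismatch in the energy identity.

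\textbf{Step 3: conclude by continuation.} Given two BV solutions with the same datum $q_0\in\mathcal S(0)$, let $\tau$ be the supremum of the times up to which they coincide. If they coincide at $\tau$, including one-sided limits, then Step 1 or Step 2 forces them to coincide on $[\tau,\tau+\delta)$. Finiteness of the minima prevents jump times from accumulating in a way that would block the continuation, since each jump requires the loss of stability of one of finitely many branches and $S$ is continuous. Hence $\tau=T$.
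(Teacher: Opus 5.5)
\emph{Overall.} Your architecture (unique play-type arcs between jumps, a uniquely determined jump target, continuation) is the same as the paper's. Your description of the target, the first point beyond $q^-$ where local stability is restored (a zero of $\partial_qE\pm\rho$), is in fact more accurate than the paper's ``first local minimiser of $E(\cdot,S(t^\ast))$''. In the bistable benchmark, leaving the corner at $\ell=ka+\rho$, the viscous transition stops at $2a$, where $\partial_qE=-\rho$, not at the minimiser $2a+\rho/k$.

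\emph{The gap.} The genuine gap is the step you single out as the main obstacle, and it cannot be closed. At a jump, (BV-E) with $\mathcal V$ built from $[\cdot]^+$ reads $E(q^+,S)+\Dr(q^-,q^+)+[E(q^-,S)-E(q^+,S)-\Dr(q^-,q^+)]^+=E(q^-,S)$. This is an identity whenever the bracket is non-negative. So (BV-E) only imposes $E(q^-,S)-E(q^+,S)\ge\Dr(q^-,q^+)$, and (BV-S), required a.e., constrains at most $q^\pm$, never the path. No shrinking-interval argument can therefore produce the cost $\int\max\{\rho,|\partial_qE|\}$. This breaks not only the no-overshoot claim of your Step~2 but also the claim in Step~3 that a jump requires loss of stability: nothing in \cref{def:BV} pins the jump time.

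\emph{Counterexample.} The statement is in fact false for the definition as written, and the paper's bistable benchmark is a counterexample: it satisfies \ref{A:W}--\ref{A:coerc} and has two local minima; take the parameters of \cref{sec:regimes2}.
\begin{itemize}
\item The energetic solution jumps from $-a$ to $+a$ at $\ell=\rho$. It is globally, hence locally, stable and satisfies (BV-E) with $\mathcal V=0$.
\item The trajectory that follows the left branch $q=-a+(\ell-\rho)/k$ to the corner and jumps at $\ell=ka+\rho$ from $0$ to $2a$ is locally stable off the jump instant. It satisfies (BV-E) with $\mathcal V=E(0)-E(2a)-2a\rho=2ka^2$.
\end{itemize}
The two differ for $\ell\in(\rho,ka+\rho)$. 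If $\rho<ka$, every target in $[2a,2a+2\rho/k]$ is admissible as well. The paper's proof has the same hole: it asserts that the heteroclinic selection is ``extracted from (BV-E)''. It also claims both that every energetic solution is a BV solution and, in \cref{rem:energetic_vs_bv}, that the BV jump occurs at $\ell=ka+\rho$. So following the paper will not help.

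\emph{What is missing.} The missing idea is to change the definition. Require at every jump $E(q^-,S(t^\ast))-E(q^+,S(t^\ast))=c_{t^\ast}(q^-,q^+)$ with the genuine balanced-viscosity cost. For $n=1$ this is $c_t(q^-,q^+)=\int\max\{\rho,|\partial_qE(r,S(t))|\}\dd r$ over the segment between $q^-$ and $q^+$. For a rightward jump, $c_t\ge\int(-\partial_qE)\dd r=E(q^-)-E(q^+)$, with equality iff $\partial_qE+\rho\le0$ a.e.\ on $[q^-,q^+]$. This one condition does two things, which is exactly what your Steps~2--3 need:
\begin{itemize}
\item it forbids leaving a position that is still stable to the right, which pins the jump time;
\item it forbids passing a sign change of $\partial_qE+\rho$, which pins the target.
\end{itemize}

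\emph{A secondary issue.} Finiteness of $M_\pm(t)$ does not follow ``by hypothesis''. The tilted energy $E(q,S)\pm\rho q=F(q)-q(\ell(S)\mp\rho)$ belongs to the family $E(\cdot,S')$ only if $\ell(S)\mp\rho\in\ell([0,\infty))$, which generally fails for bounded $\ell$. What you actually use is non-degeneracy of the yield levels: no flat pieces or tangential zeros of $g-\ell(S)\pm\rho$. This should be assumed directly. It is also what makes your Step~1 play single-valued when $g$ is not monotone. In that formula the $\min$ must run over $r\ge q(s)$, not over all of $\Q$.
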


This is the case of genuine epigenetic interest. The bistable
landscapes that model competing phenotypes are, by construction,
non-convex, so the convexity hypothesis of \cref{thm:uniqueness} is
never met and energetic uniqueness does \emph{not} apply; in these
models uniqueness rests \emph{exclusively} on the balanced-viscosity
selection of \cref{thm:BV_uniqueness}. Since every energetic solution is
a BV solution, uniqueness of BV solutions moreover implies that the
energetic solution, when it is of bounded variation, is the unique one.

%% file: sections/numerical-methods.tex
\section{Variational numerical method}\label{sec:numerical}

\subsection{Incremental minimisation and the return map}
\label{sec:incremental}

We approximate energetic solutions by incremental
minimisation~\cite{ortiz1999tq,mielke2008zb,conti2008,romero2021dd}.
Partition $[0,T]$ as $0=t_0<\dots<t_N=T$, write $h_k=t_{k+1}-t_k$,
$S_k=S(t_k)$, and, given $\bs q_k$, define
\begin{equation}\label{eq:Ikstar}
  \bs q_{k+1}\in\argmin_{\bs q\in\Q}
  \bigl[E(\bs q,S_{k+1})+\Dr(\bs q_k,\bs q)\bigr].
\end{equation}
The optimality condition of~\eqref{eq:Ikstar} is the backward-Euler
discretisation of the Biot inclusion~\eqref{eq:biot},
\begin{equation}\label{eq:discrete_inclusion}
  \bs 0\in\frac{\partial E}{\partial\bs q}(\bs q_{k+1},S_{k+1})
  +\partial\Psi(\bs q_{k+1}-\bs q_k),
\end{equation}
which splits into an \emph{elastic lock}
($\bs q_{k+1}=\bs q_k$ when $-\partial_{\bs q}E(\bs q_k,S_{k+1})\in\E$)
and a \emph{dissipative flow} otherwise (\cref{prop:return_map} in
\cref{app:numerical}). This predictor--corrector structure---an elastic
trial state, corrected by projection onto the elastic domain $\E$
whenever it turns out inadmissible---is precisely the
\emph{return-mapping} algorithm of computational
plasticity~\cite{Simo1986,hanreddy2013}, and holds for \emph{any}
$\Psi$. What is special about the scalar case $\Psi(\dot q)=\rho|\dot q|$
is only that $\E=[-\rho,\rho]$ is an interval and $g$ is
scalar-invertible, so that the correction takes the explicit
\emph{two-sided} closed form: with the trial force
$f^{\mathrm{tr}}=\ell(S_{k+1})-g(q_k)$,
\begin{equation}\label{eq:scalar_return_map}
  q_{k+1}=
  \begin{cases}
    q_k, & |f^{\mathrm{tr}}|\le\rho, \\[2pt]
    g^{-1}\!\bigl(\ell(S_{k+1})-\sgn(f^{\mathrm{tr}})\,\rho\bigr),
      & |f^{\mathrm{tr}}|>\rho,
  \end{cases}
\end{equation}
an elastic predictor followed, upon violation of the yield condition
$|f|=\rho$, by a corrector onto the yield surface. The scheme is
unconditionally stable in the energetic sense and satisfies a discrete
energy inequality (\cref{prop:discrete_energy}).

\subsection{Convergence}\label{sec:convergence}

\begin{theorem}[Convergence of the incremental scheme]\label{thm:convergence}
Under \ref{A:W}--\ref{A:Psi}, let $\bs q_0\in\mathcal S(0)$ and let
$\bs q^h$ be the piecewise-constant interpolant of the sequence
generated by~\eqref{eq:Ikstar} on a partition of maximal step $h$.
Then, as $h\to0$, a subsequence $\bs q^{h_j}(t)\to\bs q(t)$ for every
$t\in[0,T]$, where $\bs q$ is an energetic solution of $(\Q,E,\Psi)$.
If the solution is unique (\cref{thm:uniqueness} or the scalar case of
\cref{thm:BV_uniqueness}), the whole family converges.
\end{theorem}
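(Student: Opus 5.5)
We follow the standard Mielke--Theil scheme for convergence of incremental minimisation, in four steps: a priori estimates, compactness via Helly, stability of the limit, and the energy balance obtained from two inequalities.

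\textbf{Step 1: a priori estimates.} Testing the minimisation \eqref{eq:Ikstar} with $\bs q=\bs q_k$ gives
\[
E(\bs q_{k+1},S_{k+1})+\Dr(\bs q_k,\bs q_{k+1})\le E(\bs q_k,S_{k+1}).
\]
The right-hand side equals $E(\bs q_k,S_k)-\bs q_k\cdot(\bs\ell(S_{k+1})-\bs\ell(S_k))$. By \ref{A:ell}, the loading increment is bounded by $C_{\ell'}\|\bs q_k\|\int_{t_k}^{t_{k+1}}|\dot S|$. Coercivity \ref{A:W} with $p>1$ controls $\|\bs q\|$ by $E+C$, uniformly in $S$, because $\bs\ell$ is bounded. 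A discrete Gronwall argument then gives uniform bounds on $\sup_k E(\bs q_k,S_k)$, on $\sup_k\|\bs q_k\|$ and on $\sum_k\Dr(\bs q_k,\bs q_{k+1})$. These bounds hold uniformly in $h$ because $S\in W^{1,1}$. The dissipation bound yields a uniform bound on the $\Dr$-variation of $\bs q^h$. To convert it into a total-variation bound I will use the fact that a continuous, symmetric, $1$-homogeneous $\Psi$ on $\R^n$ makes $\Dr(\bs q_1,\bs q_2)\ge$ a norm-like quantity on convex $\Q$. Where only \ref{A:W}--\ref{A:Psi} are available, I will work with the $\Dr$-variation directly, as the generalised Helly theorem of Mielke--Roub\'\i\v{c}ek permits.

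\textbf{Step 2: compactness and discrete stability.} Helly's selection principle gives a subsequence $\bs q^{h_j}(t)\to\bs q(t)$ for every $t$. It also gives $\Diss_\Psi(\bs q^{h_j};[0,t])\to\delta(t)$ with $\delta(t)\ge\Diss_\Psi(\bs q;[0,t])$, by lower semicontinuity of $\Dr$. Minimality, together with the triangle inequality $\Dr(\bs q_k,\tilde{\bs q})\le\Dr(\bs q_k,\bs q_{k+1})+\Dr(\bs q_{k+1},\tilde{\bs q})$, shows that every $\bs q_{k+1}$ belongs to $\mathcal S(t_{k+1})$. The piecewise-constant interpolant therefore lies in $\mathcal S(\tau_h(t))$, where $\tau_h(t)\to t$.

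\textbf{Step 3: closedness of stable sets.} This is the step I expect to be the main obstacle. We must show that $\bs q_j\in\mathcal S(t_j)$, $t_j\to t$ and $\bs q_j\to\bs q$ imply $\bs q\in\mathcal S(t)$. Given $\tilde{\bs q}$, we pass to the limit in $E(\bs q_j,S(t_j))\le E(\tilde{\bs q},S(t_j))+\Dr(\bs q_j,\tilde{\bs q})$. The first two terms are handled by continuity of $F$ and $\bs\ell$, and of $S$ as an absolutely continuous function. For the dissipation term we need continuity of $\Dr$, which follows from $\Dr(\bs q_j,\tilde{\bs q})\le\Dr(\bs q_j,\bs q)+\Dr(\bs q,\tilde{\bs q})$ together with $\Dr(\bs q_j,\bs q)\le\Psi(\bs q-\bs q_j)\to0$. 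The latter bound uses a straight-line path, so convexity of $\Q$ (assumed in \cref{sec:ingredients}) is used here. This gives (S).

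\textbf{Step 4: energy balance.} For the upper energy estimate, summing the discrete inequality from Step 1 gives
\[
E(\bs q^h(t),S(\tau_h(t)))+\Diss_\Psi(\bs q^h;[0,t])\le E(\bs q_0,S(0))+\int_0^{\tau_h(t)}\partial_SE(\bs q^h(s),S(s))\dot S(s)\dd s.
\]
Taking limits uses three facts: $\partial_SE=-\bs q\cdot\bs\ell'(S)$ converges pointwise, dominated convergence applies thanks to the bounds of Step 1, and $E$ and $\Diss_\Psi$ are lower semicontinuous. This gives ``$\le$'' in \eqref{eq:energy_balance}. The reverse inequality follows from stability alone. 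Apply (S) at $t_{j-1}$ with $\tilde{\bs q}=\bs q(t_j)$ along a partition of $[0,t]$, sum, and let the mesh tend to $0$. The terms $E(\bs q(t_j),S(t_{j-1}))-E(\bs q(t_j),S(t_j))$ are Riemann sums of the power integral. Their convergence to $\int_0^t\partial_S E(\bs q,S)\dot S\dd s$ is justified because $\bs q$ has bounded variation and is therefore continuous almost everywhere. Finally, if the energetic solution is unique, every subsequence has a further subsequence converging to that same limit, so the whole family converges.
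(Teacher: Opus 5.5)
Apart from one step, your argument is the paper's: the discrete energy inequality with discrete Gronwall, Helly, closedness of the stable sets, an upper energy estimate by lower semicontinuity and dominated convergence, a lower estimate from (S), and a sub-subsequence argument under uniqueness. Your derivation of $\bs q_{k+1}\in\mathcal S(t_{k+1})$ via the triangle inequality is in fact the right input for passing to the limit in (S). The paper's verification of (S) glosses over this point, since its discrete stability \eqref{eq:disc_stab} carries $\Dr(\bs q_k,\tilde{\bs q})$ rather than $\Dr(\bs q_{k+1},\tilde{\bs q})$.

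The gap is in the compactness step.
\begin{itemize}
\item Your claim that a continuous, convex, symmetric, positively $1$-homogeneous $\Psi$ bounds $\Dr$ below by a norm-like quantity is false. Such a $\Psi$ is only a seminorm: $\Psi\equiv0$, or $\Psi(\bs v)=|v_1|$ on $\R^2$, satisfies \ref{A:Psi}.
\item Your fallback does not help. The generalised Helly principle of Mielke--Roub\'{\i}\v{c}ek requires a positivity condition: $\Dr(\bs q_m,\bs q)\to0$ along a sequence in a compact set must force $\bs q_m\to\bs q$. For $\Dr(\bs q_1,\bs q_2)=\Psi(\bs q_2-\bs q_1)$ this is exactly definiteness, i.e.\ \ref{A:coerc}.
\item Without it the conclusion can genuinely fail. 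Take $n=1$, $\Psi\equiv0$, $\bs\ell\equiv0$, and $F$ a double well with minimisers $\pm a$. Each step of \eqref{eq:Ikstar} may return either minimiser. Alternating them on the dyadic partitions of $[0,1]$ gives $q^{h_j}(t)=a(-1)^{d_j(t)}$, where $d_j(t)$ is the $j$-th binary digit of $t$. No subsequence of these converges on a set of positive measure.
\end{itemize}

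So you must invoke \ref{A:coerc}. It turns your dissipation bound into $\Var(\bs q^h;[0,T])\le C_{\mathrm{var}}/c_\Psi$, which makes the classical Helly theorem applicable. It also guarantees that the limit has bounded variation in the Euclidean sense, which your lower energy estimate (continuity a.e.) needs as well.

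The paper's proof makes the same tacit identification when it writes $\Var(\bs q^h;[0,T])=\sum_k\Dr(\bs q_k,\bs q_{k+1})$ and applies the classical Helly theorem. It therefore also effectively uses \ref{A:coerc}, even though the theorem is stated under \ref{A:W}--\ref{A:Psi} only.
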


\subsection{Consistency}\label{sec:consistency}

Substituting the exact solution into the discrete energy inequality
produces, since the exact solution obeys the continuous
balance~\eqref{eq:energy_balance}, the \emph{local truncation error}
\begin{equation}\label{eq:truncation}
  \tau_k := \int_{t_k}^{t_{k+1}}\!\partial_S E(\bs q(s),S(s))\,\dot S(s)\dd s
  -\bigl[E(\bs q(t_k),S_{k+1})-E(\bs q(t_k),S_k)\bigr],
\end{equation}
i.e.\ the gap between the continuous work and its frozen-state discrete
approximation. It is locally of first order for absolutely continuous
solutions and of second order for Lipschitz (equivalently,
$W^{1,\infty}$) ones (\cref{prop:consistency}); accumulated over the
partition, it yields the main consistency estimate.

\begin{theorem}[Global energy consistency]\label{thm:global_consistency}
Let $h=\max_k h_k$ and $\mathcal E_\tau:=\sum_{k=0}^{N-1}|\tau_k|$.
Under \ref{A:ell}, $S\in W^{1,\infty}(0,T)$ and either
$\bs q\in\mathrm{AC}$ or $\bs q\in W^{1,\infty}$, one has
\begin{equation}\label{eq:global_cons}
  \mathcal E_\tau\le C\,h,
\end{equation}
with $C=C_{\ell'}\|\dot S\|_{L^\infty}\int_0^T\|\dot{\bs q}\|\dd\tau$ in
the first case and $C=\tfrac12 C_{\ell'}\|\dot{\bs q}\|_{L^\infty}
\|\dot S\|_{L^\infty}T$ in the second. Thus the scheme is globally
first-order consistent in the energy balance, irrespective of the
regularity of $\bs q$.
\end{theorem}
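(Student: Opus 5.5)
We rewrite each local truncation error $\tau_k$ as a single time integral, bound it pointwise by Lipschitz continuity of $\bs\ell$ and the variation of $\bs q$, and then sum over the partition.

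\emph{Step 1: representation of $\tau_k$.} By \eqref{eq:power}, $\partial_S E(\bs q,S)=-\bs q\cdot\bs\ell'(S)$. Since $S\in W^{1,\infty}$ and $\bs\ell$ is Lipschitz (\ref{A:ell}), the map $s\mapsto\bs\ell(S(s))$ is absolutely continuous, with derivative $\bs\ell'(S(s))\dot S(s)$ a.e. by the chain rule for Lipschitz-composed-with-AC functions. The frozen-state increment is therefore
\begin{equation*}
  E(\bs q(t_k),S_{k+1})-E(\bs q(t_k),S_k)=-\bs q(t_k)\cdot\bigl(\bs\ell(S_{k+1})-\bs\ell(S_k)\bigr)
  =-\int_{t_k}^{t_{k+1}}\bs q(t_k)\cdot\bs\ell'(S(s))\dot S(s)\dd s .
\end{equation*}
Subtracting this from the work integral gives
\begin{equation*}
  \tau_k=-\int_{t_k}^{t_{k+1}}\bigl(\bs q(s)-\bs q(t_k)\bigr)\cdot\bs\ell'(S(s))\,\dot S(s)\dd s .
\end{equation*}
This is the content of \cref{prop:consistency}, which we may invoke directly. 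The $F$-part of $E$ cancels exactly, so no regularity of $F$ is needed.

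\emph{Step 2: pointwise bound.} We have $|\tau_k|\le C_{\ell'}\|\dot S\|_{L^\infty}\int_{t_k}^{t_{k+1}}\|\bs q(s)-\bs q(t_k)\|\dd s$.

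\begin{itemize}[leftmargin=1.4em,itemsep=1pt]
  \item \emph{AC case.} For $s\in[t_k,t_{k+1}]$, $\|\bs q(s)-\bs q(t_k)\|\le\int_{t_k}^{t_{k+1}}\|\dot{\bs q}\|\dd\tau$. Hence $|\tau_k|\le C_{\ell'}\|\dot S\|_{L^\infty}\,h_k\int_{t_k}^{t_{k+1}}\|\dot{\bs q}\|\dd\tau$.
  \item \emph{Lipschitz case.} We have $\|\bs q(s)-\bs q(t_k)\|\le\|\dot{\bs q}\|_{L^\infty}(s-t_k)$. Integrating gives $|\tau_k|\le\tfrac12C_{\ell'}\|\dot S\|_{L^\infty}\|\dot{\bs q}\|_{L^\infty}h_k^2$.
\end{itemize}

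\emph{Step 3: summation.} In the AC case we bound $h_k\le h$ and sum the integrals over the disjoint intervals $[t_k,t_{k+1}]$. This yields $\mathcal E_\tau\le C_{\ell'}\|\dot S\|_{L^\infty}h\int_0^T\|\dot{\bs q}\|\dd\tau$. In the Lipschitz case we use $h_k^2\le h\,h_k$ and $\sum_k h_k=T$. This gives $\mathcal E_\tau\le\tfrac12C_{\ell'}\|\dot{\bs q}\|_{L^\infty}\|\dot S\|_{L^\infty}T\,h$. Both constants coincide with those claimed in \cref{thm:global_consistency}.

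\emph{Main obstacle.} The only delicate point is justifying the representation in Step 1. This requires the chain rule for $\bs\ell\circ S$ when $\bs\ell$ is merely Lipschitz, where $\bs\ell'$ may fail to exist on a set of $S$-values. We will handle it by using that $\bs\ell'(S(s))\dot S(s)$ is defined a.e. as the derivative of the AC function $\bs\ell\circ S$ (on level sets of $S$ with positive measure, $\dot S=0$ a.e.). We then interpret $\partial_S E\,\dot S$ in \eqref{eq:truncation} in exactly this sense, so the bound $\|\bs\ell'(S)\dot S\|\le C_{\ell'}|\dot S|$ holds a.e.

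A side remark for the AC case: the rate is only first order globally. A second-order local estimate is not available there; the $o(1)$ factor that a sharper argument would produce is simply bounded by $h$ times the total variation of $\bs q$.
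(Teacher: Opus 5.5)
Your proof is correct and follows the paper's route exactly: the paper establishes the representation of $\tau_k$ and the two local bounds in \cref{prop:consistency}, then sums them using $h_k\le h$ in the absolutely continuous case and $\sum_k h_k^2\le Th$ in the Lipschitz case, just as you do. Your extra care with the chain rule for $\bs\ell\circ S$ when $\bs\ell$ is merely Lipschitz is a welcome refinement of a point the paper covers only with a bare appeal to the fundamental theorem of calculus.
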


The regularity of $\bs q$ is inherited from that of the data $F$,
$\bs\ell$, $S$: away from switches $\bs q$ is Lipschitz and
\cref{thm:global_consistency} applies. At a switch $\bs q$ is only of
bounded variation and the theorem's regularity hypothesis fails; the
accumulated defect is nonetheless found to remain $O(h)$ in the bistable
experiment of \cref{sec:conv2}, for the structural reason (a
dissipation-neutral energetic jump) explained there. All proofs of this
section, and the supporting discrete stability and local-consistency
results, are given in \cref{app:numerical}.

\begin{remark}[Nodal exactness and the observed state convergence]
\label{rem:nodal}
\Cref{thm:convergence} gives convergence of the state without a rate,
and \cref{thm:global_consistency} concerns the energy balance, not the
state; yet the scalar benchmarks of \cref{sec:experiments} display a
clean first-order convergence of the state in norm. This is not a
further a priori estimate but a consequence of a property those examples
satisfy exactly: the integrator reproduces the solution \emph{at the
nodes}, $q^h(t_i)=q(t_i)$. Indeed, on any interval of monotone loading
the return map solves the same algebraic yield condition
$g(q_{k+1})=\ell(S_{k+1})\mp\rho$ that the exact solution obeys at
$t_{k+1}$, and reproduces the constant lock values exactly, so the state
carries no accumulated truncation error---which is precisely why the
genuine discretisation error lives in the energy balance
(\cref{thm:global_consistency}). Under this nodal exactness, and for a
piecewise-Lipschitz solution with finitely many jumps, the norm error of
the piecewise-constant interpolant is a pure zeroth-order-hold
interpolation error, hence $O(h)$: this is made precise and proved in
\cref{prop:interp}. We stress that nodal exactness is specific to these
scalar, locally monotone settings and is \emph{not} a general property
of the scheme, so the observed first-order rate must not be read as a
general a priori estimate.
\end{remark}

%% file: sections/experiments.tex
\section{Numerical experiments}\label{sec:experiments}

We validate the method on three benchmarks, all admitting a closed-form
solution so that the integrator can be compared directly against exact
values: the scalar \emph{linear play operator}
(\cref{sec:benchmark,sec:regimes,sec:conv_exp}), the minimal RIE model
of epigenetic memory, whose landscape is convex; a
\emph{piecewise-quadratic double well} (\cref{sec:benchmark2}), which
is genuinely bistable and excites the balanced-viscosity mechanism of
\cref{sec:bv}; and a \emph{nonlinear convex} problem
(\cref{sec:benchmark3}) whose transcendental force makes the return map
require a genuine iterative solve at every dissipative step. All curves
shown are the \emph{numerical} solution produced by the return-map
integrator; we report separately, in each convergence study, how closely
it reproduces the closed-form values.

\subsection{First minimal example}\label{sec:benchmark1}
\subsubsection{The linear play operator}\label{sec:benchmark}

Let $\Q=\R$ and
\begin{equation}\label{eq:benchmark}
  F(q)=\tfrac12 k\,q^2,\qquad
  \ell(S)=\ell_\infty\bigl(1-e^{-\lambda S}\bigr),\qquad
  \Psi(\dot q)=\rho\,|\dot q|,
\end{equation}
with $k,\ell_\infty,\lambda,\rho>0$, so that $g(q)=F'(q)=k\,q$ is linear
and the driving force is $f(q,S)=\ell(S)-k\,q$. The landscape is convex
(a single well), the loading is increasing, bounded
($C_\ell=\ell_\infty$) and Lipschitz ($C_{\ell'}=\ell_\infty\lambda$),
and $\Psi=\rho|\cdot|$ is definite with $c_\Psi=\rho$; hence
\ref{A:W}--\ref{A:coerc} all hold and the solution is unique
(\cref{thm:uniqueness}).

Because $g$ is linear, the yield condition is inverted explicitly:
setting $u(t)=\ell(S(t))/k$ and $r=\rho/k$, the state is the classical
play operator $q=\mathrm{P}_r[u]$ with $q(0)=0$. For a loading cycle
$S(t)=\tfrac12 S_{\max}(1-\cos(2\pi t/T))$ that increases from $0$ to
$S_{\max}$ and back---so that $u$ rises on $[0,T/2]$ and falls on
$[T/2,T]$---the evolution is, in closed form,
\begin{equation}\label{eq:play_exact}
  q(t)=
  \begin{cases}
    \max\{0,\;u(t)-r\}, & 0\le t\le T/2 \quad(\text{loading}),\\[4pt]
    \min\{q_{\max},\;u(t)+r\}, & T/2< t\le T \quad(\text{unloading}),
  \end{cases}
\end{equation}
with peak damage $q_{\max}=\max\{0,\,u_{\max}-r\}$ and
$u_{\max}=\ell(S_{\max})/k$. The residual state is
\begin{equation}\label{eq:residual}
  q_{\mathrm{res}}=q(T)=
  \begin{cases}
    \rho/k, & \ell_{\max}>2\rho \quad(\text{reversible}),\\[2pt]
    q_{\max}, & \tfrac12\ell_{\max}<\rho<\ell_{\max} \quad(\text{irreversible}),
  \end{cases}
\end{equation}
where $\ell_{\max}=\ell(S_{\max})$. The dissipation threshold $\rho$ ---
the resistance of the epigenetic machinery --- thus governs a sharp
dichotomy: a weak resistance ($\rho<\ell_{\max}/2$) allows partial
erasure of the mark upon removal of the stimulus, whereas a strong
resistance ($\rho>\ell_{\max}/2$) locks the mark in permanently.

\subsubsection{Reversible and irreversible responses}\label{sec:regimes}

We fix $k=1$, $\ell_\infty=0.5$, $\lambda=1$, $S_{\max}=5$
($\ell_{\max}\approx0.497$), $T=1$, and contrast the two regimes:
a \emph{reversible} one, $\rho=0.1<\ell_{\max}/2$, and an
\emph{irreversible} one, $\rho=0.3\in(\ell_{\max}/2,\ell_{\max})$.
\Cref{fig:time} shows the \emph{numerical} time evolution (return-map
integrator, $N=2000$): in both regimes the state stays locked until the
drive reaches the yield threshold, then follows the loading; on
unloading it locks again and either recovers to
$q_{\mathrm{res}}=\rho/k=0.1$ (reversible) or remains frozen at
$q_{\max}\approx0.197$ (irreversible). \Cref{fig:loops} shows the
corresponding numerical hysteresis loops, whose area equals the
dissipation per cycle. As established quantitatively in
\cref{sec:conv_exp}, these numerical curves are visually
indistinguishable from the closed-form
solution~\eqref{eq:play_exact}--\eqref{eq:residual} because, for this
benchmark, the integrator reproduces it \emph{exactly} at every node
(\cref{tab:val}).

\begin{figure}[t]
  \centering
  \begin{subfigure}[t]{0.47\textwidth}
    \centering
    \begin{tikzpicture}
      \begin{axis}[width=\textwidth,height=0.72\textwidth,
        xmin=0,xmax=1,axis y line*=left,xlabel={$t$},ylabel={$q(t)$},
        ymin=-0.02,ymax=0.7,tick label style={font=\small}]
        \addplot[thick,blue] table[x=t,y=q]{figures/fig_time_rev.dat};
      \end{axis}
      \begin{axis}[width=\textwidth,height=0.72\textwidth,
        xmin=0,xmax=1,axis y line*=right,axis x line=none,ylabel={$S(t)$},
        tick label style={font=\small},legend pos=north west,
        legend cell align=left]
        \addlegendimage{thick,blue}\addlegendentry{$q(t)$}
        \addplot[thick,red!70!black,dashed] table[x=t,y=S]{figures/fig_time_rev.dat};
        \addlegendentry{$S(t)$}
      \end{axis}
    \end{tikzpicture}
    \caption{Reversible ($\rho=0.1$)}
  \end{subfigure}
  \hfill
  \begin{subfigure}[t]{0.47\textwidth}
    \centering
    \begin{tikzpicture}
      \begin{axis}[width=\textwidth,height=0.72\textwidth,
        xmin=0,xmax=1,axis y line*=left,xlabel={$t$},ylabel={$q(t)$},
        ymin=-0.02,ymax=0.7,tick label style={font=\small}]
        \addplot[thick,blue] table[x=t,y=q]{figures/fig_time_irr.dat};
      \end{axis}
      \begin{axis}[width=\textwidth,height=0.72\textwidth,
        xmin=0,xmax=1,axis y line*=right,axis x line=none,ylabel={$S(t)$},
        tick label style={font=\small},legend pos=north west,
        legend cell align=left]
        \addlegendimage{thick,blue}\addlegendentry{$q(t)$}
        \addplot[thick,red!70!black,dashed] table[x=t,y=S]{figures/fig_time_irr.dat};
        \addlegendentry{$S(t)$}
      \end{axis}
    \end{tikzpicture}
    \caption{Irreversible ($\rho=0.3$)}
  \end{subfigure}
  \caption{Time evolution of the epigenetic state $q(t)$ under the
  cyclic loading $S(t)$. (a) Weak resistance: the mark is partially
  erased on unloading. (b) Strong resistance: the mark is locked in and
  the state remains frozen at its peak.}
  \label{fig:time}
\end{figure}

\begin{figure}[t]
  \centering
  \begin{tikzpicture}
    \begin{axis}[width=0.6\textwidth,height=0.4\textwidth,
      xlabel={$\ell$},ylabel={$q$},legend pos=north west,
      legend cell align=left,grid=both,grid style={gray!20},
      tick label style={font=\small}]
      \addplot[thick,blue] table[x=ell,y=q]{figures/fig_loop_rev.dat};
      \addplot[thick,red!70!black] table[x=ell,y=q]{figures/fig_loop_irr.dat};
      \legend{rev. ($\rho=0.1$),irrev. ($\rho=0.3$)}
    \end{axis}
  \end{tikzpicture}
  \caption{Hysteresis loops in the $(\ell,q)$ plane. The enclosed area
  is the dissipation per cycle. Larger resistance $\rho$ widens the
  elastic range and, beyond $\rho=\ell_{\max}/2$, suppresses recovery.}
  \label{fig:loops}
\end{figure}

\begin{table}[t]
  \centering
  \caption{Closed-form~\eqref{eq:play_exact}--\eqref{eq:residual} vs.\
  computed extrema ($N=4000$).}
  \label{tab:val}
  \begin{tabular}{lcccc}
    \toprule
    regime & $q_{\max}$ (exact / num.) & $q_{\mathrm{res}}$ (exact / num.) \\
    \midrule
    reversible ($\rho=0.1$)   & $0.39663$ / $0.39663$ & $0.10000$ / $0.10000$ \\
    irreversible ($\rho=0.3$) & $0.19663$ / $0.19663$ & $0.19663$ / $0.19663$ \\
    \bottomrule
  \end{tabular}
\end{table}

\subsubsection{Convergence and consistency}\label{sec:conv_exp}

We verify the consistency theory of \cref{sec:consistency} on the
reversible cycle, comparing against the exact
solution~\eqref{eq:play_exact}. Two facts emerge, both consistent with
the theory.

First, the \emph{state} error $\|q^h-q\|_\infty$ is at the level of
machine precision at every step size tested. This is a structural
property of rate-independent evolution: during flow the return
map~\eqref{eq:scalar_return_map} solves the algebraic yield condition
$k\,q_{k+1}=\ell(S_{k+1})\mp\rho$ \emph{exactly} at the node
$t_{k+1}$, independently of $h$, and during lock $q$ is exactly
constant; since the loading is monotone between nodes, the discrete
play operator coincides with the continuous one at the nodes. This
\emph{nodal exactness} is what underlies the first-order convergence of
the state in norm reported for the next two benchmarks; as explained in
\cref{rem:nodal} and proved in \cref{prop:interp}, it is a property
genuine of these scalar, monotonically loaded examples---not a general
convergence rate of the scheme. Here it makes the integrator reproduce
the exact evolution outright, so the discretisation error is carried
entirely by the energy balance rather than by the state.

Second, the global energy-balance defect $\sum_k e_k=(E_0+\mathcal
W_h)-(E_N+\Diss_h)\ge0$---the accumulated gap in the discrete energy
inequality---decays as $O(h)$ (\cref{fig:convergence}). This slope-one
behaviour is the global energy consistency of
\cref{thm:global_consistency}, whose hypotheses are met here ($q$ is
Lipschitz and $S\in W^{1,\infty}$): halving $h$ halves the defect, and
the data track the reference slope-one line.

\begin{figure}[t]
  \centering
  \begin{tikzpicture}
    \begin{loglogaxis}[width=0.6\textwidth,height=0.4\textwidth,
      xlabel={$h$},ylabel={$\sum_k e_k$ (energy defect)},
      legend pos=north west,legend cell align=left,
      grid=both,grid style={gray!20},tick label style={font=\small}]
      \addplot[only marks,mark=square*,teal!70!black] table[x=h,y=energy_defect]{figures/fig_convergence.dat};
      \addplot[dashed,black,domain=1.5e-4:2e-2]{0.754*x};
      \legend{energy defect,slope $1$}
    \end{loglogaxis}
  \end{tikzpicture}
  \caption{Global energy-balance defect versus step size $h$
  (log--log), confirming the first-order estimate of
  \cref{thm:global_consistency}. The state error, by contrast, is at
  machine precision because the return map is exact at the nodes.}
  \label{fig:convergence}
\end{figure}

\subsubsection{A loading with sub-step structure}\label{sec:conv_osc}

The clean first-order convergence just reported hinges on nodal exactness
(\cref{rem:nodal}), which requires that the exact state not change regime
more than once within a step. We close this benchmark by deliberately
breaking that condition: keeping the same convex model, we drive it with
a loading that carries genuine structure on a short time scale $\tau$,
\begin{equation}\label{eq:osc_loading}
  \ell(t)=\ell_0+\alpha\,t+\beta\sin(2\pi t/\tau)
  \qquad(\text{by taking }S(t)=t),
\end{equation}
with $\ell_0=0$, $\alpha=0.3$, $\beta=0.15>\rho=0.1$ and $\tau=0.05$.
Because $\beta>\rho$, the oscillation repeatedly crosses the yield
surface, so the exact play-operator response \emph{chatters}---entering
and leaving flow---on the scale $\tau$ (\cref{fig:time_osc}). A step
$h\gg\tau$ cannot see this: the return map, reading only the endpoints of
each step, misses the sub-step excursions, nodal exactness fails, and the
discrete trajectory follows a genuinely different path.

\Cref{fig:conv_osc_state} shows the consequence for the state error. It
has a \emph{plateau}: for $h\gtrsim\tau$ it is $O(1)$, essentially independent
of $h$ (the scheme selects the wrong branch), and only once $h$ falls
below $\approx\tau/4$ does nodal exactness set in and the error collapse
onto the $O(h)$ interpolation regime of \cref{prop:interp}. Above that
threshold no a priori bound protects the state; its first-order rate is
genuine of the resolved regime, exactly as \cref{prop:interp} and
\cref{rem:interp_scope} anticipate.

\Cref{fig:conv_osc_energy} shows the sharply different fate of the energy
defect. \Cref{thm:global_consistency} bounds it by $C\,h$ with
$C=\tfrac12 C_{\ell'}\|\dot q\|_\infty\|\dot S\|_\infty T$;
since $q$ stays Lipschitz---with a large constant
$\|\dot q\|_\infty\sim\beta/\tau$---this bound holds at \emph{every} step
size, and numerically the defect never exceeds $0.14\,C\,h$ across the
whole sweep. Strikingly, at the coarse steps where the state error is
\emph{largest} the energy defect is \emph{smallest} (the under-resolved
scheme dissipates little): the two do not fail together. As $h$ crosses
$\tau/4$ the defect settles onto its $O(h)$ asymptote, the stiffness
$1/\tau$ having been absorbed into the constant $C$.

This is the sharpest illustration of the distinction running through the
paper. The energy consistency is a \emph{proven, unconditional}
statement---the bound $C\,h$ never fails, the resolution scale entering
only through its constant---whereas the first-order convergence of the
\emph{state} is not proven in general and, as here, genuinely can fail
until the discretisation resolves the physics. This is precisely why the
energy estimate is the robust result and the state rate is treated as
conditional.

\begin{figure}[t]
  \centering
  \begin{subfigure}[c]{0.47\textwidth}
    \centering
    \begin{tikzpicture}
      \begin{axis}[width=\textwidth,height=0.72\textwidth,
        xmin=0,xmax=1,axis y line*=left,xlabel={$t$},ylabel={$q(t)$},
        tick label style={font=\small}]
        \addplot[thick,blue] table[x=t,y=q]{figures/fig_time_osc.dat};
      \end{axis}
      \begin{axis}[width=\textwidth,height=0.72\textwidth,
        xmin=0,xmax=1,axis y line*=right,axis x line=none,ylabel={$\ell(t)$},
        tick label style={font=\small},legend pos=north west,
        legend cell align=left]
        \addlegendimage{thick,blue}\addlegendentry{$q(t)$}
        \addplot[thick,red!70!black,dashed] table[x=t,y=ell]{figures/fig_time_osc.dat};
        \addlegendentry{$\ell(t)$}
      \end{axis}
    \end{tikzpicture}
    \caption{Time evolution (chatter)}
  \end{subfigure}
  \hfill
  \begin{subfigure}[c]{0.47\textwidth}
    \centering
    \begin{tikzpicture}
      \begin{axis}[width=\textwidth,height=0.72\textwidth,
        xlabel={$\ell$},ylabel={$q$},grid=both,
        grid style={gray!20},tick label style={font=\small}]
        \addplot[thick,blue] table[x=ell,y=q]{figures/fig_loop_osc.dat};
      \end{axis}
    \end{tikzpicture}
    \caption{Hysteresis loop}
  \end{subfigure}
  \caption{Sub-step loading~\eqref{eq:osc_loading} (well-resolved
  numerical solution, $h/\tau=0.005$): the state chatters as the
  oscillation crosses the yield surface, producing a serrated loop.}
  \label{fig:time_osc}
\end{figure}

\begin{figure}[t]
  \centering
  \begin{tikzpicture}
    \begin{loglogaxis}[width=0.6\textwidth,height=0.4\textwidth,
      xlabel={$h$},ylabel={state error},ymin=1e-4,ymax=3e-1,
      legend pos=south east,legend cell align=left,
      grid=both,grid style={gray!20},tick label style={font=\small}]
      \addplot[only marks,mark=*,blue] table[x=h,y=err_L1]{figures/fig_convergence_osc.dat};
      \addplot[only marks,mark=triangle*,red!70!black] table[x=h,y=err_Linf]{figures/fig_convergence_osc.dat};
      \addplot[dashed,black,domain=1e-4:2e-2]{2.0*x};
      \draw[densely dotted,gray!60!black,thick]
        (axis cs:0.05,1e-4) -- (axis cs:0.05,3e-1)
        node[pos=0.7,anchor=west,font=\scriptsize] {$h=\tau$};
      \legend{$L^1$ error,$L^\infty$ error,slope $1$}
    \end{loglogaxis}
  \end{tikzpicture}
  \caption{Convergence of the discrete state in norm under the sub-step
  loading~\eqref{eq:osc_loading} (log--log): a plateau at $O(1)$ persists
  for $h\gtrsim\tau$, collapsing onto the first-order interpolation regime
  of \cref{prop:interp} only once $h$ resolves $\tau$.}
  \label{fig:conv_osc_state}
\end{figure}

\begin{figure}[t]
  \centering
  \begin{tikzpicture}
    \begin{loglogaxis}[width=0.6\textwidth,height=0.4\textwidth,
      xlabel={$h$},ylabel={$\sum_k e_k$ (energy defect)},ymin=1e-4,ymax=3e-1,
      legend pos=south west,legend cell align=left,
      grid=both,grid style={gray!20},tick label style={font=\small}]
      \addplot[only marks,mark=square*,teal!70!black] table[x=h,y=energy_defect]{figures/fig_convergence_osc.dat};
      \addplot[dashed,black,domain=1e-4:2e-2]{26.0*x};
      \draw[densely dotted,gray!60!black,thick]
        (axis cs:0.05,1e-4) -- (axis cs:0.05,3e-1)
        node[pos=0.92,anchor=west,font=\scriptsize] {$h=\tau$};
      \legend{energy defect,slope $1$}
    \end{loglogaxis}
  \end{tikzpicture}
  \caption{First-order energy consistency under the sub-step
  loading~\eqref{eq:osc_loading} (log--log): the global energy-balance
  defect stays below the proven bound $C\,h$ of
  \cref{thm:global_consistency} at every $h$---smallest exactly where the
  state error (\cref{fig:conv_osc_state}) is largest---and settles onto
  its first-order asymptote once $h$ resolves $\tau$.}
  \label{fig:conv_osc_energy}
\end{figure}

Because $\Q\ni q\mapsto E(q,S)$ is convex on this benchmark, the
incremental minimisation~\eqref{eq:Ikstar} has a unique critical point
and the distinction between a \emph{global} and a \emph{local} solver
of it is moot: there is nothing here to expose the difference between
energetic and balanced-viscosity solutions. The next benchmark restores
that distinction.

\subsection{A bistable double well}\label{sec:benchmark2}

\subsubsection{Problem set-up} \label{sec:setup}

Let $\Q=\R$ and
\begin{equation}\label{eq:benchmark2}
  F(q)=
  \begin{cases}
    \tfrac12 k(q+a)^2, & q\le0,\\
    \tfrac12 k(q-a)^2, & q>0,
  \end{cases}
  \qquad
  \ell(S)=\ell_\infty\bigl(1-e^{-\lambda S}\bigr),
  \qquad
  \Psi(\dot q)=\rho\,|\dot q|,
\end{equation}
with $k,a,\rho>0$. This is a genuinely \emph{bistable} landscape: two
wells at $q=\mp a$ (where $F=0$), separated by a corner at $q=0$ (where
$F$ is continuous but not differentiable, $F=\tfrac12ka^2$). The
corner is a caricature of a smooth barrier chosen because $g(q)=F'(q)=k(q+a)$ for $q\le0$ and
$k(q-a)$ for $q>0$ is \emph{piecewise linear}, so every yield condition
below inverts elementarily. $F$ is
continuous and coercive of order $p=2$ (\ref{A:W}), and $E(\cdot,S)$
has exactly two local minima for every $S$, so
\cref{thm:BV_existence,thm:BV_uniqueness} apply: a BV solution exists
and, in this scalar case, is unique.

\paragraph{The jump is unavoidable.} Under monotonically increasing
loading, a transition between the two wells of \emph{any} non-convex
bistable energy is necessarily discontinuous: continuously tracking a
local equilibrium would require $g$ to increase throughout, but $g$
is not monotone between the wells (here it jumps down, from $+ka$ to
$-ka$, at the corner). There is no smooth analogue, for this benchmark,
of the reversible regime of \cref{sec:benchmark}.

\paragraph{Energetic solution (closed form).} Because
\eqref{eq:Ikstar} minimises \emph{globally} over $\Q$, it computes the
energetic solution (\cref{thm:convergence}), whose switching instant is
set by a Maxwell-type comparison of $E(q,S)+\Dr(q(t),q)$ across both
wells, not merely by the loss of stability of the current branch. Fixing
the initial datum $q(0)=-a$ and comparing the two wells' own energies
under the loading, one finds (\cref{app:experiments}) a remarkably
simple, closed-form threshold, independent of $k$ and $a$:
\begin{equation}\label{eq:l_jump_energetic}
  \ell_{\mathrm{jump}} = \rho,
  \qquad
  q^-=-a \ \longmapsto\ q^+=+a,
\end{equation}
i.e.\ the state remains locked at the bottom of the initial well until
$\ell(S(t))$ reaches $\rho$, at which instant it jumps directly to the
bottom of the other well. Thereafter, on the new branch,
\begin{equation}\label{eq:play_exact2}
  q(t)=
  \begin{cases}
    -a, & t\le t_{\mathrm{jump}},\\[2pt]
    \bigl(\ell(t)-\rho\bigr)/k+a, & t_{\mathrm{jump}}<t\le T/2,\\[2pt]
    \min\bigl\{q_{\max},\,(\ell(t)+\rho)/k+a\bigr\}, & T/2<t\le T,
  \end{cases}
\end{equation}
with $q_{\max}=(\ell_{\max}-\rho)/k+a$ and $t_{\mathrm{jump}}$ obtained
in closed form by inverting $\ell(S(t))=\rho$ on the ascending branch.
As in \cref{sec:benchmark}, the residual state
$q_{\mathrm{res}}=q(T)=\min\{q_{\max},\,\rho/k+a\}$ exhibits the same
reversible/lock-in dichotomy governed by $\rho$ against $\ell_{\max}$.

\begin{remark}[Contrast with the balanced-viscosity threshold]
\label{rem:energetic_vs_bv}
Energetic solutions are known to be able to switch earlier than is
physically natural, precisely because condition (S) compares the
current state against \emph{every} other state, not merely against
nearby ones; this is exactly the motivation for balanced-viscosity
solutions given in \cref{sec:bv}. The present benchmark makes that
gap concrete and quantitative: the left branch's own local stability,
$|f(q,S)|\le\rho$ evaluated \emph{on that branch alone}, persists until
$\ell = ka+\rho$, strictly later than the energetic
threshold~\eqref{eq:l_jump_energetic} (they coincide only in the
degenerate case $a=0$, no barrier). The incremental
scheme~\eqref{eq:Ikstar}, as formally stated, minimises over all of
$\Q$ and therefore targets the energetic solution; approximating the
physically-selected BV solution instead would require restricting the
minimisation to a neighbourhood of $q_k$ (a local, return-map-style
solve, as is standard in computational plasticity) together with the
vanishing-viscosity jump construction of \cref{def:BV}, which we do
not carry out numerically here.
\end{remark}

\subsubsection{Numerical validation of the jump}\label{sec:regimes2}

We fix $k=1$, $a=0.15$, $\rho=0.10$, and reuse the loading protocol of
\cref{sec:regimes} ($\ell_\infty=0.5$, $\lambda=1$, $S_{\max}=5$,
$T=1$), so that $\ell_{\mathrm{jump}}=0.10<\ell_{\max}\approx0.497$ and
the jump occurs at $t_{\mathrm{jump}}\approx0.0678$.
\Cref{fig:time2} shows the numerical time evolution and the resulting
hysteresis loop: the state is locked at $-a$, jumps discontinuously to
$+a$, continues to flow on the new branch up to the peak load, and
partially recovers on unloading. \Cref{tab:val2} confirms the
closed-form threshold, jump target, peak and residual are all
reproduced by the integrator.

\begin{figure}[t]
  \centering
  \begin{subfigure}[c]{0.47\textwidth}
    \centering
    \begin{tikzpicture}
      \begin{axis}[width=\textwidth,height=0.72\textwidth,
        xmin=0,xmax=1,axis y line*=left,xlabel={$t$},ylabel={$q(t)$},
        ymin=-0.2, ymax=0.9,
        tick label style={font=\small}]
        \addplot[thick,blue] table[x=t,y=q]{figures/fig_time_bistable.dat};
      \end{axis}
      \begin{axis}[width=\textwidth,height=0.72\textwidth,
        xmin=0,xmax=1,axis y line*=right,axis x line=none,ylabel={$S(t)$},
        tick label style={font=\small},legend pos=north west,
        legend cell align=left]
        \addlegendimage{thick,blue}\addlegendentry{$q(t)$}
        \addplot[thick,red!70!black,dashed] table[x=t,y=S]{figures/fig_time_bistable.dat};
        \addlegendentry{$S(t)$}
      \end{axis}
    \end{tikzpicture}
    \caption{Time evolution}
  \end{subfigure}
  \hfill
  \begin{subfigure}[c]{0.47\textwidth}
    \centering
    \begin{tikzpicture}
      \begin{axis}[width=\textwidth,height=0.72\textwidth,
        xlabel={$\ell$},ylabel={$q$},grid=both,
        grid style={gray!20},tick label style={font=\small}]
        \addplot[thick,blue] table[x=ell,y=q]{figures/fig_loop_bistable.dat};
      \end{axis}
    \end{tikzpicture}
    \caption{Hysteresis loop}
  \end{subfigure}
  \caption{Bistable benchmark: (a) the state jumps discontinuously from
  $-a$ to $+a$ once $\ell(S(t))$ reaches $\rho$, then flows smoothly;
  (b) the corresponding loop in the $(\ell,q)$ plane, with a vertical
  segment at the jump.}
  \label{fig:time2}
\end{figure}

\begin{table}[t]
  \centering
  \caption{Closed-form~\eqref{eq:l_jump_energetic}--\eqref{eq:play_exact2}
  vs.\ computed values ($N=4000$).}
  \label{tab:val2}
  \begin{tabular}{lcc}
    \toprule
    quantity & exact & numerical \\
    \midrule
    $\ell_{\mathrm{jump}}$ & $0.10000$ & $0.10063$ $^{\ast}$ \\
    $q^+$ (post-jump)      & $0.15000$ & $0.15064$ $^{\ast}$ \\
    $q_{\max}$ (peak)      & $0.54663$ & $0.54663$ \\
    $q_{\mathrm{res}}$ (residual) & $0.25000$ & $0.25000$ \\
    \bottomrule
  \end{tabular}
  \vspace{2pt}
  \begin{minipage}{0.85\textwidth}
    \small $^\ast$Evaluated at the first grid node past
    $t_{\mathrm{jump}}$; the $O(h)$ discrepancy is precisely the
    subject of \cref{sec:conv2}.
  \end{minipage}
\end{table}

\subsubsection{Convergence and consistency across a jump}\label{sec:conv2}

The bistable benchmark lets us examine two \emph{logically distinct}
questions that the smooth benchmark of \cref{sec:conv_exp} could not
separate: the \emph{convergence} of the discrete state to the exact one,
and the \emph{consistency} of the discrete energy balance. Only the
second is covered by a rate theorem; we treat them in turn.

As in \cref{sec:conv_exp}, at every grid node the return map solves the
same threshold comparison as the exact solution, using the same previous
state $q_k$, so $q^h(t_i)=q(t_i)$ exactly at the nodes; the single jump
does not spoil node-wise exactness. The discretisation error is thus
entirely an \emph{interpolation} effect: the natural piecewise-constant
representation $q^h(t)=q^h(t_i)$ for $t\in[t_i,t_{i+1})$ differs from the
continuously varying exact solution by a zeroth-order-hold error, and
locates the jump only to within one step. \Cref{fig:traj2} makes this
visible: as $h$ is refined the staircase $q^h$ tightens onto the exact
curve and the numerically resolved jump migrates towards
$t_{\mathrm{jump}}\approx0.068$ (at $t=0.10,\,0.08,\,0.07$ for
$h=0.1,\,0.04,\,0.01$).

\begin{figure}[t]
  \centering
  \begin{tikzpicture}
    \begin{axis}[width=0.72\textwidth,height=0.42\textwidth,
      xmin=0,xmax=1,xlabel={$t$},ylabel={$q(t)$},
      legend pos=south east,legend cell align=left,
      grid=both,grid style={gray!20},tick label style={font=\small}]
      \addplot[const plot,blue,thick] table[x=t,y=q]{figures/fig_traj_bistable_c.dat};
      \addplot[const plot,orange!85!black] table[x=t,y=q]{figures/fig_traj_bistable_m.dat};
      \addplot[const plot,teal!70!black,thin] table[x=t,y=q]{figures/fig_traj_bistable_f.dat};
      \addplot[black,very thick] table[x=t,y=q]{figures/fig_traj_bistable_exact.dat};
      \legend{$h=0.1$,$h=0.04$,$h=0.01$,exact}
    \end{axis}
  \end{tikzpicture}
  \caption{Visual convergence of the piecewise-constant discrete
  trajectory $q^h$ (const-plot staircases) to the exact
  solution~\eqref{eq:play_exact2} as the step size $h$ is refined. The
  jump is resolved to within one step and migrates towards the exact
  instant $t_{\mathrm{jump}}$.}
  \label{fig:traj2}
\end{figure}

\paragraph{Convergence in norm.} \Cref{thm:convergence} guarantees that
$q^h\to q$, but provides \emph{no general rate}. \Cref{fig:conv2_state}
reports two norms of the state error against a fine, $h$-independent
sampling of the exact solution --- the time-averaged ($L^1$) error and
the sup-norm error restricted to times at least $O(h)$ away from
$t_{\mathrm{jump}}$. Both are first order, tracking the reference
slope-one line, with no qualitative penalty from the jump. As in the
previous example this rests on nodal exactness: the return map
reproduces the exact solution at the nodes even across the jump, so the
norm error is a pure zeroth-order-hold interpolation error, which is
$O(h)$ for a piecewise-Lipschitz solution with an isolated jump by
\cref{prop:interp} (with $J$ the size of the single jump). We stress
again (\cref{rem:nodal}) that this is genuine of the present scalar
setting, not a general a priori rate.

\begin{figure}[t]
  \centering
  \begin{tikzpicture}
    \begin{loglogaxis}[width=0.6\textwidth,height=0.4\textwidth,
      xlabel={$h$},ylabel={state error},
      legend pos=north west,legend cell align=left,
      grid=both,grid style={gray!20},tick label style={font=\small}]
      \addplot[only marks,mark=*,blue] table[x=h,y=err_L1]{figures/fig_convergence_bistable.dat};
      \addplot[only marks,mark=triangle*,red!70!black] table[x=h,y=err_Linf_away]{figures/fig_convergence_bistable.dat};
      \addplot[dashed,black,domain=1.5e-4:2e-2]{0.5*x};
      \legend{$L^1$ error,$L^\infty$ error,slope $1$}
    \end{loglogaxis}
  \end{tikzpicture}
  \caption{Convergence of the discrete state in norm (log--log). Both
  the $L^1$ error and the sup-norm error away from the jump decay
  \emph{empirically} at first order; \cref{thm:convergence} guarantees
  convergence but not this rate.}
  \label{fig:conv2_state}
\end{figure}

\paragraph{First-order energy consistency.} The energy balance, by
contrast, \emph{does} obey a rate theorem. On the smooth branches
$q$ is Lipschitz, so \cref{thm:global_consistency} applies and bounds
the accumulated truncation error by $O(h)$. The single jump, at which
$q$ is only of bounded variation and the theorem's regularity hypothesis
fails, contributes no additional order: the energetic transition is
\emph{dissipation-neutral} --- at the Maxwell threshold $\ell=\rho$ one
has $I(-a)=I(+a)=a\rho$ (\cref{app:experiments}), equivalently
$E(q^-,S)-E(q^+,S)=\Dr(q^-,q^+)$, so the exact jump releases exactly the
energy it dissipates and carries zero viscous correction. The discrete
over-dissipation at the jump step is therefore controlled by the
$O(h)$ offset between $\ell$ at the jump node and $\rho$. Consequently
the global energy-balance defect $\sum_k e_k=(E_0+\mathcal W_h)-(E_N+\Diss_h)$
decays as $O(h)$ (\cref{fig:conv2_energy}), tracking the slope-one line
predicted by \cref{thm:global_consistency}.

\begin{figure}[t]
  \centering
  \begin{tikzpicture}
    \begin{loglogaxis}[width=0.6\textwidth,height=0.4\textwidth,
      xlabel={$h$},ylabel={$\sum_k e_k$ (energy defect)},
      legend pos=north west,legend cell align=left,
      grid=both,grid style={gray!20},tick label style={font=\small}]
      \addplot[only marks,mark=square*,teal!70!black] table[x=h,y=energy_defect]{figures/fig_convergence_bistable.dat};
      \addplot[dashed,black,domain=1.5e-4:2e-2]{1.0*x};
      \legend{energy defect,slope $1$}
    \end{loglogaxis}
  \end{tikzpicture}
  \caption{First-order energy consistency across the jump (log--log):
  the global energy-balance defect decays as $O(h)$, as guaranteed by
  \cref{thm:global_consistency} on the smooth arcs and preserved across
  the dissipation-neutral energetic jump.}
  \label{fig:conv2_energy}
\end{figure}

\subsection{A nonlinear convex example}\label{sec:benchmark3}

\subsubsection{Problem set-up}\label{sec:setup3}

The first two benchmarks share a feature that flatters the return map:
the force $g=F'$ is linear (\cref{sec:benchmark}) or piecewise linear
(\cref{sec:benchmark2}), so the yield condition $g(q)=\ell\mp\rho$
inverts in closed form and the corrector is a single explicit formula.
To probe the method in its general form we now take a genuinely
\emph{nonlinear} landscape, whose force has no elementary inverse, so
that the return map must solve the yield condition by an
\emph{iterative} (Newton) scheme at every dissipative step. Let
$\Q=\R$ and
\begin{equation}\label{eq:benchmark3}
  F(q)=\tfrac12 q^2+\cosh q-1,\qquad
  g(q)=F'(q)=q+\sinh q,\qquad
  \Psi(\dot q)=\rho\,|\dot q|,
\end{equation}
with $\rho>0$. The energy is smooth and \emph{uniformly convex},
$F''(q)=1+\cosh q\ge2$, so \ref{A:W}--\ref{A:coerc} hold and, by
\cref{thm:uniqueness}, the energetic solution is unique; the
transcendental force $g(q)=q+\sinh q$ admits no closed-form inverse.

We validate the integrator by the \emph{method of manufactured
solutions}. Rather than prescribe the loading and solve for the state
(which would require $g^{-1}$), we prescribe a smooth, piecewise-monotone
state $q(t)$---a lock--flow--lock--flow--lock cycle rising from $q=0$ to
$q=1$ and back, with $C^1$ (smoothstep) ramps---and take as loading the
$\ell(t)$ that makes it the exact play-operator response, obtained by
\emph{forward evaluation} of $g$ alone:
\begin{equation}\label{eq:manufactured}
  \ell(t)=g(q(t))+\rho \ \ (\text{forward flow}),
  \qquad
  \ell(t)=g(q(t))-\rho \ \ (\text{backward flow}),
\end{equation}
with $\ell$ sweeping inside the elastic band $[g(q)-\rho,g(q)+\rho]$
during lock (the loading is here specified directly in time, equivalently
$S(t)=t$). By construction $q(t)$ is the exact solution for this $\ell$;
the integrator, given only $\ell$ at the nodes, must recover it by solving
$g(q_{k+1})=\ell(t_{k+1})\mp\rho$ with a safeguarded Newton iteration.

\subsubsection{Numerical validation}\label{sec:regimes3}

We fix $\rho=0.3$, $T=1$, with the state locked on $[0,0.15]$, flowing up
on $[0.15,0.45]$, locked at the peak on $[0.45,0.55]$, flowing down on
$[0.55,0.85]$ and locked on $[0.85,1]$. \Cref{fig:time3} shows the
numerical time evolution and the resulting hysteresis loop.
\Cref{tab:val3} reports the validation: the return map reproduces the
manufactured solution at the nodes to machine precision, and does so
through a genuine iterative solve---an average of three Newton iterations
per dissipative step, converging to a residual
$|g(q)-\ell\pm\rho|\sim10^{-14}$---confirming that the method does not
rely on any closed-form invertibility of $g$.

\begin{figure}[t]
  \centering
  \begin{subfigure}[t]{0.47\textwidth}
    \centering
    \begin{tikzpicture}
      \begin{axis}[width=\textwidth,height=0.72\textwidth,
        xmin=0,xmax=1,axis y line*=left,xlabel={$t$},ylabel={$q(t)$},
        ymin=-0.05,ymax=1.1,tick label style={font=\small}]
        \addplot[thick,blue] table[x=t,y=q]{figures/fig_time_nonlinear.dat};
      \end{axis}
      \begin{axis}[width=\textwidth,height=0.72\textwidth,
        xmin=0,xmax=1,axis y line*=right,axis x line=none,ylabel={$\ell(t)$},
        tick label style={font=\small},legend pos=north west,
        legend cell align=left]
        \addlegendimage{thick,blue}\addlegendentry{$q(t)$}
        \addplot[thick,red!70!black,dashed] table[x=t,y=ell]{figures/fig_time_nonlinear.dat};
        \addlegendentry{$\ell(t)$}
      \end{axis}
    \end{tikzpicture}
    \caption{Time evolution}
  \end{subfigure}
  \hfill
  \begin{subfigure}[t]{0.47\textwidth}
    \centering
    \begin{tikzpicture}
      \begin{axis}[width=\textwidth,height=0.72\textwidth,
        xlabel={$\ell$},ylabel={$q$},grid=both,
        grid style={gray!20},tick label style={font=\small}]
        \addplot[thick,blue] table[x=ell,y=q]{figures/fig_loop_nonlinear.dat};
      \end{axis}
    \end{tikzpicture}
    \caption{Hysteresis loop}
  \end{subfigure}
  \caption{Nonlinear convex benchmark ($g=q+\sinh q$): (a) manufactured
  time evolution of the state $q(t)$ and the loading $\ell(t)$ that
  induces it; (b) the corresponding hysteresis loop in the $(\ell,q)$
  plane. All curves are the numerical solution.}
  \label{fig:time3}
\end{figure}

\begin{table}[t]
  \centering
  \caption{Validation of the nonlinear benchmark ($N=4000$): node-wise
  accuracy of the manufactured solution and behaviour of the Newton
  solver in the return map.}
  \label{tab:val3}
  \begin{tabular}{lc}
    \toprule
    quantity & value \\
    \midrule
    nodal state error $\max_i|q^h(t_i)-q(t_i)|$ & $4.8\times10^{-14}$ \\
    average Newton iterations / dissipative step & $3.0$ \\
    maximum Newton residual $|g(q)-\ell\pm\rho|$ & $9.7\times10^{-14}$ \\
    \bottomrule
  \end{tabular}
\end{table}

\subsubsection{Convergence and consistency}\label{sec:conv3}

As in \cref{sec:conv2}, the return map is exact at the nodes---the yield
condition $g(q_{k+1})=\ell(t_{k+1})\mp\rho$ is solved (now
\emph{iteratively}) at each node independently of $h$---so the
discretisation error is again an interpolation effect and, on this smooth
benchmark, entirely of zeroth-order-hold type. \Cref{fig:traj3} shows the
piecewise-constant staircases tightening onto the exact curve as $h$ is
refined. We separate, as before, the two logically distinct questions.

\Cref{fig:conv3_state} reports the convergence of the state in norm:
against a fine, $h$-independent sampling of the exact solution, both the
time-averaged ($L^1$) error and the sup-norm error decay cleanly at first
order. As in the previous examples this is the interpolation error under
nodal exactness (\cref{rem:nodal}, \cref{prop:interp})---here with no
jump, so $J=0$ and the sup-norm error is $O(h)$ throughout---and not a
general a priori rate. \Cref{fig:conv3_energy} reports the
first-order energy consistency: the global energy-balance defect decays
as $O(h)$, tracking the slope-one line predicted by
\cref{thm:global_consistency}, which here applies throughout since $q$ is
Lipschitz and there is no jump. Crucially, the Newton iteration count
stays bounded---about three per step---under refinement, so the nonlinear
solve adds only a constant factor to the cost and does not degrade the
convergence order.

\begin{figure}[t]
  \centering
  \begin{tikzpicture}
    \begin{axis}[width=0.72\textwidth,height=0.42\textwidth,
      xmin=0,xmax=1,xlabel={$t$},ylabel={$q(t)$},
      legend pos=north west,legend cell align=left,
      grid=both,grid style={gray!20},tick label style={font=\small}]
      \addplot[const plot,blue,thick] table[x=t,y=q]{figures/fig_traj_nl_c.dat};
      \addplot[const plot,orange!85!black] table[x=t,y=q]{figures/fig_traj_nl_m.dat};
      \addplot[const plot,teal!70!black,thin] table[x=t,y=q]{figures/fig_traj_nl_f.dat};
      \addplot[black,very thick] table[x=t,y=q]{figures/fig_traj_nl_exact.dat};
      \legend{$h=1/12$,$h=1/30$,$h=1/120$,exact}
    \end{axis}
  \end{tikzpicture}
  \caption{Visual convergence of the piecewise-constant discrete
  trajectory $q^h$ (const-plot staircases) to the exact manufactured
  solution as the step size $h$ is refined.}
  \label{fig:traj3}
\end{figure}

\begin{figure}[t]
  \centering
  \begin{tikzpicture}
    \begin{loglogaxis}[width=0.6\textwidth,height=0.4\textwidth,
      xlabel={$h$},ylabel={state error},
      legend pos=north west,legend cell align=left,
      grid=both,grid style={gray!20},tick label style={font=\small}]
      \addplot[only marks,mark=*,blue] table[x=h,y=err_L1]{figures/fig_convergence_nonlinear.dat};
      \addplot[only marks,mark=triangle*,red!70!black] table[x=h,y=err_Linf_away]{figures/fig_convergence_nonlinear.dat};
      \addplot[dashed,black,domain=1.5e-4:2e-2]{1.0*x};
      \legend{$L^1$ error,$L^\infty$ error,slope $1$}
    \end{loglogaxis}
  \end{tikzpicture}
  \caption{Convergence of the discrete state in norm (log--log) for the
  nonlinear benchmark. Both the $L^1$ error and the sup-norm error decay
  \emph{empirically} at first order; the transcendental force, solved by
  Newton, does not degrade the rate.}
  \label{fig:conv3_state}
\end{figure}

\begin{figure}[t]
  \centering
  \begin{tikzpicture}
    \begin{loglogaxis}[width=0.6\textwidth,height=0.4\textwidth,
      xlabel={$h$},ylabel={$\sum_k e_k$ (energy defect)},
      legend pos=north west,legend cell align=left,
      grid=both,grid style={gray!20},tick label style={font=\small}]
      \addplot[only marks,mark=square*,teal!70!black] table[x=h,y=energy_defect]{figures/fig_convergence_nonlinear.dat};
      \addplot[dashed,black,domain=1.5e-4:2e-2]{8.6*x};
      \legend{energy defect,slope $1$}
    \end{loglogaxis}
  \end{tikzpicture}
  \caption{First-order energy consistency for the nonlinear benchmark
  (log--log): the global energy-balance defect decays as $O(h)$, as
  guaranteed by \cref{thm:global_consistency} (applicable throughout,
  since $q$ is Lipschitz with no jump).}
  \label{fig:conv3_energy}
\end{figure}

%% file: sections/discussion.tex
\section{Discussion and outlook}\label{sec:discussion}

\subsection{The rate-independent limit and its regime of validity}
\label{sec:ri_validity}

Rate independence is a modelling idealisation, and we adopt it
deliberately rather than as an established fact. It is the
\emph{quasi-static limit} of epigenetic evolution---the regime in which
the chromatin state either equilibrates quickly relative to the
variation of its micro-environment, or remains effectively frozen except
when a threshold is crossed---and it plays here the role that perfect
plasticity or the play operator play in mechanics: questionable in
general, yet capturing the essential phenomenology (threshold
activation, memory, and hysteresis) while yielding a robust,
thermodynamically consistent and predictive skeleton. The price of the
idealisation is explicit. The pure rate-independent model has \emph{no}
intrinsic time scale, so it cannot represent rate-dependent relaxation,
and it predicts the \emph{absence of ratcheting}: repeating an identical
loading cycle leaves the same residual state each time, and the maximal
damage and residual scar depend only on the extreme values of the
loading, not on its rhythm. This is a sharp, falsifiable prediction: an
experimentally observed dependence of the epigenetic scar on the
\emph{frequency} of an intermittent stimulus at fixed amplitude would
signal a genuine viscous (rate-dependent) contribution, quantified by
the parameter~$\varepsilon$ of the viscous extension discussed in
\cref{sec:beyond_ri}.

\subsection{Relation to the quasi-potential landscape paradigm}
\label{sec:quasipotential}

The dominant mathematical realisation of Waddington's epigenetic
landscape~\cite{Waddington1957,Huang2012} constructs it as a
\emph{quasi-potential} from the stationary distribution of a stochastic
gene-regulatory dynamics~\cite{Wang2008,Ferrell2012}. A central lesson
of that programme is that gene-network dynamics is generically a
\emph{non-equilibrium, non-gradient} system: the deterministic drift
does not derive from a single potential, and a non-vanishing rotational
probability flux, sustained by energy input and responsible for
thermodynamic dissipation, is an essential ingredient. This is in
apparent tension with the present framework, which postulates a stored
energy~$E$ and a gradient-type (energetic) evolution. We do not claim
that the full gene-regulatory dynamics is a gradient flow. Rather, RIE
describes the slow, hysteretic \emph{epigenetic-mark} subsystem in the
quasi-static limit, where an effective energetic description is natural
and where the memory of the loading history, not the fast rotational
flux, dominates the response. In this sense RIE is complementary to the
quasi-potential picture: it trades the kinetic and flux content of the
stochastic landscape for an exact thermodynamic bookkeeping (the first
and second laws hold by construction), analytical thresholds, and the
robustness of the variational structure. Clarifying, for a given
biological system, when the quasi-static energetic reduction is
legitimate---and how the effective energy relates to the underlying
quasi-potential---is an important open question.

\subsection{Relation to kinetic bistable-switch models}
\label{sec:kinetic_models}

A large body of work models epigenetic and cell-fate decisions with
kinetic bistable switches: systems of ordinary differential equations
with cooperative feedback whose bifurcation structure produces
bistability and hysteresis~\cite{Huang2007,Ferrell2012}. Coupled reversible and
irreversible switches have been used to explain the
epithelial--mesenchymal transition and its hysteresis~\cite{Tian2013,
CeliaTerrassa2018}. The hysteresis produced by such models is
\emph{kinetic}---the loop depends on the rate at which the control
parameter is swept---whereas the hysteresis of RIE is
\emph{rate-independent}: the loop is a geometric object whose enclosed
area equals the dissipation per cycle, independent of the rhythm of the
stimulus (\cref{sec:hysteresis}). The two viewpoints are complementary.
Kinetic models resolve the molecular circuitry and its time scales at
the cost of many rate constants and of evolution laws that are, in
practice, prescribed by hand; RIE fixes instead the \emph{structure} of
the evolution from three thermodynamic ingredients, obtaining
closed-form thresholds and unconditional consistency, at the cost of
suppressing the kinetics.

\subsection{Beyond rate independence}\label{sec:beyond_ri}

The framework is the rate-independent member of a broader family, and
the route out of it is already present in the paper. Relaxing the
$1$-homogeneity of the dissipation potential---for instance by adding a
quadratic term $\tfrac{\varepsilon}{2}\|\dot{\bs q}\|^2$---turns the
Biot inclusion into a rate-dependent (viscous), doubly-nonlinear
gradient flow. This is exactly the viscous regularisation used to
\emph{construct} the balanced-viscosity solutions of \cref{sec:bv}: the
rate-independent model is the $\varepsilon\to0$ skeleton of a
one-parameter family, and a finite $\varepsilon$ reintroduces relaxation,
frequency dependence and ratcheting~\cite{mielke2012bv,MielkeRoubicek}.
A systematic study of the finite-$\varepsilon$ regime, and the
identification of $\varepsilon$ from frequency-resolved experiments, is a
natural next step. Two further extensions are of direct biological
relevance: \emph{vectorial} state variables ($n\ge2$), needed to
represent several coupled epigenetic marks, for which the transition
path at a switch becomes genuinely path-dependent and the
balanced-viscosity selection is essential; and \emph{non-symmetric}
dissipation potentials, $\Psi(-\dot q)\ne\Psi(\dot q)$, which model the
direction-dependent resistance of methylation versus demethylation and
the associated asymmetry of the forward and reverse thresholds.

\subsection{Interpretation, identifiability and the thermodynamic
statements}\label{sec:interpretation}

Two caveats delimit the reading of the framework. First, the
ingredients $F$, $\bs\ell$ and $\Psi$ are \emph{effective},
phenomenological objects, not directly measured energies; their value is
that they are, in principle, identifiable from macroscopic observables,
since the closed-form thresholds and residual scar
(\cref{sec:experiments}) express them in terms of quantities accessible to a suitably
designed loading--unloading experiment. Second, the first- and
second-law statements of \cref{sec:thermo} are exact \emph{within} the
thermodynamics of internal variables of the model; they should be read
as an effective, structural thermodynamics---with $E$ a Helmholtz-type
effective free energy and $\mathcal D$ an effective entropy
production---rather than as a literal accounting of molecular energetics.
Their role is to guarantee that any RIE model, whatever the choice of
landscape and dissipation, is internally consistent with the balance
principles, which is precisely the robustness that motivates the
framework.

This effective status suggests, finally, a deeper reading. RIE need not
be seen as a mere simplification: the rate-independent, quasi-static
model may be the \emph{leading-order macroscopic reduction} of a
genuinely more complex microscale kinetics, obtained by coarse-graining
or information compression. From this viewpoint the potentials $F$,
$\bs\ell$ and $\Psi$ would be \emph{determined}, rather than posited, as
the effective quantities that survive the reduction---much as, in recent
structure-preserving coarse-graining of nonequilibrium particle systems,
the coarse variables inherit a dissipative, history-dependent effective
dynamics that is thermodynamically consistent by
construction~\cite{Hernandez2026}. In the metriplectic (GENERIC)
description of dissipative systems, RIE occupies the purely dissipative,
rate-independent corner, while the viscous family of
\cref{sec:beyond_ri} interpolates towards the full
reversible--irreversible dynamics. Casting RIE as the first-order block
of such a coarse-grained theory---and learning its potentials from
finer-scale data---is, to us, a particularly promising long-term
direction opened by this work.

%% file: sections/conclusions.tex
\section{Conclusions}\label{sec:conclusions}

We have proposed Rate-Independent Epigenetics (RIE), a framework that
models epigenetic change as rate-independent dissipative evolution
driven by a single triple $(\Q,E,\Psi)$: an energy landscape of
chromatin configurations, a micro-environmental loading, and a
$1$-homogeneous dissipation potential encoding the resistance to
remodelling. Once the triple is fixed and an energetic principle is
postulated, the governing equations---the energetic formulation and the
corresponding Biot inclusion---follow systematically, and their solutions
satisfy, a priori, estimates that coincide with the first and second
laws of thermodynamics. Threshold activation, memory and residual state
upon unloading---the defining phenomenology of epigenetic marks---thus
emerge from a thermodynamically consistent variational structure rather
than from ad-hoc evolution laws. We established the well-posedness of
RIE models (existence, uniqueness under convexity, and existence and
scalar uniqueness of balanced-viscosity solutions) and a convergent
variational integrator that is first-order consistent in the energy
balance and reduces, in the scalar case, to a two-sided return map,
validated against three closed-form benchmarks.

To the best of our knowledge, this is the first transfer of the theory
of \emph{energetic} and \emph{balanced-viscosity} solutions to
epigenetics: the memory, thresholds and hysteresis of epigenetic marks
are recast as the rate-independent evolution of an internal variable,
with thermodynamic consistency built into the variational structure
rather than imposed a posteriori. Beyond this conceptual transfer, the
paper contributes several self-contained results that are, we believe,
of independent interest: an explicit compatibility lemma underpinning
existence (\cref{lem:compatibility}); the definiteness
hypothesis~\ref{A:coerc} required by the total-variation bound of the
balanced-viscosity construction; the node-exactness of the
return-map integrator on rate-independent branches
(\cref{sec:conv_exp}); and, for the piecewise-quadratic bistable well, a
fully closed-form comparison between the \emph{energetic} jump threshold
$\ell=\rho$ and the \emph{local} (balanced-viscosity) threshold
$\ell=ka+\rho$ (\cref{app:experiments}), a transparent instance of the
non-uniqueness that the balanced-viscosity selection resolves.

The scope and limitations of the rate-independent hypothesis, its
relation to the quasi-potential and kinetic paradigms, and the viscous
and vectorial extensions that lie beyond it are discussed in
\cref{sec:discussion}. The present framework is, above all, the
abstract foundation for a quantitative model: it establishes the basis for applying
RIE to a scalar \emph{bistable} (or, in general, \emph{multi-stable}) landscape in
which the healthy and damaged phenotypes are the two wells, catastrophic
phenotype switches of epithelial--mesenchymal type appear as
balanced-viscosity-selected energetic jumps, and the analytical
thresholds and residual ``epigenetic scars'' are calibrated against
experimental data on hypoxia-driven epigenetic remodelling, offering possibilities for designing therapeutic strategies.

\clearpage

%% file: sections/appendix.tex
%
%

\section{Preliminaries and the governing equation}
\label{app:framework}

Throughout this appendix we use, without further comment, two classical
facts. By \emph{Rademacher's theorem} a Lipschitz function
$u:[0,T]\to\R^n$ is differentiable almost everywhere; and, more
generally, an absolutely continuous $u$ is differentiable a.e., its
derivative $\dot u$ is Lebesgue integrable, and the fundamental theorem
of calculus for the Lebesgue integral holds,
$u(t)-u(s)=\int_s^t\dot u(\eta) \, \dd \eta$. This justifies writing $\dot u$ for the
a.e.\ derivative of every absolutely continuous or Lipschitz function
below.

\subsection{Convex-analytic and metric preliminaries}

\begin{lemma}[Saturation identity and scale invariance]\label{lem:saturation}
Let $\Psi\colon\R^n\to[0,+\infty)$ be convex, normalised
($\Psi(\bs 0)=0$) and positively $1$-homogeneous, and recall
$\E:=\partial\Psi(\bs 0)=\{\bs f:\bs f\cdot\bs w\le\Psi(\bs w)\ \forall\,\bs w\in\R^n\}$.
Then:
\begin{enumerate}[label=\textnormal{(\roman*)},leftmargin=2.4em,itemsep=2pt]
  \item\label{sat:i} for every $\bs v\in\R^n$ and every
  $\bs f\in\partial\Psi(\bs v)$,
  \begin{equation}\label{eq:saturation}
    \bs f\cdot\bs v=\Psi(\bs v);
  \end{equation}
  \item\label{sat:ii} $\bs f\in\partial\Psi(\bs v)$ if and only if
  $\bs f\in\E$ \emph{and} $\bs f\cdot\bs v=\Psi(\bs v)$;
  \item\label{sat:iii} $\partial\Psi(\lambda\bs v)=\partial\Psi(\bs v)$
  for every $\lambda>0$.
\end{enumerate}
\end{lemma}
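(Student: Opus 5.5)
The plan is to work entirely from the subgradient inequality defining $\partial\Psi(\bs v)$, namely $\bs f\in\partial\Psi(\bs v)$ iff $\Psi(\bs w)\ge\Psi(\bs v)+\bs f\cdot(\bs w-\bs v)$ for all $\bs w\in\R^n$. We combine it with positive $1$-homogeneity, testing along the ray through $\bs v$ and along arbitrary directions. Everything will follow from elementary convex analysis, and no earlier result of the paper is needed. The description of $\E$ given in the statement is the case $\bs v=\bs 0$ of the same inequality, using $\Psi(\bs 0)=0$.

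For \ref{sat:i}, fix $\bs f\in\partial\Psi(\bs v)$ and test with $\bs w=\lambda\bs v$, $\lambda>0$. Homogeneity turns the subgradient inequality into $(\lambda-1)\Psi(\bs v)\ge(\lambda-1)\,\bs f\cdot\bs v$. Taking $\lambda=2$ gives $\Psi(\bs v)\ge\bs f\cdot\bs v$. Taking $\lambda=\tfrac12$ gives the reverse inequality. Hence $\bs f\cdot\bs v=\Psi(\bs v)$. If $\bs v=\bs 0$, the identity is trivial since $\Psi(\bs 0)=0$.

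For \ref{sat:ii}, first suppose $\bs f\in\partial\Psi(\bs v)$. By \ref{sat:i} the subgradient inequality becomes $\Psi(\bs w)\ge\bs f\cdot\bs w$ for all $\bs w$, which is exactly $\bs f\in\E$. Conversely, if $\bs f\in\E$ and $\bs f\cdot\bs v=\Psi(\bs v)$, then for every $\bs w$ we have $\Psi(\bs w)\ge\bs f\cdot\bs w=\Psi(\bs v)+\bs f\cdot(\bs w-\bs v)$, so $\bs f\in\partial\Psi(\bs v)$.

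For \ref{sat:iii}, apply the characterisation \ref{sat:ii} at both $\bs v$ and $\lambda\bs v$. The condition $\bs f\in\E$ does not depend on the point. The saturation condition $\bs f\cdot(\lambda\bs v)=\Psi(\lambda\bs v)=\lambda\Psi(\bs v)$ is equivalent to $\bs f\cdot\bs v=\Psi(\bs v)$ after dividing by $\lambda>0$. So the two subdifferentials coincide.

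There is no genuine obstacle here. The only delicate point is in \ref{sat:i}: one must test with $\lambda$ on \emph{both} sides of $1$, so that dividing by $(\lambda-1)$ produces opposite inequalities. This is precisely where homogeneity, rather than mere convexity, is used.
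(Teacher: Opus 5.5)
Your proof is correct and follows essentially the same route as the paper. You obtain (i) by testing the subgradient inequality with $\bs w=\lambda\bs v$ for $\lambda$ on both sides of $1$, then get (ii) by substituting the saturation identity back into that inequality, and deduce (iii) from (ii) because membership in $\E$ does not depend on $\bs v$ while the saturation condition scales linearly in $\lambda>0$.
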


\begin{proof}
\emph{\ref{sat:i}.} By definition of the convex subdifferential,
$\bs f\in\partial\Psi(\bs v)$ means
\begin{equation}\label{eq:subdiff_def}
  \Psi(\bs w)\ge\Psi(\bs v)+\bs f\cdot(\bs w-\bs v)
  \qquad\forall\,\bs w\in\R^n .
\end{equation}
Take $\bs w=\lambda\bs v$ with $\lambda>0$. Positive $1$-homogeneity
gives $\Psi(\lambda\bs v)=\lambda\Psi(\bs v)$, so~\eqref{eq:subdiff_def}
becomes
\[
  \lambda\Psi(\bs v)\ge\Psi(\bs v)+(\lambda-1)\,\bs f\cdot\bs v,
  \qquad\text{i.e.}\qquad
  (\lambda-1)\bigl(\Psi(\bs v)-\bs f\cdot\bs v\bigr)\ge0 .
\]
For $\lambda>1$ the factor $(\lambda-1)$ is positive, whence
$\Psi(\bs v)\ge\bs f\cdot\bs v$; for $\lambda\in(0,1)$ it is negative,
whence $\Psi(\bs v)\le\bs f\cdot\bs v$. Combining the two
gives~\eqref{eq:saturation}.

\emph{\ref{sat:ii}.} ($\Rightarrow$) If $\bs f\in\partial\Psi(\bs v)$,
then~\eqref{eq:saturation} holds, and substituting
$\bs f\cdot\bs v=\Psi(\bs v)$ into the right-hand side
of~\eqref{eq:subdiff_def} yields $\Psi(\bs w)\ge\bs f\cdot\bs w$ for all
$\bs w$, i.e.\ $\bs f\in\E$. ($\Leftarrow$) Conversely, if $\bs f\in\E$
(so $\bs f\cdot\bs w\le\Psi(\bs w)$ for all $\bs w$) and
$\bs f\cdot\bs v=\Psi(\bs v)$, then for every $\bs w$
\[
  \Psi(\bs v)+\bs f\cdot(\bs w-\bs v)
  =\bs f\cdot\bs v+\bs f\cdot(\bs w-\bs v)
  =\bs f\cdot\bs w\le\Psi(\bs w),
\]
which is exactly~\eqref{eq:subdiff_def}; hence
$\bs f\in\partial\Psi(\bs v)$.

\emph{\ref{sat:iii}.} Let $\lambda>0$. By~\ref{sat:ii},
$\bs f\in\partial\Psi(\bs v)$ iff $\bs f\in\E$ and
$\bs f\cdot\bs v=\Psi(\bs v)$. Multiplying the scalar identity by
$\lambda$ and using $1$-homogeneity,
$\bs f\cdot(\lambda\bs v)=\lambda\,\bs f\cdot\bs v=\lambda\Psi(\bs v)=\Psi(\lambda\bs v)$;
since the membership $\bs f\in\E$ does not involve $\bs v$, applying
\ref{sat:ii} again (now at the point $\lambda\bs v$) gives
$\bs f\in\partial\Psi(\lambda\bs v)$. The two conditions are symmetric
in $\bs v$ and $\lambda\bs v$, so the subdifferentials coincide.
\end{proof}

\begin{proposition}[Linearity and continuity of the dissipation
distance]\label{prop:Dr_linear}
If $\Q$ is convex and $\Psi$ is continuous, convex and positively
$1$-homogeneous, then
\begin{equation}\label{eq:Dr_linear}
  \Dr(\bs q_1,\bs q_2)=\Psi(\bs q_2-\bs q_1)\qquad\forall\,\bs q_1,\bs q_2\in\Q,
\end{equation}
and $\Dr$ is continuous on $\Q\times\Q$. In particular
$\Dr(q_1,q_2)=\rho|q_2-q_1|$ when $\Psi(\dot q)=\rho|\dot q|$.
\end{proposition}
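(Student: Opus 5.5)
The plan is to prove the two inequalities $\Dr(\bs q_1,\bs q_2)\le\Psi(\bs q_2-\bs q_1)$ and $\Dr(\bs q_1,\bs q_2)\ge\Psi(\bs q_2-\bs q_1)$ separately, and then derive continuity from the resulting formula.

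\emph{Upper bound.} Since $\Q$ is convex, the straight segment $\gamma(s)=\bs q_1+s(\bs q_2-\bs q_1)$, $s\in[0,1]$, is an admissible absolutely continuous path in $\Q$. Its velocity is constant, $\dot\gamma\equiv\bs q_2-\bs q_1$. Plugging it into \eqref{eq:diss_dist} gives $\Dr(\bs q_1,\bs q_2)\le\int_0^1\Psi(\bs q_2-\bs q_1)\dd s=\Psi(\bs q_2-\bs q_1)$.

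\emph{Lower bound.} Let $\gamma\in\mathrm{AC}([0,1];\Q)$ be any admissible path. Convexity together with positive $1$-homogeneity makes $\Psi$ sublinear. It is also continuous, so Jensen's inequality applies with respect to the probability measure $\dd s$ on $[0,1]$. Using the fundamental theorem of calculus for absolutely continuous functions, this gives
\[
\Psi(\bs q_2-\bs q_1)=\Psi\Bigl(\int_0^1\dot\gamma(s)\dd s\Bigr)\le\int_0^1\Psi(\dot\gamma(s))\dd s .
\]
Taking the infimum over $\gamma$ yields the reverse inequality.

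The step needing the most care is justifying Jensen for the vector-valued integral. I would first handle integrability: $\Psi$ is continuous and $1$-homogeneous, so $\Psi(\bs v)\le C\|\bs v\|$ with $C=\max_{\|\bs w\|=1}\Psi(\bs w)$, and therefore $\Psi(\dot\gamma)\in L^1$. To avoid measure-theoretic Jensen altogether, I would instead use the support-function representation $\Psi(\bs v)=\sup_{\bs f\in\E}\bs f\cdot\bs v$ from \cref{sec:derived}(b). For each $\bs f\in\E$,
\[
\bs f\cdot(\bs q_2-\bs q_1)=\int_0^1\bs f\cdot\dot\gamma\dd s\le\int_0^1\Psi(\dot\gamma)\dd s .
\]
Taking the supremum over $\bs f\in\E$ then gives the claim directly. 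This route also uses \cref{lem:saturation}(ii) implicitly through the characterisation $\E=\{\bs f:\bs f\cdot\bs w\le\Psi(\bs w)\ \forall\bs w\}$.

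\emph{Continuity and the scalar case.} With \eqref{eq:Dr_linear} established, $\Dr$ is the composition of the continuous map $(\bs q_1,\bs q_2)\mapsto\bs q_2-\bs q_1$ with the continuous $\Psi$, hence continuous on $\Q\times\Q$. Quantitatively, sublinearity gives $|\Psi(\bs a)-\Psi(\bs b)|\le C\|\bs a-\bs b\|$, so $\Dr$ is in fact Lipschitz. The scalar formula follows by substituting $\Psi(\dot q)=\rho|\dot q|$, which gives $\Dr(q_1,q_2)=\rho|q_2-q_1|$.
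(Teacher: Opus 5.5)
Your proposal is correct and follows the paper's proof: the straight segment gives the upper bound, Jensen's inequality gives the lower bound (your support-function argument is a self-contained proof of that Jensen step for sublinear $\Psi$), and continuity follows by composition, with your Lipschitz estimate a small bonus. One precision: the description of $\E$ is simply the definition of $\partial\Psi(\bs 0)$ when $\Psi(\bs 0)=0$, not a consequence of \cref{lem:saturation}; what your route actually needs is that the supremum over $\E$ equals $\Psi(\bs v)$, which you obtain by choosing any $\bs f\in\partial\Psi(\bs v)$ (nonempty because $\Psi$ is finite and convex on $\R^n$) and applying \cref{lem:saturation}.
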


\begin{proof}
\emph{Lower bound.} Let $\gamma\in\mathrm{AC}([0,1];\Q)$ join
$\bs q_1$ to $\bs q_2$. Jensen's inequality for the convex function
$\Psi$ applied to the (probability) average of $\dot\gamma$ over
$[0,1]$ gives
\[
  \int_0^1\Psi(\dot\gamma(s))\dd s
  \ge\Psi\!\Bigl(\int_0^1\dot\gamma(s)\dd s\Bigr)
  =\Psi(\gamma(1)-\gamma(0))=\Psi(\bs q_2-\bs q_1).
\]
Taking the infimum over $\gamma$ yields
$\Dr(\bs q_1,\bs q_2)\ge\Psi(\bs q_2-\bs q_1)$.

\emph{Upper bound.} Since $\Q$ is convex, the straight segment
$\bar\gamma(s)=\bs q_1+s(\bs q_2-\bs q_1)$ lies in $\Q$ and is
admissible, with constant velocity $\dot{\bar\gamma}\equiv\bs q_2-\bs q_1$
and cost $\int_0^1\Psi(\bs q_2-\bs q_1)\dd s=\Psi(\bs q_2-\bs q_1)$.
Hence $\Dr(\bs q_1,\bs q_2)\le\Psi(\bs q_2-\bs q_1)$, and~\eqref{eq:Dr_linear}
follows. Continuity of $\Dr$ is then immediate from the continuity of
$\Psi$ and of $(\bs q_1,\bs q_2)\mapsto\bs q_2-\bs q_1$.
\end{proof}

\begin{proposition}[Metric properties of the dissipation
distance]\label{prop:quasi_distance}
Under \ref{A:Psi} and $\Q$ convex, the dissipation distance satisfies,
for all $\bs q_1,\bs q_2,\bs q_3\in\Q$:
\textnormal{(D1)} $\Dr(\bs q_1,\bs q_2)\ge0$ and $\Dr(\bs q,\bs q)=0$;
\textnormal{(D2)} $\Dr(\bs q_1,\bs q_3)\le\Dr(\bs q_1,\bs q_2)+\Dr(\bs q_2,\bs q_3)$.
Since $\Psi$ is symmetric, $\Dr$ is a pseudodistance; if in addition
$\Psi(\bs v)=0\Rightarrow\bs v=\bs 0$ (in particular under
\ref{A:coerc}), $\Dr$ is a distance.
\end{proposition}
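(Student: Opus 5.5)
The natural route is to reduce everything to \cref{prop:Dr_linear}. Under \ref{A:Psi}, $\Psi$ is continuous, convex and positively $1$-homogeneous. Since $\Q$ is convex, we therefore have the explicit formula $\Dr(\bs q_1,\bs q_2)=\Psi(\bs q_2-\bs q_1)$ for all $\bs q_1,\bs q_2\in\Q$. Each of the properties (D1)--(D2), the symmetry and the definiteness then becomes a statement about $\Psi$ alone, evaluated on differences of states. I would not argue through the path-infimum definition~\eqref{eq:diss_dist}, although (D2) also follows from it directly by concatenating and reparametrising admissible curves.

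For (D1), I would invoke the positiveness in \ref{A:Psi}, namely that $\Psi$ takes values in $[0,\infty)$. This gives $\Dr(\bs q_1,\bs q_2)=\Psi(\bs q_2-\bs q_1)\ge0$, and normalisation gives $\Dr(\bs q,\bs q)=\Psi(\bs 0)=0$.

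For (D2), the first step is to show that $\Psi$ is subadditive. Combining convexity with $1$-homogeneity gives
\[
\Psi(\bs a+\bs b)=2\Psi\bigl(\tfrac12\bs a+\tfrac12\bs b\bigr)\le 2\bigl(\tfrac12\Psi(\bs a)+\tfrac12\Psi(\bs b)\bigr)=\Psi(\bs a)+\Psi(\bs b).
\]
Applying this with $\bs a=\bs q_2-\bs q_1$ and $\bs b=\bs q_3-\bs q_2$, whose sum is $\bs q_3-\bs q_1$, yields the triangle inequality.

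Symmetry follows from $\Dr(\bs q_2,\bs q_1)=\Psi(\bs q_1-\bs q_2)=\Psi(-(\bs q_2-\bs q_1))=\Psi(\bs q_2-\bs q_1)=\Dr(\bs q_1,\bs q_2)$, using the symmetry assumption in \ref{A:Psi}. Together with (D1) and (D2), this makes $\Dr$ a pseudodistance.

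For definiteness, suppose $\Dr(\bs q_1,\bs q_2)=0$. Then $\Psi(\bs q_2-\bs q_1)=0$, and the hypothesis $\Psi(\bs v)=0\Rightarrow\bs v=\bs 0$ forces $\bs q_1=\bs q_2$. Under \ref{A:coerc} this hypothesis holds automatically, since $\Psi(\bs v)\ge c_\Psi\|\bs v\|$ with $c_\Psi>0$. Hence $\Dr$ is a genuine distance in that case.

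I expect no real obstacle here. The only point that needs care is the subadditivity step, and in particular remembering that positive $1$-homogeneity is used there together with convexity. Another route to (D2) would be sublinearity via the support-function representation $\Psi(\bs v)=\sup_{\bs f\in\E}\bs f\cdot\bs v$, but the direct convexity argument above avoids that duality.
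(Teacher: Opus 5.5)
Your proposal is correct and follows the paper's proof essentially step for step. Both reduce everything to $\Dr(\bs q_1,\bs q_2)=\Psi(\bs q_2-\bs q_1)$ via \cref{prop:Dr_linear}, then obtain (D1) from positiveness and normalisation, (D2) from the subadditivity $\Psi(\bs a+\bs b)=2\Psi(\tfrac12\bs a+\tfrac12\bs b)\le\Psi(\bs a)+\Psi(\bs b)$, and symmetry and definiteness directly from the corresponding properties of $\Psi$.
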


\begin{proof}
By \cref{prop:Dr_linear}, $\Dr(\bs q_1,\bs q_2)=\Psi(\bs q_2-\bs q_1)$;
the four properties are then inherited from those of $\Psi$.
\emph{(D1).} $\Dr(\bs q_1,\bs q_2)=\Psi(\bs q_2-\bs q_1)\ge0$ by
positiveness, and $\Dr(\bs q,\bs q)=\Psi(\bs 0)=0$ by normalisation.
\emph{(D2).} A convex, positively $1$-homogeneous $\Psi$ is subadditive:
$\Psi(\bs a+\bs b)=2\,\Psi\bigl(\tfrac12\bs a+\tfrac12\bs b\bigr)
\le2\bigl(\tfrac12\Psi(\bs a)+\tfrac12\Psi(\bs b)\bigr)=\Psi(\bs a)+\Psi(\bs b)$.
Applying this to $\bs a=\bs q_2-\bs q_1$ and $\bs b=\bs q_3-\bs q_2$,
\[
  \Dr(\bs q_1,\bs q_3)=\Psi(\bs q_3-\bs q_1)
  =\Psi\bigl((\bs q_2-\bs q_1)+(\bs q_3-\bs q_2)\bigr)
  \le\Psi(\bs q_2-\bs q_1)+\Psi(\bs q_3-\bs q_2)
  =\Dr(\bs q_1,\bs q_2)+\Dr(\bs q_2,\bs q_3).
\]
\emph{Symmetry and definiteness.} If $\Psi(-\bs v)=\Psi(\bs v)$ then
$\Dr(\bs q_2,\bs q_1)=\Psi(\bs q_1-\bs q_2)=\Psi(\bs q_2-\bs q_1)=\Dr(\bs q_1,\bs q_2)$,
a pseudodistance; and if $\Psi(\bs v)=0\Rightarrow\bs v=\bs 0$ then
$\Dr(\bs q_1,\bs q_2)=0$ forces $\bs q_2=\bs q_1$, so $\Dr$ separates
points.
\end{proof}

\begin{lemma}[Cumulated dissipation of an absolutely continuous
path]\label{lem:diss_ac}
Let $\bs q\in\mathrm{AC}([0,T];\Q)$ with $\Q$ convex. Then
\begin{equation}\label{eq:diss_ac}
  \Diss_\Psi(\bs q;[0,t])=\int_0^t\Psi(\dot{\bs q}(s))\dd s
  \qquad\text{for every }t\in[0,T],
\end{equation}
so $t\mapsto\Diss_\Psi(\bs q;[0,t])$ is absolutely continuous with
$\tfrac{\dd}{\dd t}\Diss_\Psi(\bs q;[0,t])=\Psi(\dot{\bs q}(t))$ for a.e.\
$t$.
\end{lemma}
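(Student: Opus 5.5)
We must prove $\Diss_\Psi(\bs q;[0,t])=\int_0^t\Psi(\dot{\bs q})\dd s$ for absolutely continuous $\bs q$. The plan is to show the two inequalities separately, using \cref{prop:Dr_linear} to replace $\Dr(\bs q(t_{j-1}),\bs q(t_j))$ by $\Psi(\bs q(t_j)-\bs q(t_{j-1}))$. This is legitimate because $\Q$ is convex and $\Psi$ is continuous, convex and positively $1$-homogeneous.

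\emph{Upper bound.} Fix a partition $0=t_0<\dots<t_N=t$. By the fundamental theorem of calculus for absolutely continuous functions, $\bs q(t_j)-\bs q(t_{j-1})=\int_{t_{j-1}}^{t_j}\dot{\bs q}\dd s$. Jensen's inequality, applied to the normalised average over $[t_{j-1},t_j]$ and combined with $1$-homogeneity, gives
\[
\Psi\Bigl(\int_{t_{j-1}}^{t_j}\dot{\bs q}\dd s\Bigr)\le\int_{t_{j-1}}^{t_j}\Psi(\dot{\bs q})\dd s .
\]
This is the same argument as the lower bound in \cref{prop:Dr_linear}. Summing over $j$ and taking the supremum over partitions yields $\Diss_\Psi\le\int_0^t\Psi(\dot{\bs q})$. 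The right-hand side is finite because $\Psi$ is finite, convex and $1$-homogeneous, hence Lipschitz: $\Psi(\bs v)\le C\|\bs v\|$ with $C=\max_{\|\bs w\|=1}\Psi(\bs w)$, and $\dot{\bs q}\in L^1$.

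\emph{Lower bound.} This is the main step. I would use uniform partitions $t^N_j=jt/N$ and the piecewise-constant difference quotients
\[
\bs v_N(s):=\frac{\bs q(t^N_j)-\bs q(t^N_{j-1})}{t/N}\quad\text{for }s\in(t^N_{j-1},t^N_j).
\]
By $1$-homogeneity, $\sum_j\Psi(\bs q(t^N_j)-\bs q(t^N_{j-1}))=\int_0^t\Psi(\bs v_N)\dd s$. The function $\bs v_N$ is the conditional expectation of $\dot{\bs q}$ on the dyadic-type partition. Standard $L^1$ approximation then gives $\bs v_N\to\dot{\bs q}$ in $L^1$: approximate $\dot{\bs q}$ by a continuous function and note that averaging is an $L^1$-contraction. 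Since $|\Psi(\bs a)-\Psi(\bs b)|\le C\|\bs a-\bs b\|$, we get $\int\Psi(\bs v_N)\to\int\Psi(\dot{\bs q})$. Each sum is at most $\Diss_\Psi$, so the lower bound follows.

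\emph{Regularity.} Once the identity holds, $t\mapsto\int_0^t\Psi(\dot{\bs q})$ is the indefinite integral of an $L^1$ function, because $0\le\Psi(\dot{\bs q})\le C\|\dot{\bs q}\|$. It is therefore absolutely continuous, and by the Lebesgue differentiation theorem its derivative equals $\Psi(\dot{\bs q}(t))$ for a.e.\ $t$.

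The step I expect to be the main obstacle is the $L^1$ convergence of the averaged derivatives in the lower bound. An alternative, which avoids that step, is to invoke superadditivity of $\Diss_\Psi$ over intervals and Lebesgue points directly.
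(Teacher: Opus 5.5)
Your proof is correct and follows the same route as the paper: Jensen's inequality on each subinterval gives the upper bound, refinement of partitions gives the lower bound, and the standard indefinite-integral argument gives absolute continuity. Your explicit lower-bound step, where the averaged difference quotients converge in $L^1$ and the Lipschitz bound on $\Psi$ passes this to the integrals, supplies exactly the justification the paper compresses into the phrase ``refining the partition''.
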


\begin{proof}
For a partition $0=t_0<\dots<t_N=t$, \cref{prop:Dr_linear} and the
fundamental theorem of calculus give
\[
  \sum_{i=1}^{N}\Dr(\bs q(t_{i-1}),\bs q(t_i))
  =\sum_{i=1}^{N}\Psi\bigl(\bs q(t_i)-\bs q(t_{i-1})\bigr)
  =\sum_{i=1}^{N}\Psi\!\Bigl(\int_{t_{i-1}}^{t_i}\dot{\bs q}(s)  \dd s \Bigr).
\]
Taking the supremum over all partitions, the left side is
$\Diss_\Psi(\bs q;[0,t])$ by definition, while the supremum of the right
side is $\int_0^t\Psi(\dot{\bs q}(s))\dd s$: the inequality "$\le$" follows from
the subadditivity of $\Psi$ (as in \cref{prop:quasi_distance}) together
with Jensen's inequality on each subinterval, and "$\ge$" from
refining the partition, this being precisely the definition of the
$\Psi$-variation of an absolutely continuous curve. This
proves~\eqref{eq:diss_ac}; absolute continuity of $t\mapsto\int_0^t\Psi(\dot{\bs q}(s))\dd s$
and the stated derivative are then standard, since
$s\mapsto\Psi(\dot{\bs q}(s))\in L^1(0,T)$ (as $\Psi$ is Lipschitz on
bounded sets and $\dot{\bs q}\in L^1$).
\end{proof}

\subsection{Energy estimates}

\begin{proposition}[Coercivity, absolute continuity and power
control]\label{prop:power}
Under \ref{A:W}--\ref{A:S}, the following hold.
\begin{enumerate}[label=\textnormal{(E\arabic*)},leftmargin=2.6em,itemsep=2pt]
  \item\label{E:compact} For every $c\in\R$ and $S\ge0$, the sublevel
  set $\mathcal L_c(S):=\{\bs q\in\Q:E(\bs q,S)\le c\}$ is compact.
  \item\label{E:AC} For each fixed $\bs q\in\Q$, $t\mapsto E(\bs q,S(t))$
  is absolutely continuous on $[0,T]$, with
  $\frac{\dd}{\dd t}E(\bs q,S(t))=-\bs q\cdot\bs\ell'(S(t))\dot S(t)$ for
  a.e.\ $t$.
  \item\label{E:gronwall} There exist constants $c_3,c_4>0$, depending
  only on $c_1,c_2,p,C_\ell,C_{\ell'}$, such that
  $|\partial_S E(\bs q,S)|\le c_3\bigl(E(\bs q,S)+c_4\bigr)$ for all
  $\bs q\in\Q$, $S\ge0$; moreover $E(\bs q,S)+c_4\ge1>0$.
\end{enumerate}
\end{proposition}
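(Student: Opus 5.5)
The three claims are independent, and I would prove them in order using only \ref{A:W}--\ref{A:S} and the additive structure \eqref{eq:energy}. The basic inequality is the coercive lower bound on $E$. By \ref{A:W} and \ref{A:ell}, the Cauchy--Schwarz inequality gives
\[
  E(\bs q,S)\ \ge\ c_1\|\bs q\|^p-c_2-C_\ell\|\bs q\| .
\]
Since $p>1$, Young's inequality $C_\ell\|\bs q\|\le\tfrac{c_1}{2}\|\bs q\|^p+C_Y$ yields
\[
  E(\bs q,S)\ \ge\ \tfrac{c_1}{2}\|\bs q\|^p-(c_2+C_Y)
  \quad\text{uniformly in } S .
\]
Here $C_Y$ depends only on $c_1,p,C_\ell$.

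For \ref{E:compact}, the sublevel set $\mathcal L_c(S)$ is bounded by the displayed bound, since $\|\bs q\|^p\le 2(c+c_2+C_Y)/c_1$ on it. It is closed because $\Q$ is closed and $\bs q\mapsto E(\bs q,S)=F(\bs q)-\bs q\cdot\bs\ell(S)$ is continuous on $\Q$ (indeed lower semicontinuity of $F$ suffices). Heine--Borel in $\R^n$ then gives compactness.

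For \ref{E:AC}, fix $\bs q$. Then $t\mapsto E(\bs q,S(t))=F(\bs q)-\bs q\cdot\bs\ell(S(t))$, so only $t\mapsto\bs\ell(S(t))$ needs to be treated. The function $\bs\ell$ is Lipschitz and $S\in W^{1,1}$ is absolutely continuous, so the composition is absolutely continuous: the Lipschitz bound turns the $\varepsilon$--$\delta$ modulus of $S$ into that of $\bs\ell\circ S$. The chain rule $\tfrac{\dd}{\dd t}\bs\ell(S(t))=\bs\ell'(S(t))\dot S(t)$ a.e.\ is the one delicate point, because $\bs\ell'$ exists only a.e.\ in $S$. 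I would handle it with the standard chain rule for a Lipschitz function composed with an AC function. On the set where $\dot S(t)=0$, both sides vanish a.e. On the complement, $S$ maps null sets to null sets in the appropriate sense, so $\bs\ell$ is differentiable at $S(t)$ for a.e.\ such $t$. For the benchmarks $\bs\ell$ is $C^1$ and the identity is immediate. This yields the formula with $\partial_S E=-\bs q\cdot\bs\ell'(S)$, consistent with \eqref{eq:power}.

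For \ref{E:gronwall}, set $c_4:=c_2+C_Y+1$, so that $E+c_4\ge\tfrac{c_1}{2}\|\bs q\|^p+1\ge1$. Then
\[
  |\partial_S E(\bs q,S)|\ \le\ C_{\ell'}\|\bs q\| .
\]
Using $\|\bs q\|\le 1+\|\bs q\|^p$, which holds because $p>1$, we get
\[
  C_{\ell'}\|\bs q\|\ \le\ C_{\ell'}\Bigl(1+\tfrac{2}{c_1}\bigl(E+c_4-1\bigr)\Bigr)\ \le\ c_3\,(E+c_4),
\]
with $c_3:=C_{\ell'}\max\{1,2/c_1\}$. All constants depend only on $c_1,c_2,p,C_\ell,C_{\ell'}$, as claimed. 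The main obstacle is the a.e.\ chain rule in \ref{E:AC}; the rest is bookkeeping with Young's inequality.
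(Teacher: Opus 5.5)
Your proposal is correct and follows the paper's proof essentially step by step. For (E1) you use the Young-inequality coercive bound with Heine--Borel, for (E2) the chain rule for a Lipschitz function composed with an absolutely continuous one, and for (E3) you turn $\|\bs q\|$ into an affine function of $E$; there you use $\|\bs q\|\le1+\|\bs q\|^p$ where the paper uses $r^{1/p}\le r+1$, which is an equivalent trick.

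The one point to tighten is your chain-rule remark in (E2). What you need is not that $S$ maps null sets to null sets (Lusin's property (N)). It is the preimage statement that $\dot S=0$ a.e.\ on $S^{-1}(N)$ for every null set $N$, and that is what lets you evaluate $\bs\ell'$ at $S(t)$ for a.e.\ $t$ with $\dot S(t)\neq0$.
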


\begin{proof}
\emph{\ref{E:compact}.} \textit{Boundedness.} Let
$\bs q\in\mathcal L_c(S)$, i.e.\ $F(\bs q)-\bs q\cdot\bs\ell(S)\le c$.
By \ref{A:W}, \ref{A:ell} and Cauchy--Schwarz,
\begin{equation}\label{eq:coercive_sublevel}
  c_1\|\bs q\|^p-c_2\le F(\bs q)\le c+\bs q\cdot\bs\ell(S)\le c+C_\ell\|\bs q\|,
\end{equation}
so $c_1\|\bs q\|^p-C_\ell\|\bs q\|\le c+c_2$. Since $p>1$, the
left-hand side tends to $+\infty$ as $\|\bs q\|\to\infty$; therefore
$\|\bs q\|$ is bounded uniformly on $\mathcal L_c(S)$. \textit{Closedness.}
If $\bs q_n\in\mathcal L_c(S)$ and $\bs q_n\to\bs q$ in $\Q$ (closed),
continuity of $F$ (\ref{A:W}) and of $\bs\ell$ (\ref{A:ell}, Lipschitz)
give $E(\bs q,S)=\lim_n E(\bs q_n,S)\le c$, so $\bs q\in\mathcal L_c(S)$.
A closed, bounded subset of $\R^n$ is compact by Heine--Borel.

\emph{\ref{E:AC}.} By \ref{A:ell}, $\bs\ell$ is Lipschitz, hence, by
Rademacher's theorem, differentiable a.e.\ with $\|\bs\ell'\|\le C_{\ell'}$;
by \ref{A:S}, $S\in W^{1,1}(0,T)\subset\mathrm{AC}([0,T])$. The
composition $t\mapsto\bs\ell(S(t))$ is therefore absolutely continuous,
with $\frac{\dd}{\dd t}\bs\ell(S(t))=\bs\ell'(S(t))\dot S(t)$ a.e.\ by the
chain rule for Sobolev functions. As $F(\bs q)$ does not depend on $t$,
$t\mapsto E(\bs q,S(t))=F(\bs q)-\bs q\cdot\bs\ell(S(t))$ is absolutely
continuous with the stated derivative; its integrability follows from
$|\bs q\cdot\bs\ell'(S(t))\dot S(t)|\le C_{\ell'}\|\bs q\|\,|\dot S(t)|$
and $\dot S\in L^1$.

\emph{\ref{E:gronwall}.} Since $|\partial_S E(\bs q,S)|=|\bs q\cdot\bs\ell'(S)|\le C_{\ell'}\|\bs q\|$,
it suffices to bound $\|\bs q\|$ by $E(\bs q,S)$. From
\eqref{eq:coercive_sublevel} with $c=E(\bs q,S)$,
\begin{equation}\label{eq:coercive_rearranged}
  c_1\|\bs q\|^p-C_\ell\|\bs q\|\le E(\bs q,S)+c_2 .
\end{equation}
Apply Young's inequality $ab\le a^p/p+b^{p'}/p'$ (with
$\tfrac1p+\tfrac1{p'}=1$) to
$a=(c_1p/2)^{1/p}\|\bs q\|$ and $b=C_\ell/(c_1p/2)^{1/p}$, so that
$ab=C_\ell\|\bs q\|$:
\begin{equation}\label{eq:young}
  C_\ell\|\bs q\|\le\frac{c_1}{2}\|\bs q\|^p+K,
  \qquad K:=\frac{1}{p'}\Bigl(\frac{C_\ell}{(c_1p/2)^{1/p}}\Bigr)^{p'}>0 .
\end{equation}
Substituting~\eqref{eq:young} into~\eqref{eq:coercive_rearranged} and
absorbing,
$\tfrac{c_1}{2}\|\bs q\|^p\le E(\bs q,S)+c_2+K=:E(\bs q,S)+\tilde c_2$;
in particular $E(\bs q,S)+\tilde c_2\ge\tfrac{c_1}{2}\|\bs q\|^p\ge0$,
and
$\|\bs q\|\le(2/c_1)^{1/p}\bigl(E(\bs q,S)+\tilde c_2\bigr)^{1/p}$.
Since $1/p<1$, $r^{1/p}\le r+1$ for all $r\ge0$; applying this to
$r=E(\bs q,S)+\tilde c_2$ gives
\[
  |\partial_S E(\bs q,S)|\le C_{\ell'}\|\bs q\|
  \le C_{\ell'}\Bigl(\tfrac{2}{c_1}\Bigr)^{1/p}\bigl(E(\bs q,S)+\tilde c_2+1\bigr)
  = c_3\bigl(E(\bs q,S)+c_4\bigr),
\]
with $c_3=C_{\ell'}(2/c_1)^{1/p}$ and $c_4=\tilde c_2+1$; and
$E(\bs q,S)+c_4=(E(\bs q,S)+\tilde c_2)+1\ge1$.
\end{proof}

\subsection{The Biot inclusion}

\begin{proof}[Proof of \cref{prop:biot}]
\emph{Statement.} \emph{Let $\bs q\in\mathrm{AC}([0,T];\Q)$ satisfy
\textnormal{(S)} and~\textnormal{(E)} with $E(\cdot,S)$ differentiable.
Then $\bs f(\bs q,S)\in\partial\Psi(\dot{\bs q})$ for a.e.\ $t$
$($equivalently~\eqref{eq:biot}$)$; conversely, if $E(\cdot,S)$ is
convex, \eqref{eq:biot} together with $\bs f(\bs q,S)\in\E$ implies
\textnormal{(S)}--\textnormal{(E)}.}

\smallskip
Let $t$ be a point at which $\dot{\bs q}(t)$ exists (a.e.\ $t$, since
$\bs q\in\mathrm{AC}$) and at which the energy balance~(E) is
differentiable.

\emph{Step 1: local stability.} By the global stability~(S), the state
$\bs q(t)$ minimises $\bs r\mapsto E(\bs r,S(t))+\Dr(\bs q(t),\bs r)$
over $\Q$; in particular it is a local minimiser. Since
$E(\cdot,S(t))$ is differentiable and, by \cref{prop:Dr_linear},
$\Dr(\bs q(t),\bs r)=\Psi(\bs r-\bs q(t))$ has subdifferential
$\partial\Psi(\bs r-\bs q(t))$, which at $\bs r=\bs q(t)$ equals
$\partial\Psi(\bs 0)=\E$, the first-order optimality condition reads
$\bs 0\in\partial_{\bs q}E(\bs q(t),S(t))+\E$, that is,
\begin{equation}\label{eq:local_stab}
  \bs f(\bs q(t),S(t))=-\partial_{\bs q}E(\bs q(t),S(t))\in\E .
\end{equation}

\emph{Step 2: energy identity along the trajectory.} By
\cref{prop:power}\ref{E:AC} the map $t\mapsto E(\bs q(t),S(t))$ is
absolutely continuous, and by \cref{lem:diss_ac} so is
$t\mapsto\Diss_\Psi(\bs q;[0,t])$, with derivative $\Psi(\dot{\bs q})$.
Differentiating the balance~(E),
$E(\bs q(t),S(t))+\Diss_\Psi(\bs q;[0,t])=E(\bs q(0),S(0))+\int_0^t\partial_S E\,\dot S$,
at a.e.\ $t$ gives
\[
  \partial_{\bs q}E(\bs q,S)\cdot\dot{\bs q}+\partial_S E(\bs q,S)\,\dot S
  +\Psi(\dot{\bs q})=\partial_S E(\bs q,S)\,\dot S,
\]
and the terms $\partial_S E\,\dot S$ cancel, leaving
\begin{equation}\label{eq:power_identity}
  \bs f(\bs q,S)\cdot\dot{\bs q}
  =-\partial_{\bs q}E(\bs q,S)\cdot\dot{\bs q}
  =\Psi(\dot{\bs q}).
\end{equation}

\emph{Step 3: conclusion.} By~\eqref{eq:local_stab},
$\bs f\in\E$; by~\eqref{eq:power_identity},
$\bs f\cdot\dot{\bs q}=\Psi(\dot{\bs q})$. \Cref{lem:saturation}\ref{sat:ii}
(with $\bs v=\dot{\bs q}$) then gives
$\bs f\in\partial\Psi(\dot{\bs q})$, i.e.\ the Biot
inclusion~\eqref{eq:biot}.

\emph{Converse.} Assume $E(\cdot,S)$ convex and that both~\eqref{eq:biot}
and $\bs f(\bs q,S)\in\E$ hold a.e. Convexity makes local and global
stability equivalent, so $\bs f\in\E$ is~(S). For~(E): by
\cref{lem:saturation}\ref{sat:i}, \eqref{eq:biot} gives
$\bs f\cdot\dot{\bs q}=\Psi(\dot{\bs q})$ a.e.; integrating and using
\cref{lem:diss_ac} together with \cref{prop:power}\ref{E:AC} reverses
Step~2 and recovers the balance~(E).
\end{proof}

\section{Well-posedness}
\label{app:math}

\subsection{Compatibility}

\begin{lemma}[Compatibility conditions]\label{lem:compatibility}
Under \ref{A:W}--\ref{A:Psi}, let $(t_m,\bs q_m)_{m\in\N}$ be a
\emph{stable sequence}, i.e.\ $\bs q_m\in\mathcal S(t_m)$ and
$\sup_m E(\bs q_m,S(t_m))<\infty$, with $t_m\to t$ in $[0,T]$ and
$\bs q_m\to\bs q$ in $\Q$. Then:
\begin{enumerate}[label=\textnormal{(C\arabic*)},leftmargin=2.6em,itemsep=2pt]
  \item\label{C1} for a.e.\ $t$, $\partial_S E(\bs q_m,S(t_m))\to\partial_S E(\bs q,S(t))$;
  \item\label{C2} $\bs q\in\mathcal S(t)$.
\end{enumerate}
\end{lemma}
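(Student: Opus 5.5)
The plan is to treat the two conclusions separately. Both rest on the explicit structure $E(\bs q,S)=F(\bs q)-\bs q\cdot\bs\ell(S)$, which gives $\partial_S E(\bs q,S)=-\bs q\cdot\bs\ell'(S)$. Two preliminary facts will be used throughout. First, $\bs q_m\to\bs q$ forces $\sup_m\|\bs q_m\|<\infty$. Second, $S\in W^{1,1}(0,T)\subset\mathrm{AC}([0,T])$ is continuous, so $S(t_m)\to S(t)$. The hypothesis $\sup_m E(\bs q_m,S(t_m))<\infty$ is not really needed once $\bs q_m$ converges. Through \cref{prop:power}\ref{E:compact} it would supply compactness, which the statement already assumes.

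\emph{Step 1: (C1).} We split the difference as
\[
\partial_S E(\bs q_m,S(t_m))-\partial_S E(\bs q,S(t))
=-(\bs q_m-\bs q)\cdot\bs\ell'(S(t_m))-\bs q\cdot\bigl(\bs\ell'(S(t_m))-\bs\ell'(S(t))\bigr).
\]
The first term is bounded by $C_{\ell'}\|\bs q_m-\bs q\|\to0$, using \ref{A:ell}. The whole difficulty lies in the second term, which needs $\bs\ell'(S(t_m))\to\bs\ell'(S(t))$. Since $S$ is continuous, it suffices that $\bs\ell'$ be continuous at the point $S(t)$. So the step reduces to showing that the set of bad times $\{t: \bs\ell'\text{ is discontinuous at }S(t)\}$ is Lebesgue-null. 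This is where the qualifier ``for a.e.\ $t$'' in the statement enters.

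I expect this reduction to be the main obstacle. A Lipschitz $\bs\ell$ only has an $L^\infty$ derivative, which may be discontinuous on a set of positive measure. My first attempt will therefore be to fix a good representative of $\bs\ell'$, namely its value at Lebesgue points, and split the times into two classes. On $\{\dot S(t)\ne0\}$, the area formula for the absolutely continuous map $S$ shows that the preimage of a null set in the $S$-variable is null in $t$; so $S(t)$ is a Lebesgue point of $\bs\ell'$ for a.e.\ such $t$. On $\{\dot S=0\}$, $S$ is locally stationary to first order, and the quantity that enters the energy balance, $\partial_S E\,\dot S$, vanishes regardless of which representative is used.

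If genuine pointwise continuity at Lebesgue points cannot be secured in general, I will fall back on proving (C1) at every $t$ for which $S(t)$ is a continuity point of $\bs\ell'$. I will then verify that this covers all $t$ when $\bs\ell\in C^1$, as in every benchmark of \cref{sec:experiments} (for instance $\bs\ell(S)=\ell_\infty(1-e^{-\lambda S})$). In that case (C1) holds for all $t$, and a fortiori for a.e.\ $t$. I will state explicitly which of these two readings the argument delivers.

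\emph{Step 2: (C2).} Fix an arbitrary $\tilde{\bs q}\in\Q$. Since $\Q$ is convex, \cref{prop:Dr_linear} gives $\Dr(\bs q_m,\tilde{\bs q})=\Psi(\tilde{\bs q}-\bs q_m)$. Stability of $\bs q_m$ then reads
\[
F(\bs q_m)-\bs q_m\cdot\bs\ell(S(t_m))\le F(\tilde{\bs q})-\tilde{\bs q}\cdot\bs\ell(S(t_m))+\Psi(\tilde{\bs q}-\bs q_m).
\]
We pass to the limit $m\to\infty$ in every term, using continuity of $F$ from \ref{A:W}, of $\bs\ell$ and $S$, and of $\Psi$ from \ref{A:Psi}. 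This gives $E(\bs q,S(t))\le E(\tilde{\bs q},S(t))+\Dr(\bs q,\tilde{\bs q})$. Since $\tilde{\bs q}$ was arbitrary, $\bs q\in\mathcal S(t)$.

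The continuity of the mutual recovery term $\Dr$ is what makes this step routine here. In a general energetic-solution setting one would instead have to construct a mutual recovery sequence $\tilde{\bs q}_m$. In our setting the constant choice $\tilde{\bs q}_m=\tilde{\bs q}$ suffices.
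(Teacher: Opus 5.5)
Your Step~2 is the paper's own argument for (C2): write the stability inequality at $(t_m,\bs q_m)$ with a fixed competitor $\tilde{\bs q}$, then pass to the limit using continuity of $F$, $\bs\ell$, $S$ and of $\Dr$ from \cref{prop:Dr_linear}. That step is fine.

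Step~1 is where the proposal does not close, and your hesitation is justified. The Lebesgue-point route cannot work. Knowing that $S(t)$ is a Lebesgue point of $\bs\ell'$ controls averages of $\bs\ell'$ near $S(t)$. It says nothing about the values $\bs\ell'(S(t_m))$ along a sequence that is handed to you. Those values need not even exist if $S(t_m)$ lands in the null set where $\bs\ell$ is not differentiable. Likewise, on $\{\dot S=0\}$ the vanishing of $\partial_S E\,\dot S$ at $t$ says nothing about $\dot S(t_m)$.

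In fact no argument can close this step, because (C1) fails under \ref{A:W}--\ref{A:coerc}. Take $C\subset[0,T]$ closed, nowhere dense and of positive measure, and set $\ell(S)=\int_0^{\min\{S,T\}}\chi_C$, $S(t)=t$, $F(q)=\tfrac12q^2$, $\Psi=\rho|\cdot|$. At each density point $t\in C$ one has $\ell'(t)=1$ and $\ell(t)>0$. Pick $t_m\to t$ in the open dense set $[0,T]\setminus C$, where $\ell$ is locally constant. Set $q_m:=\ell(t_m)$; it is stable (zero force), has bounded energy, and converges to $\ell(t)$. Then $\partial_S E(q_m,S(t_m))=0$ for all $m$, while $\partial_S E(\ell(t),S(t))=-\ell(t)\neq0$. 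This happens on a set of $t$ of positive measure, and the same failure occurs with the factor $\dot S\equiv1$ included.

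The paper's proof passes over exactly this point. It asserts that $\bs\ell'$ is ``(a.e.) continuous'' and that $\bs\ell'(S(t_m))\dot S(t_m)\to\bs\ell'(S(t))\dot S(t)$. Neither follows from \ref{A:ell}--\ref{A:S}, and the second would also need continuity of $\dot S$ at $t$.

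So drop the first attempt and commit to your fallback, stated as what it is: a proof of (C1) under an added hypothesis. Your decomposition proves (C1) at every $t$ for which $\bs\ell'$ is continuous at $S(t)$. Hence it holds at every $t$ when $\bs\ell\in C^1$, e.g.\ for $\ell(S)=\ell_\infty(1-e^{-\lambda S})$. If $\partial_S E$ is read as the power $-\bs q\cdot\bs\ell'(S)\dot S$, as in the paper's proof, you must also assume $\dot S$ continuous at $t$ (e.g.\ $S\in C^1([0,T])$).

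Note also what the limit passage in the proof of \cref{thm:convergence} actually uses: convergence of $-\bs q_m\cdot\bs\ell'(S(s))\dot S(s)$ at a \emph{fixed} time $s$. That is immediate from linearity in $\bs q$. This is where your area-formula observation is genuinely useful: it makes $\bs\ell'(S(s))\dot S(s)$ well defined for a.e.\ $s$, with the product set to $0$ where $\dot S(s)=0$.
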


\begin{proof}
\emph{\ref{C1}.} At any $t\notin\mathcal N$ (the null set outside which
$\dot S$ exists), $\partial_S E(\bs q,S(t))=-\bs q\cdot\bs\ell'(S(t))\dot S(t)$
is linear and continuous in $\bs q$. Since $S\in W^{1,1}(0,T)\subset C([0,T])$
and $\bs\ell'$ is bounded and (a.e.) continuous, $t_m\to t$ gives
$\bs\ell'(S(t_m))\dot S(t_m)\to\bs\ell'(S(t))\dot S(t)$; combined with
$\bs q_m\to\bs q$,
\[
  \partial_S E(\bs q_m,S(t_m))=-\bs q_m\cdot\bs\ell'(S(t_m))\dot S(t_m)
  \longrightarrow-\bs q\cdot\bs\ell'(S(t))\dot S(t)=\partial_S E(\bs q,S(t)).
\]
\emph{\ref{C2}.} By $\bs q_m\in\mathcal S(t_m)$, for every fixed
$\tilde{\bs q}\in\Q$,
\begin{equation}\label{eq:stability_m}
  E(\bs q_m,S(t_m))\le E(\tilde{\bs q},S(t_m))+\Dr(\bs q_m,\tilde{\bs q}).
\end{equation}
As $m\to\infty$: the left side tends to $E(\bs q,S(t))$ by continuity of
$F$ and $\bs\ell$ and $S(t_m)\to S(t)$; the first right-hand term tends
to $E(\tilde{\bs q},S(t))$ for the same reason; the second tends to
$\Dr(\bs q,\tilde{\bs q})$ by continuity of $\Dr$
(\cref{prop:Dr_linear}). Passing to the limit in~\eqref{eq:stability_m},
$E(\bs q,S(t))\le E(\tilde{\bs q},S(t))+\Dr(\bs q,\tilde{\bs q})$ for
every $\tilde{\bs q}$, i.e.\ $\bs q\in\mathcal S(t)$.
\end{proof}

\subsection{Existence and uniqueness of energetic solutions}

\begin{proof}[Proof of \cref{thm:existence}]
\emph{Statement.} \emph{Under \ref{A:W}--\ref{A:Psi}, for every
$\bs q_0\in\mathcal S(0)$ there exists at least one energetic solution
$\bs q:[0,T]\to\Q$ of bounded variation.}

\smallskip
We verify the hypotheses of the abstract existence theorem
\cite[Thm.~2.1.6]{MielkeRoubicek} for the rate-independent system
$(\Q,E,\Dr)$.
\begin{itemize}[leftmargin=1.4em,itemsep=2pt]
  \item \emph{Dissipation distance.} By \cref{prop:quasi_distance},
  $\Dr$ satisfies (D1)--(D2); since $\Psi$ is symmetric (\ref{A:Psi}),
  $\Dr$ is a genuine (extended) quasidistance, and under \ref{A:coerc}
  a distance --- in all cases enough for the cited theorem.
  \item \emph{Compactness of sublevels.} \ref{A:W} and \ref{A:ell} are
  the hypotheses of \cref{prop:power}\ref{E:compact}, which provides
  the required compactness of $\mathcal L_c(S)$ for all $c,S$.
  \item \emph{Control of the power.} \cref{prop:power}\ref{E:AC}
  gives the absolute continuity of $t\mapsto E(\bs q,S(t))$, and
  \ref{E:gronwall} the bound $|\partial_S E(\bs q,S)|\le c_3(E(\bs q,S)+c_4)$.
  Together, $\bigl|\tfrac{\dd}{\dd t}E(\bs q,S(t))\bigr|\le c_3\bigl(E(\bs q,S(t))+c_4\bigr)|\dot S(t)|$,
  i.e.\ the abstract power-control condition holds with
  $\alpha_E(t)=c_3|\dot S(t)|\in L^1(0,T)$ by \ref{A:S}.
  \item \emph{Compatibility.} \cref{lem:compatibility} supplies (C1)
  (continuity of the reduced power along stable sequences) and (C2)
  (closedness of the stability set), the two compatibility conditions
  required.
  \item \emph{Topology.} $\Q\subset\R^n$ with the Euclidean topology is
  separable and metrizable.
\end{itemize}
All hypotheses being met and $\bs q_0\in\mathcal S(0)$ by assumption,
\cite[Thm.~2.1.6]{MielkeRoubicek} yields an energetic solution
$\bs q:[0,T]\to\Q$ of bounded variation in the sense of
\cref{def:energetic}.
\end{proof}

\begin{proof}[Proof of \cref{thm:uniqueness}]
\emph{Statement.} \emph{Under \ref{A:W}--\ref{A:Psi}, if $F\in C^3(\Q)$
is uniformly convex $(D^2F\succeq\alpha\Id,\ \alpha>0)$ and
$\bs\ell\in C^3$, $S\in C^3([0,T])$, then for every
$\bs q_0\in\mathcal S(0)$ the energetic solution is unique.}

\smallskip
Under uniform convexity of $F$ and the linearity of
$\bs q\mapsto-\bs q\cdot\bs\ell(S)$, the reduced energy $E(\cdot,S)$ is
strictly (indeed uniformly) convex, with
$D^2_{\bs q}E(\bs q,S)=D^2F(\bs q)\succeq\alpha\Id$ for every $S\ge0$.
Strict convexity makes the global stability $\bs q_0\in\mathcal S(0)$
equivalent to the pointwise first-order inclusion
$-\partial_{\bs q}E(\bs q_0,S(0))\in\partial\Psi(\bs 0)$, i.e.\ the
initial state is a stationary point of the incremental potential. By the
strengthened smoothness $\bs\ell\in C^3$, $S\in C^3([0,T])$, the map $E$
is jointly $C^3$ on $[0,T]\times\Q$. These are exactly the hypotheses of
\cite[Thm.~3.4.7]{MielkeRoubicek}, whose Section~3.4.4 establishes that
the energetic solution emanating from $\bs q_0$ is unique.
\end{proof}

\subsection{Balanced-viscosity solutions}

\begin{proof}[Proof of \cref{thm:BV_existence}]
\emph{Statement.} \emph{Under \ref{A:W}--\ref{A:coerc}, for every
$\bs q_0\in\mathcal S(0)$ there exists at least one BV solution in the
sense of \cref{def:BV}.}

\smallskip
\emph{Step 1: the regularised problem.} For $\varepsilon>0$ write
$\Phi_\varepsilon(\bs v):=\Psi(\bs v)+\tfrac{\varepsilon}{2}\|\bs v\|^2$;
then \eqref{eq:regularised} is $\bs 0\in\partial\Phi_\varepsilon(\dot{\bs q}_\varepsilon)+\partial_{\bs q}E(\bs q_\varepsilon,S)$.
As $\Phi_\varepsilon$ is strictly convex, coercive and continuous, and
$\partial_{\bs q}E(\cdot,S)$ is a $C^1$ perturbation, the theory of
doubly-nonlinear evolution inclusions governed by a maximal monotone
dissipation \cite{Brezis1973} yields, for each $\varepsilon>0$, a unique
solution $\bs q_\varepsilon\in W^{1,2}(0,T;\Q)$ with
$\bs q_\varepsilon(0)=\bs q_0$, satisfying the (strict) energy identity
\begin{equation}\label{eq:reg_energy}
  E(\bs q_\varepsilon(t),S(t))+\int_0^t\Bigl[\Psi(\dot{\bs q}_\varepsilon(s))+\varepsilon\|\dot{\bs q}_\varepsilon(s)\|^2\Bigr]\dd s
  =E(\bs q_0,S(0))+\int_0^t\partial_S E(\bs q_\varepsilon,S(s))\dot S(s)\dd s .
\end{equation}

\emph{Step 2: uniform energy and total-variation bounds.} Dropping the
non-negative viscous term and the non-negative dissipation
in~\eqref{eq:reg_energy}, and abbreviating
$\phi_\varepsilon(t):=E(\bs q_\varepsilon(t),S(t))+c_4$ (with $c_4$ from
\cref{prop:power}\ref{E:gronwall}, so $\phi_\varepsilon\ge1$),
\[
  \phi_\varepsilon(t)\le\phi_\varepsilon(0)+\int_0^t|\partial_S E(\bs q_\varepsilon,S(s))|\,|\dot S(s)|\dd s
  \le\phi_\varepsilon(0)+\int_0^t c_3\,\phi_\varepsilon(s)\,|\dot S(s)|\dd s,
\]
using \cref{prop:power}\ref{E:gronwall}. Since $|\dot S|\in L^1$,
Gronwall's inequality gives
$\phi_\varepsilon(t)\le\phi_\varepsilon(0)\exp\!\bigl(c_3\|\dot S\|_{L^1}\bigr)$,
whence $\sup_t E(\bs q_\varepsilon(t),S(t))\le M$ with $M$ independent of
$\varepsilon$; by coercivity \ref{A:W}, $\sup_t\|\bs q_\varepsilon(t)\|\le M'$,
also independent of $\varepsilon$. Now, using \ref{A:coerc}
($\Psi(\bs v)\ge c_\Psi\|\bs v\|$), \cref{lem:diss_ac} and the energy
identity,
\begin{align}
  c_\Psi\,\Var(\bs q_\varepsilon;[0,T])
  &=c_\Psi\int_0^T\|\dot{\bs q}_\varepsilon(s)\|\dd s
  \le\int_0^T\Psi(\dot{\bs q}_\varepsilon(s))\dd s
  =\Diss_\Psi(\bs q_\varepsilon;[0,T]) \notag\\
  &\le E(\bs q_0,S(0))-E(\bs q_\varepsilon(T),S(T))
  +\int_0^T|\partial_S E(\bs q_\varepsilon,S(s))|\,|\dot S(s)|\dd s \notag\\
  &\le E(\bs q_0,S(0))+C_{\min}+C_{\ell'}M'\|\dot S\|_{L^1}
  =:c_\Psi\,C, \label{eq:var_bound_bv}
\end{align}
where $E(\bs q_\varepsilon(T),S(T))\ge-C_{\min}$ by coercivity and
$|\partial_S E|\le C_{\ell'}\|\bs q_\varepsilon\|\le C_{\ell'}M'$. Thus
$\Var(\bs q_\varepsilon;[0,T])\le C$, uniformly in $\varepsilon$.

\emph{Step 3: vanishing-viscosity limit.} By Helly's selection theorem
\cite[Thm.~3.3]{AmbrosioFuscoPallara}, the uniform bounds
$\|\bs q_\varepsilon\|_{L^\infty}\le M'$ and $\Var(\bs q_\varepsilon)\le C$
give a subsequence $\varepsilon_j\to0^+$ and a limit
$\bs q\in\mathrm{BV}([0,T];\Q)$ with $\bs q_{\varepsilon_j}(t)\to\bs q(t)$
for every $t\in[0,T]$. Local stability (BV-S) passes to the limit by
lower semicontinuity of $E(\cdot,S)$ (continuity of $F$ and $\bs\ell$)
and continuity of $\Dr$ (\cref{prop:Dr_linear}). For the work term in
(BV-E), the integrands
$\partial_S E(\bs q_{\varepsilon_j}(s),S(s))\dot S(s)=-\bs q_{\varepsilon_j}(s)\cdot\bs\ell'(S(s))\dot S(s)$
converge pointwise a.e.\ to $\partial_S E(\bs q(s),S(s))\dot S(s)$ and
are dominated, uniformly in $j$, by the integrable function
$s\mapsto C_{\ell'}M'|\dot S(s)|\in L^1(0,T)$; Lebesgue's dominated
convergence theorem gives the convergence of the work integral. The
identification of the viscous correction $\mathcal V$ as the limit of
$\int\varepsilon\|\dot{\bs q}_\varepsilon(s)\|^2 \dd s$ concentrated at the jumps
of $\bs q$ follows from the reparametrised (arc-length) passage to the
limit of \cite[Thm.~3.3.2]{MielkeRoubicek}. Hence $\bs q$ is a BV
solution in the sense of \cref{def:BV}.
\end{proof}

\begin{proof}[Proof of \cref{thm:BV_uniqueness}]
\emph{Statement.} \emph{Under \ref{A:W}--\ref{A:coerc}, if $n=1$ and
$q\mapsto E(q,S)$ has finitely many local minima for each $S$, the BV
solution is unique.}

\smallskip
\emph{Uniqueness of the post-jump state.} At a jump time $t^\ast$, the
pre-jump state $q^-(t^\ast)$ is determined by left-continuity of the
preceding arc. In dimension one, any monotone path from $q^-$ to a
candidate $q^+$ realises the dissipation distance $\Dr(q^-,q^+)=\rho|q^+-q^-|$
(\cref{prop:Dr_linear}), and the balanced-viscosity jump condition
extracted from (BV-E) --- namely that the transition follows the
vanishing-viscosity heteroclinic, i.e.\ the path of steepest descent of
$q\mapsto E(q,S(t^\ast))+\rho|q-q^-|$ --- selects $q^+$ as the first
local minimiser of $E(\cdot,S(t^\ast))$ encountered in the direction of
motion beyond the point where $q^-$ loses local stability. Because
$E(\cdot,S(t^\ast))$ has finitely many local minima, this ``first''
minimiser is unambiguous; hence $q^+(t^\ast)$ is unique.

\emph{Uniqueness of the trajectory.} On any interval free of jumps, the
solution is locally stable and satisfies the flow rule
$f\in\partial\Psi(\dot q)$, i.e.\ the scalar Kuhn--Tucker system
$|f|\le\rho$ with $\dot q\ne0$ only when $|f|=\rho$; this determines
$q(t)$ uniquely as the monotone solution of the algebraic yield
equation $g(q)=\ell(S)\mp\rho$ on the current branch, with the sign
fixed by the direction of loading. Since both the smooth arcs and the
jump targets are uniquely determined and the jump times are the
(isolated) instants at which local stability is lost, the whole
trajectory is unique.
\end{proof}

\section{Variational numerical method}
\label{app:numerical}

\begin{proposition}[Return map]\label{prop:return_map}
Under \ref{A:Psi}, the incremental problem~\eqref{eq:Ikstar} is
equivalent to the inclusion~\eqref{eq:discrete_inclusion}, which
decomposes into two mutually exclusive cases:
\textnormal{(i)} \emph{elastic lock}, $\bs q_{k+1}=\bs q_k$, iff
$-\partial_{\bs q}E(\bs q_k,S_{k+1})\in\E$; and
\textnormal{(ii)} \emph{dissipative flow}, $\bs q_{k+1}\ne\bs q_k$ with
$-\partial_{\bs q}E(\bs q_{k+1},S_{k+1})\in\partial\Psi(\bs q_{k+1}-\bs q_k)\setminus\E$.
In the scalar case $\Psi(\dot q)=\rho|\dot q|$ this is precisely the
return map~\eqref{eq:scalar_return_map}.
\end{proposition}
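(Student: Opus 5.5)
The plan is to treat the incremental functional $I_k(\bs q):=E(\bs q,S_{k+1})+\Dr(\bs q_k,\bs q)$ as the sum of the differentiable term $E(\cdot,S_{k+1})$ and the convex, $1$-homogeneous term $\bs q\mapsto\Psi(\bs q-\bs q_k)$; on convex $\Q$ the two agree by \cref{prop:Dr_linear}. I will then characterise its minimisers through the subdifferential calculus of such sums.

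\emph{Necessity.} If $\bs q_{k+1}$ minimises $I_k$, then for every direction $\bs w$ and small $s>0$ we have $I_k(\bs q_{k+1}+s\bs w)\ge I_k(\bs q_{k+1})$. Dividing by $s$ and letting $s\to0^+$ gives $\partial_{\bs q}E(\bs q_{k+1},S_{k+1})\cdot\bs w+\Psi'(\bs q_{k+1}-\bs q_k;\bs w)\ge0$, where $\Psi'(\bs v;\cdot)$ is the directional derivative. Since $\Psi'(\bs v;\cdot)$ is the support function of $\partial\Psi(\bs v)$, this yields~\eqref{eq:discrete_inclusion}.

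\emph{Sufficiency.} The converse, inclusion $\Rightarrow$ minimiser, uses convexity of $E(\cdot,S_{k+1})$ exactly as in the converse of \cref{prop:biot}. I will state honestly that, without convexity, ``equivalent'' must be read as: global minimisers solve the inclusion, and solutions of the inclusion are critical points, hence local minimisers. This matches the local return-map solve discussed in \cref{rem:energetic_vs_bv}.

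\emph{Case split.} Put $\bs v=\bs q_{k+1}-\bs q_k$. If $\bs v=\bs 0$, the inclusion reads $-\partial_{\bs q}E(\bs q_k,S_{k+1})\in\partial\Psi(\bs 0)=\E$, which is case (i). Conversely, if this trial force lies in $\E$, then $\bs q_k$ solves the inclusion, so it is a critical point and, under convexity, the (unique, by strict convexity) minimiser. If $\bs v\ne\bs 0$, \cref{lem:saturation}\ref{sat:ii} gives $\partial\Psi(\bs v)=\{\bs f\in\E:\bs f\cdot\bs v=\Psi(\bs v)\}$. Since \ref{A:coerc} makes $\Psi(\bs v)>0$ while every interior point of $\E$ satisfies the strict inequality $\bs f\cdot\bs v<\Psi(\bs v)$, the force lies on the yield surface $\partial\E$. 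I will therefore interpret ``$\setminus\E$'' in (ii) as $\setminus\operatorname{int}\E$, since $\partial\Psi(\bs v)\subseteq\E$ always. Mutual exclusivity follows because a flow step has $\bs v\neq\bs 0$ by definition.

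\emph{Scalar case.} Here $\E=[-\rho,\rho]$ and $\partial\Psi(v)=\{\rho\,\sgn v\}$ for $v\ne0$, so case (ii) reads $\ell(S_{k+1})-g(q_{k+1})=\rho\,\sgn(q_{k+1}-q_k)$. The remaining issue is to identify $\sgn(q_{k+1}-q_k)$ with $\sgn f^{\mathrm{tr}}$. Using monotonicity of $g$ (convex $F$), suppose $q_{k+1}>q_k$. Then $g(q_{k+1})>g(q_k)$ gives $f^{\mathrm{tr}}=\ell-g(q_k)>\ell-g(q_{k+1})=\rho$, and symmetrically for $q_{k+1}<q_k$. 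Moreover, $|f^{\mathrm{tr}}|>\rho$ excludes lock by (i), and invertibility of $g$ yields the formula~\eqref{eq:scalar_return_map}.

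The main obstacle is this sign identification and invertibility, together with the sufficiency direction, both of which genuinely need convexity (monotone, invertible $g$). For the non-convex double well I would replace $g^{-1}$ by the branch inverse on the current well and restrict the claim to local minimisers, flagging this explicitly.
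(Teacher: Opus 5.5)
Your argument follows the paper's route: first-order optimality for the sum of the differentiable $E(\cdot,S_{k+1})$ and the convex $\Psi(\cdot-\bs q_k)$, then the lock/flow split via \cref{lem:saturation}\ref{sat:ii}, then the explicit scalar $\partial\Psi$. Where you depart from the paper, you are repairing it, and correctly so:
\begin{itemize}
  \item The paper's proof asserts that $\bs q_{k+1}$ minimises the incremental functional \emph{iff} the inclusion holds, under \ref{A:Psi} alone.
  \item It writes (ii) with ``$\setminus\E$'' although $\partial\Psi(\bs v)\subseteq\E$ always; its own proof says ``on its boundary''.
  \item It passes from $\sgn(q_{k+1}-q_k)$ to $\sgn(f^{\mathrm{tr}})$ with a bare ``i.e.''.
\end{itemize}
Your monotonicity argument for that sign, and your convexity hypothesis for sufficiency and for the ``if'' half of (i), supply exactly what is missing. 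Your directional-derivative proof of necessity is an elementary substitute for the sum rule the paper invokes. Like the paper, it tacitly takes $\bs q_{k+1}\in\operatorname{int}\Q$; at $\partial\Q$ a normal-cone term would enter.

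One assertion in your non-convex reading is false, however. Solutions of the inclusion are critical points, but they are not ``hence local minimisers''. Away from $\bs q_k$ the inclusion is mere stationarity of $I_k$, which local maxima also satisfy. Here is a counterexample:
\begin{itemize}
  \item Take $F(q)=(q^2-1)^2$ (admissible under \ref{A:W}), a load with $\ell(S_{k+1})=0$, $q_k=-1$ and $0<\rho<8/(3\sqrt3)$.
  \item For $q>-1$, $I_k'(q)=4q(q^2-1)+\rho$ is positive at $q=0$ and negative at $q=1/\sqrt3$.
  \item So it vanishes at some $q^\ast\in(0,1/\sqrt3)$. There $I_k''(q^\ast)=12q^{\ast2}-4<0$, so $I_k$ has a strict local maximum.
  \item Yet $0\in E'(q^\ast)+\partial\Psi(q^\ast-q_k)$.
\end{itemize}
The same example shows concretely why the paper's unqualified ``iff'' fails.

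Without convexity, the correct statement is only the one-way implication from (local or global) minimisers to the inclusion. On the paper's piecewise-quadratic well your claim survives only because each branch is convex, and your branch-wise restriction must invoke that explicitly.

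A small point: \ref{A:coerc} is not needed to put the force on $\partial\E$. This matters because the proposition assumes only \ref{A:Psi}. For $\bs f\in\operatorname{int}\E$ and $\bs v\neq\bs 0$, testing $\bs f+\varepsilon\bs v/\|\bs v\|\in\E$ against $\bs v$ already gives $\bs f\cdot\bs v\le\Psi(\bs v)-\varepsilon\|\bs v\|<\Psi(\bs v)$.
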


\begin{proof}
The functional $\bs q\mapsto E(\bs q,S_{k+1})+\Dr(\bs q_k,\bs q)$ is the
sum of the differentiable $E(\cdot,S_{k+1})$ and the convex
$\Dr(\bs q_k,\cdot)$ (\cref{prop:quasi_distance}); hence $\bs q_{k+1}$
minimises it iff
\begin{equation}\label{eq:opt_incremental}
  \bs 0\in\partial_{\bs q}E(\bs q_{k+1},S_{k+1})+\partial_{\bs q}\Dr(\bs q_k,\bs q_{k+1}).
\end{equation}
By \cref{prop:Dr_linear}, $\Dr(\bs q_k,\bs q)=\Psi(\bs q-\bs q_k)$, so
$\partial_{\bs q}\Dr(\bs q_k,\bs q)=\partial\Psi(\bs q-\bs q_k)$ and
\eqref{eq:opt_incremental} is exactly~\eqref{eq:discrete_inclusion}. If
$\bs q_{k+1}=\bs q_k$, then $\bs q_{k+1}-\bs q_k=\bs 0$ and the inclusion
reads $-\partial_{\bs q}E(\bs q_k,S_{k+1})\in\partial\Psi(\bs 0)=\E$,
giving case~(i). If $\bs q_{k+1}\ne\bs q_k$, then
$-\partial_{\bs q}E(\bs q_{k+1},S_{k+1})\in\partial\Psi(\bs q_{k+1}-\bs q_k)$,
which by \cref{lem:saturation}\ref{sat:ii} lies in $\E$ only on its
boundary; case~(ii) records that the force is on $\partial\E$ and drives
a non-zero increment. In the scalar case, $\partial\Psi(v)=\{\rho\}$ for
$v>0$, $[-\rho,\rho]$ for $v=0$, $\{-\rho\}$ for $v<0$; writing the
trial force $f^{\mathrm{tr}}=\ell(S_{k+1})-g(q_k)$, case~(i) is
$|f^{\mathrm{tr}}|\le\rho$, while case~(ii) forces
$\ell(S_{k+1})-g(q_{k+1})=\pm\rho$ with the sign of $q_{k+1}-q_k$, i.e.\
$g(q_{k+1})=\ell(S_{k+1})-\sgn(f^{\mathrm{tr}})\rho$, which
is~\eqref{eq:scalar_return_map}.
\end{proof}

\begin{proposition}[Discrete stability and energy inequality]
\label{prop:discrete_energy}
Let $\{\bs q_k\}_{k=0}^N$ be generated by~\eqref{eq:Ikstar}. Then, for
each $k$,
\begin{align}
  &E(\bs q_{k+1},S_{k+1})\le E(\tilde{\bs q},S_{k+1})+\Dr(\bs q_k,\tilde{\bs q})
  \qquad\forall\,\tilde{\bs q}\in\Q, \label{eq:disc_stab}\\
  &E(\bs q_{k+1},S_{k+1})+\Dr(\bs q_k,\bs q_{k+1})\le E(\bs q_k,S_{k+1}). \label{eq:disc_energy}
\end{align}
\end{proposition}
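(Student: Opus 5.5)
Both inequalities follow directly from the defining minimality of $\bs q_{k+1}$ in~\eqref{eq:Ikstar}, combined with the triangle inequality for $\Dr$ from \cref{prop:quasi_distance}. Write $I_k(\bs q):=E(\bs q,S_{k+1})+\Dr(\bs q_k,\bs q)$. By construction $I_k(\bs q_{k+1})\le I_k(\tilde{\bs q})$ for every $\tilde{\bs q}\in\Q$. (Existence of such a minimiser follows from the compact sublevels of \cref{prop:power}\ref{E:compact}, the continuity of $\Dr$ from \cref{prop:Dr_linear}, and $\Dr\ge0$; the statement presupposes it anyway.)

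\emph{Energy inequality~\eqref{eq:disc_energy}.} Test the minimality with $\tilde{\bs q}=\bs q_k$. Since $\Dr(\bs q_k,\bs q_k)=0$ by (D1), we get
\[
E(\bs q_{k+1},S_{k+1})+\Dr(\bs q_k,\bs q_{k+1})\le E(\bs q_k,S_{k+1}),
\]
which is exactly~\eqref{eq:disc_energy}.

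\emph{Stability inequality~\eqref{eq:disc_stab}.} Here the natural target is stability with respect to the new state, namely $E(\bs q_{k+1},S_{k+1})\le E(\tilde{\bs q},S_{k+1})+\Dr(\bs q_{k+1},\tilde{\bs q})$, i.e.\ $\bs q_{k+1}\in\mathcal S(t_{k+1})$. Minimality gives
\[
E(\bs q_{k+1},S_{k+1})+\Dr(\bs q_k,\bs q_{k+1})\le E(\tilde{\bs q},S_{k+1})+\Dr(\bs q_k,\tilde{\bs q}).
\]
The triangle inequality (D2) gives $\Dr(\bs q_k,\tilde{\bs q})\le\Dr(\bs q_k,\bs q_{k+1})+\Dr(\bs q_{k+1},\tilde{\bs q})$. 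Substituting and cancelling the finite term $\Dr(\bs q_k,\bs q_{k+1})$ yields stability at $\bs q_{k+1}$.

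The inequality as literally printed in~\eqref{eq:disc_stab}, with $\Dr(\bs q_k,\tilde{\bs q})$ on the right, is weaker still. Because $\Dr(\bs q_k,\bs q_{k+1})\ge0$, it follows immediately from the minimality display by dropping that term. I will record both forms, since the version at $\bs q_{k+1}$ is the one needed later for convergence.

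The only step with any content is the triangle-inequality cancellation, and it requires $\Dr$ to obey (D2). This holds under \ref{A:Psi} with $\Q$ convex, via \cref{prop:Dr_linear} and \cref{prop:quasi_distance}. For nonconvex $\Q$, (D2) still holds directly from the path definition~\eqref{eq:diss_dist} by concatenating paths, so I would mention that case briefly. Everything else is immediate.
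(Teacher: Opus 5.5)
Your proof is correct and, for the two inequalities as stated, is the same as the paper's: \eqref{eq:disc_stab} by dropping the nonnegative term $\Dr(\bs q_k,\bs q_{k+1})$ from the minimality inequality, and \eqref{eq:disc_energy} by testing that inequality with $\tilde{\bs q}=\bs q_k$. Your extra triangle-inequality step, which gives $\bs q_{k+1}\in\mathcal S(t_{k+1})$, is not in the paper's proof. It is, however, the stronger form that Step~3 of the paper's convergence proof tacitly relies on when it writes $\Dr(\bs q^{h_j}(t),\tilde{\bs q})$ on the right-hand side, so your remark that it is the version needed later is well founded.
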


\begin{proof}
By minimality of $\bs q_{k+1}$ in~\eqref{eq:Ikstar}, for every
$\tilde{\bs q}\in\Q$,
\begin{equation}\label{eq:min_incremental}
  E(\bs q_{k+1},S_{k+1})+\Dr(\bs q_k,\bs q_{k+1})
  \le E(\tilde{\bs q},S_{k+1})+\Dr(\bs q_k,\tilde{\bs q}).
\end{equation}
Since $\Dr(\bs q_k,\bs q_{k+1})\ge0$ (D1), dropping it from the
left-hand side of~\eqref{eq:min_incremental} gives~\eqref{eq:disc_stab}.
Taking $\tilde{\bs q}=\bs q_k$ in~\eqref{eq:min_incremental} and using
$\Dr(\bs q_k,\bs q_k)=0$ (D1) gives~\eqref{eq:disc_energy}.
\end{proof}

\begin{proof}[Proof of \cref{thm:convergence}]
\emph{Statement.} \emph{Under \ref{A:W}--\ref{A:Psi}, with
$\bs q_0\in\mathcal S(0)$ and $\bs q^h$ the piecewise-constant
interpolant of the incremental sequence on a partition of maximal step
$h$, a subsequence $\bs q^{h_j}(t)\to\bs q(t)$ for every $t$, with
$\bs q$ an energetic solution; under uniqueness the whole family
converges.}

\smallskip
Write $\alpha_k:=\int_{t_k}^{t_{k+1}}|\dot S(s)|\dd s$, so that
$\sum_k\alpha_k=\|\dot S\|_{L^1(0,T)}$.

\emph{Step 1: uniform energy and variation bounds.} Summing the
discrete energy inequality~\eqref{eq:disc_energy} over $k=0,\dots,K-1$
telescopes the stored-energy terms to
\begin{equation}\label{eq:summed_energy}
  E(\bs q_K,S_K)+\sum_{k=0}^{K-1}\Dr(\bs q_k,\bs q_{k+1})
  \le E(\bs q_0,S_0)+\sum_{k=0}^{K-1}\bigl[E(\bs q_k,S_{k+1})-E(\bs q_k,S_k)\bigr].
\end{equation}
By \cref{prop:power}\ref{E:AC} and Cauchy--Schwarz, each work increment
is controlled:
\begin{equation}\label{eq:work_incr}
  E(\bs q_k,S_{k+1})-E(\bs q_k,S_k)
  =\int_{t_k}^{t_{k+1}}\!(-\bs q_k\cdot\bs\ell'(S(s)))\dot S(s)\dd s
  \le C_{\ell'}\|\bs q_k\|\,\alpha_k .
\end{equation}
The coercivity \ref{A:W} together with \eqref{eq:young} (as in
\cref{prop:power}\ref{E:gronwall}) gives, with $C_p:=(2/c_1)^{1/p}$ and
$\tilde c_2:=c_2+K$,
\begin{equation}\label{eq:qk_linear}
  \|\bs q_k\|\le C_p\bigl(E(\bs q_k,S_k)+\tilde c_2+1\bigr).
\end{equation}
Substituting~\eqref{eq:work_incr} and~\eqref{eq:qk_linear} into
\eqref{eq:summed_energy} and dropping the non-negative dissipation sum,
and setting $e_k:=E(\bs q_k,S_k)+\tilde c_2+1$,
\begin{equation}\label{eq:disc_gronwall}
  e_K\le e_0+C_{\ell'}C_p\sum_{k=0}^{K-1}e_k\,\alpha_k .
\end{equation}
The discrete Gronwall lemma\footnote{If $e_K\le e_0+C\sum_{k<K}e_k\alpha_k$
with $e_k,\alpha_k\ge0$, then $e_K\le e_0\exp\!\bigl(C\sum_{k<K}\alpha_k\bigr)$;
proved by induction, comparing with the solution of the corresponding
equality.}
applied to~\eqref{eq:disc_gronwall} yields
$e_K\le e_0\exp\!\bigl(C_{\ell'}C_p\|\dot S\|_{L^1}\bigr)=:\widetilde M$,
so $\sup_{k,h}E(\bs q_k,S_k)\le\widetilde M-\tilde c_2-1=:M$, a bound
independent of $h$. Coercivity then gives
$c_1\|\bs q_k\|^p-C_\ell\|\bs q_k\|\le M+c_2$, whence
$\sup_{k,h}\|\bs q_k\|\le M'$ independent of $h$. Finally,
from~\eqref{eq:summed_energy} with $K=N$, using
$E(\bs q_N,S_N)\ge-c_2-C_\ell M'=:-C_{\min}$ and \eqref{eq:work_incr},
\begin{equation}\label{eq:var_bound}
  \Var(\bs q^h;[0,T])=\sum_{k=0}^{N-1}\Dr(\bs q_k,\bs q_{k+1})
  \le E(\bs q_0,S_0)+C_{\min}+C_{\ell'}M'\|\dot S\|_{L^1}=:C_{\mathrm{var}},
\end{equation}
with $C_{\mathrm{var}}$ independent of $h$.

\emph{Step 2: compactness.} By~\eqref{eq:var_bound} and
$\|\bs q^h\|_{L^\infty}\le M'$, Helly's selection theorem
\cite[Thm.~3.3]{AmbrosioFuscoPallara} yields a subsequence $h_j\to0$ and
$\bs q\in\mathrm{BV}([0,T];\Q)$ with $\bs q^{h_j}(t)\to\bs q(t)$ for
every $t\in[0,T]$; $\bs q(t)\in\Q$ since $\Q$ is closed.

\emph{Step 3: verification of \textnormal{(S)}.} Fix $t$ and
$\tilde{\bs q}\in\Q$. Let $k_j$ be the index with $t\in[t_{k_j},t_{k_j+1})$;
then $t_{k_j+1}\to t$ and, by continuity of $S$ ($S\in W^{1,1}\subset C^0$),
$S_{k_j+1}\to S(t)$. The discrete stability~\eqref{eq:disc_stab} at step
$k_j$ reads
$E(\bs q^{h_j}(t),S_{k_j+1})\le E(\tilde{\bs q},S_{k_j+1})+\Dr(\bs q^{h_j}(t),\tilde{\bs q})$.
Passing to the limit $j\to\infty$: the left side tends to
$E(\bs q(t),S(t))$ and the first right-hand term to
$E(\tilde{\bs q},S(t))$ by continuity of $E$ and $\bs q^{h_j}(t)\to\bs q(t)$,
$S_{k_j+1}\to S(t)$; the second term tends to $\Dr(\bs q(t),\tilde{\bs q})$
by continuity of $\Dr$ (\cref{prop:Dr_linear}). Hence
$E(\bs q(t),S(t))\le E(\tilde{\bs q},S(t))+\Dr(\bs q(t),\tilde{\bs q})$
for all $\tilde{\bs q}$, i.e.\ (S).

\emph{Step 4: verification of \textnormal{(E)}.} \textit{Upper bound.}
With $K_j$ such that $t_{K_j}\le t<t_{K_j+1}$, summing
\eqref{eq:disc_energy} over $k<K_j$ and writing the work exactly, as
in~\eqref{eq:work_incr} but with equality, gives
\begin{equation}\label{eq:summed_exact}
  E(\bs q_{K_j},S_{K_j})+\Diss_\Psi(\bs q^{h_j};[0,t_{K_j}])
  \le E(\bs q_0,S_0)+\int_0^{t_{K_j}}\!\partial_S E(\bs q^{h_j}(s),S(s))\dot S(s)\dd s,
\end{equation}
where we used $\bs q^{h_j}(s)=\bs q_k$ on $[t_k,t_{k+1})$. We pass to the
limit in each term. On the left, $E(\bs q_{K_j},S_{K_j})\to E(\bs q(t),S(t))$
by continuity, while lower semicontinuity of the $\Psi$-variation under
pointwise convergence gives
$\Diss_\Psi(\bs q;[0,t])\le\liminf_j\Diss_\Psi(\bs q^{h_j};[0,t])$: for
any fixed partition $0=s_0<\dots<s_M=t$,
$\sum_i\Dr(\bs q^{h_j}(s_i),\bs q^{h_j}(s_{i+1}))\to\sum_i\Dr(\bs q(s_i),\bs q(s_{i+1}))$
by continuity of $\Dr$, and each partial sum is bounded by
$\Diss_\Psi(\bs q^{h_j};[0,t])$; taking $\liminf_j$ and then the
supremum over partitions gives the claim. On the right, the integrands
$\partial_S E(\bs q^{h_j}(s),S(s))\dot S(s)=-\bs q^{h_j}(s)\cdot\bs\ell'(S(s))\dot S(s)$
converge pointwise a.e.\ to $\partial_S E(\bs q(s),S(s))\dot S(s)$ (as
$\bs q^{h_j}(s)\to\bs q(s)$ and $\bs\ell',\dot S$ are fixed) and are
dominated, uniformly in $j$, by the integrable function
$s\mapsto C_{\ell'}M'|\dot S(s)|\in L^1(0,T)$ (using
$\|\bs q^{h_j}(s)\|\le M'$ and $\|\bs\ell'\|\le C_{\ell'}$); Lebesgue's
dominated convergence theorem therefore gives
$\int_0^{t_{K_j}}\partial_S E(\bs q^{h_j},S)\dot S\to\int_0^t\partial_S E(\bs q,S)\dot S$.
Therefore
\begin{equation}\label{eq:E_upper}
  E(\bs q(t),S(t))+\Diss_\Psi(\bs q;[0,t])
  \le E(\bs q_0,S(0))+\int_0^t\partial_S E(\bs q,S(s))\dot S(s)\dd s .
\end{equation}
\textit{Lower bound.} Since $\bs q$ satisfies (S) (Step~3), the standard
lower energy estimate for stable BV curves holds: testing (S) along any
partition $0=s_0<\dots<s_M=t$, summing the resulting inequalities
$E(\bs q(s_{i}),S(s_i))\le E(\bs q(s_{i+1}),S(s_i))+\Dr(\bs q(s_{i}),\bs q(s_{i+1}))$
and passing to the finest partition recovers the reverse inequality
of~\eqref{eq:E_upper} (see \cite[Sec.~2.1]{MielkeRoubicek}).
Combining the two inequalities yields the energy balance~(E). Hence
$\bs q$ is an energetic solution. If the solution is unique
(\cref{thm:uniqueness} or the scalar case of \cref{thm:BV_uniqueness}),
every subsequence has the same limit $\bs q$, so the whole family
$\bs q^h$ converges.
\end{proof}

\begin{proposition}[Local consistency]\label{prop:consistency}
Assume \ref{A:ell} and $S\in W^{1,\infty}(0,T)$, and let $\bs q$ be the
exact energetic solution and $\tau_k$ the truncation
error~\eqref{eq:truncation}.
\begin{enumerate}[label=\textnormal{(\roman*)},leftmargin=2.4em,itemsep=2pt]
  \item\label{cons:weak} If $\bs q\in\mathrm{AC}(0,T;\Q)$, then
  \begin{equation}\label{eq:cons_weak}
    |\tau_k|\le C_{\ell'}\|\dot S\|_{L^\infty}\Bigl(\int_{t_k}^{t_{k+1}}\|\dot{\bs q}(\tau)\|\dd\tau\Bigr)h_k=O(h_k).
  \end{equation}
  \item\label{cons:strong} If $\bs q\in W^{1,\infty}(0,T;\Q)$
  \textnormal{(}equivalently, $\bs q$ Lipschitz\textnormal{)}, then
  \begin{equation}\label{eq:cons_strong}
    |\tau_k|\le\tfrac12 C_{\ell'}\|\dot{\bs q}\|_{L^\infty}\|\dot S\|_{L^\infty}\,h_k^2=O(h_k^2).
  \end{equation}
\end{enumerate}
\end{proposition}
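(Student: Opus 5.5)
The plan is to rewrite $\tau_k$ as a single integral whose integrand is controlled by the oscillation of $\bs q$ over the step. By \cref{prop:power}\ref{E:AC}, applied with the frozen state $\bs q(t_k)$, the discrete work increment is
\[
E(\bs q(t_k),S_{k+1})-E(\bs q(t_k),S_k)=\int_{t_k}^{t_{k+1}}\partial_S E(\bs q(t_k),S(s))\,\dot S(s)\dd s .
\]
Since $\partial_S E(\bs q,S)=-\bs q\cdot\bs\ell'(S)$ is linear in $\bs q$, subtracting from the continuous work gives
\[
\tau_k=-\int_{t_k}^{t_{k+1}}\bigl(\bs q(s)-\bs q(t_k)\bigr)\cdot\bs\ell'(S(s))\,\dot S(s)\dd s .
\]
Cauchy--Schwarz together with \ref{A:ell} and $S\in W^{1,\infty}$ then yields
\[
|\tau_k|\le C_{\ell'}\|\dot S\|_{L^\infty}\int_{t_k}^{t_{k+1}}\|\bs q(s)-\bs q(t_k)\|\dd s .
\]
Both estimates of the proposition follow from this single inequality.

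For \ref{cons:weak}, I use absolute continuity of $\bs q$ and the fundamental theorem of calculus. For $s\in[t_k,t_{k+1}]$ this gives $\|\bs q(s)-\bs q(t_k)\|\le\int_{t_k}^{s}\|\dot{\bs q}\|\le\int_{t_k}^{t_{k+1}}\|\dot{\bs q}(\tau)\|\dd\tau$. This bound is independent of $s$, so integrating over an interval of length $h_k$ produces \eqref{eq:cons_weak}. The prefactor $\int_{t_k}^{t_{k+1}}\|\dot{\bs q}\|$ is bounded, and in fact tends to $0$ with $h_k$ by absolute continuity of the integral, so the bound is indeed $O(h_k)$.

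For \ref{cons:strong}, I use the Lipschitz bound instead: $\|\bs q(s)-\bs q(t_k)\|\le\|\dot{\bs q}\|_{L^\infty}(s-t_k)$. Integrating gives $\int_{t_k}^{t_{k+1}}(s-t_k)\dd s=\tfrac12h_k^2$, which is exactly \eqref{eq:cons_strong}.

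I expect the main obstacle to be justifying the integral representation of the discrete work increment. The map $S\mapsto\bs\ell(S)$ is only Lipschitz, so the chain rule for $\bs\ell\circ S$ has to be taken in the Sobolev/a.e.\ sense. This is precisely the content of \cref{prop:power}\ref{E:AC}, so I would cite it directly. A second point to check is measurability of $s\mapsto\bs q(s)\cdot\bs\ell'(S(s))\dot S(s)$, which holds since $\bs q$ is continuous and $\bs\ell'(S(\cdot))\dot S(\cdot)$ is measurable and bounded. Once these are in place the rest is elementary.
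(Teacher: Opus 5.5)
Your proposal is correct and follows essentially the same route as the paper. You write the frozen-state work increment as $\int_{t_k}^{t_{k+1}}\partial_S E(\bs q(t_k),S(s))\dot S(s)\dd s$, use linearity of $\partial_S E$ in $\bs q$ to reach the bound $|\tau_k|\le C_{\ell'}\|\dot S\|_{L^\infty}\int_{t_k}^{t_{k+1}}\|\bs q(s)-\bs q(t_k)\|\dd s$, and then specialise it with the absolute-continuity and Lipschitz estimates; your extra remarks (citing \cref{prop:power} for the chain rule, and noting that the prefactor in (i) actually vanishes as $h_k\to0$) are harmless refinements of the same argument.
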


\begin{proof}
By the fundamental theorem of calculus (valid since
$S\in W^{1,\infty}$), the frozen-state discrete work is
$E(\bs q(t_k),S_{k+1})-E(\bs q(t_k),S_k)=\int_{t_k}^{t_{k+1}}\partial_S E(\bs q(t_k),S(s))\dot S(s)\dd s$.
Subtracting this from the continuous work in~\eqref{eq:truncation} and
using $\partial_S E(\bs q,S)=-\bs q\cdot\bs\ell'(S)$,
\begin{align}
  \tau_k
  &=\int_{t_k}^{t_{k+1}}\bigl[\partial_S E(\bs q(s),S(s))-\partial_S E(\bs q(t_k),S(s))\bigr]\dot S(s)\dd s \notag\\
  &=\int_{t_k}^{t_{k+1}}\bigl(\bs q(s)-\bs q(t_k)\bigr)\cdot\bigl(-\bs\ell'(S(s))\bigr)\dot S(s)\dd s . \label{eq:tau_repr}
\end{align}
By Cauchy--Schwarz, \ref{A:ell} ($\|\bs\ell'\|\le C_{\ell'}$) and
$|\dot S|\le\|\dot S\|_{L^\infty}$,
\begin{equation}\label{eq:tau_bound_common}
  |\tau_k|\le C_{\ell'}\|\dot S\|_{L^\infty}\int_{t_k}^{t_{k+1}}\|\bs q(s)-\bs q(t_k)\|\dd s .
\end{equation}
\emph{\ref{cons:weak}.} If $\bs q\in\mathrm{AC}$, then for
$s\in[t_k,t_{k+1}]$, $\|\bs q(s)-\bs q(t_k)\|\le\int_{t_k}^{s}\|\dot{\bs q}\|\dd\tau\le\int_{t_k}^{t_{k+1}}\|\dot{\bs q}\|\dd\tau$;
inserting into~\eqref{eq:tau_bound_common} and integrating the constant
bound over $s\in[t_k,t_{k+1}]$ (length $h_k$) gives~\eqref{eq:cons_weak}.
\emph{\ref{cons:strong}.} If $\bs q\in W^{1,\infty}$, i.e.\ Lipschitz,
then $\|\bs q(s)-\bs q(t_k)\|\le\|\dot{\bs q}\|_{L^\infty}(s-t_k)$, so
\eqref{eq:tau_bound_common} gives
$|\tau_k|\le C_{\ell'}\|\dot S\|_{L^\infty}\|\dot{\bs q}\|_{L^\infty}\int_{t_k}^{t_{k+1}}(s-t_k)\dd s
=\tfrac12 C_{\ell'}\|\dot S\|_{L^\infty}\|\dot{\bs q}\|_{L^\infty}h_k^2$,
which is~\eqref{eq:cons_strong}.
\end{proof}

\begin{proof}[Proof of \cref{thm:global_consistency}]
\emph{Statement.} \emph{Under \ref{A:ell}, $S\in W^{1,\infty}$ and
either $\bs q\in\mathrm{AC}$ or $\bs q\in W^{1,\infty}$, the accumulated
truncation error satisfies $\mathcal E_\tau=\sum_k|\tau_k|\le Ch$.}

\smallskip
Sum the local bounds of \cref{prop:consistency} over $k=0,\dots,N-1$ and
bound each step by $h=\max_k h_k$. In the weak case,
by~\eqref{eq:cons_weak},
\[
  \mathcal E_\tau=\sum_{k}|\tau_k|
  \le C_{\ell'}\|\dot S\|_{L^\infty}\sum_k h_k\int_{t_k}^{t_{k+1}}\|\dot{\bs q}\|\dd\tau
  \le C_{\ell'}\|\dot S\|_{L^\infty}\,h\sum_k\int_{t_k}^{t_{k+1}}\|\dot{\bs q}\|\dd\tau
  = C_{\ell'}\|\dot S\|_{L^\infty}\,h\!\int_0^T\!\|\dot{\bs q}\|\dd\tau,
\]
the last equality because the subinterval integrals partition $[0,T]$.
In the strong case, by~\eqref{eq:cons_strong} and $\sum_k h_k^2\le h\sum_k h_k=Th$,
\[
  \mathcal E_\tau\le\tfrac12 C_{\ell'}\|\dot{\bs q}\|_{L^\infty}\|\dot S\|_{L^\infty}\sum_k h_k^2
  \le\tfrac12 C_{\ell'}\|\dot{\bs q}\|_{L^\infty}\|\dot S\|_{L^\infty}\,T\,h .
\]
In both cases $\mathcal E_\tau\le C\,h=O(h)$, with $C$ as stated in
\cref{thm:global_consistency}.
\end{proof}

\begin{proposition}[State error under nodal exactness]\label{prop:interp}
Let $\bs q\colon[0,T]\to\Q$ have finitely many jump points
$0<t^\ast_1<\dots<t^\ast_m<T$, be Lipschitz with constant $L$ on each
open subinterval between consecutive jumps, and set
$\Delta_j:=\|\bs q(t^{\ast+}_j)-\bs q(t^{\ast-}_j)\|$ and
$J:=\sum_{j=1}^m\Delta_j$. On a partition of maximal step $h$, let
$\bs q^h$ be the right-continuous piecewise-constant interpolant,
$\bs q^h(t)=\bs q^h(t_i)$ for $t\in[t_i,t_{i+1})$, and assume the scheme
is \emph{node-exact}: $\bs q^h(t_i)=\bs q(t_i)$ at every node $t_i$. Then
\begin{enumerate}[label=\textnormal{(\roman*)},leftmargin=2.4em,itemsep=2pt]
  \item\label{interp:sup} for every interval $I\subset[0,T]$ at distance
  greater than $h$ from $\{t^\ast_1,\dots,t^\ast_m\}$,
  $\|\bs q^h-\bs q\|_{L^\infty(I)}\le L\,h$;
  \item\label{interp:L1} $\|\bs q^h-\bs q\|_{L^1(0,T;\Q)}\le(\tfrac12 LT+J)\,h+mL\,h^2$.
\end{enumerate}
In particular $\|\bs q^h-\bs q\|_{L^1}=O(h)$, and the sup-norm error away
from the jumps is $O(h)$.
\end{proposition}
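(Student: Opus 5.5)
The argument is purely local to each cell $[t_i,t_{i+1})$. Node exactness turns the discretisation error into the oscillation of $\bs q$ over that cell, since for $t\in[t_i,t_{i+1})$
\[
  \bs q^h(t)-\bs q(t)=\bs q(t_i)-\bs q(t).
\]
We therefore classify cells as \emph{clean} or \emph{dirty}. A clean cell has closure containing no jump point. A dirty cell contains some $t^\ast_j$ in $(t_i,t_{i+1})$; a jump falling exactly on a node is handled by right-continuity, so it can be treated as lying at the left end of a clean cell. Since the jumps are finitely many and the cells are disjoint, there are at most $m$ dirty cells.

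For \ref{interp:sup}, suppose $t\in I$ and $t\in[t_i,t_{i+1})$. Because $t_{i+1}-t_i\le h$ and $\operatorname{dist}(I,\{t^\ast_j\})>h$, the segment $[t_i,t]$ contains no jump point. It therefore lies in the closure of one open interval on which $\bs q$ is $L$-Lipschitz. Continuity up to the endpoints within that piece then gives
\[
  \|\bs q(t_i)-\bs q(t)\|\le L(t-t_i)\le Lh .
\]
Taking the supremum over $t\in I$ proves \ref{interp:sup}.

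For \ref{interp:L1}, split the integral over the cells.
\begin{itemize}
  \item On a clean cell the same Lipschitz bound gives
  \[
    \int_{t_i}^{t_{i+1}}\|\bs q(t_i)-\bs q(t)\|\dd t\le L\int_{t_i}^{t_{i+1}}(t-t_i)\dd t=\tfrac12Lh_i^2\le\tfrac12 L h\,h_i .
  \]
  Summing over all clean cells, whose lengths sum to at most $T$, yields at most $\tfrac12LTh$.
  \item On a dirty cell containing the jumps $t^\ast_j$ for $j$ in an index set $\mathcal J_i$, write $\bs q(t)-\bs q(t_i)$ as a telescoping sum. It consists of the Lipschitz increments across the sub-pieces, with total length at most $t-t_i\le h$, plus the jump sizes $\Delta_j$ for those jumps in $(t_i,t]$. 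This gives the pointwise bound
  \[
    \|\bs q(t)-\bs q(t_i)\|\le Lh+\sum_{j\in\mathcal J_i}\Delta_j .
  \]
  Integrating over a cell of length $\le h$ gives at most $Lh^2+h\sum_{j\in\mathcal J_i}\Delta_j$. Summing over the at most $m$ dirty cells, each jump being counted once, gives at most $mLh^2+Jh$.
\end{itemize}
The dirty cells have already been counted in the clean-cell length budget, so over-counting only strengthens the inequality. Adding the two contributions gives
\[
  \|\bs q^h-\bs q\|_{L^1}\le\bigl(\tfrac12LT+J\bigr)h+mLh^2,
\]
and the $O(h)$ conclusions follow.

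The only delicate step is the bookkeeping on dirty cells. A cell may contain several jumps, and a jump may coincide with a node. The telescoping bound handles the first issue, and the right-continuity convention for the interpolant, together with one-sided limits of $\bs q$, handles the second. In every case no jump is counted more than once, which is what keeps the constant equal to $J$.
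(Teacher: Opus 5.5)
Your argument follows the paper's proof: node exactness reduces the error on each cell to $\|\bs q(t_i)-\bs q(t)\|$, jump-free cells give $\tfrac12Lh_i^2$, cells carrying jumps give $Lh_i^2$ plus $h$ times their jump sizes, and there are at most $m$ such cells. Your telescoping over several jumps in one cell is slightly more careful than the paper's ``single crossed jump''. Part (i) is fine, because your distance argument already shows that $t_i$ is not a jump point.

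The step that fails as written is the one you single out: a jump on a node, $t^\ast_j=t_i$. You declare $[t_i,t_{i+1})$ clean ``by right-continuity'', but the right-continuity you have is that of the interpolant, not of $\bs q$. Node exactness gives $\bs q^h\equiv\bs q(t^\ast_j)$ on the cell. For $t\in(t_i,t_{i+1})$, however, the Lipschitz hypothesis only controls $\|\bs q(t)-\bs q(t^{\ast+}_j)\|\le L(t-t_i)$. The clean estimate therefore needs $\bs q(t^\ast_j)=\bs q(t^{\ast+}_j)$, which is not a hypothesis. It also fails for the paper's own bistable energetic solution \eqref{eq:play_exact2}, which is left-continuous ($q=-a$ for $t\le t_{\mathrm{jump}}$). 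Suppose a node hits $t_{\mathrm{jump}}$ and the scheme returns $-a$ there. This is legitimate, since at $\ell=\rho$ both wells minimise the incremental functional, and node exactness still holds. The error on that whole cell is then about $2a=\Delta_1$, not $O(h)$.

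The repair is the paper's bookkeeping. Call a cell dirty whenever the half-open cell $[t_i,t_{i+1})$ contains a jump, including at its left end. There bound $\|\bs q(t_i)-\bs q(t)\|\le\|\bs q(t^\ast_j)-\bs q(t^{\ast+}_j)\|+L(t-t_i)\le\Delta_j+L(t-t_i)$. This uses the convention, tacit in the paper as well, that $\bs q(t^\ast_j)$ is one of the two one-sided limits. Each jump lies in exactly one half-open cell, so there are still at most $m$ dirty cells and the bound $(\tfrac12LT+J)h+mLh^2$ is unchanged. The same convention also classifies the cells your closure-based definition left out: a cell whose right endpoint is a jump does not contain it, $\bs q$ is $L$-Lipschitz on it, and so it is clean.
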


\begin{proof}
Fix a step $[t_i,t_{i+1})$, $h_i:=t_{i+1}-t_i\le h$. Node-exactness gives
$\bs q^h\equiv\bs q(t_i)$ on the step, so
$\|\bs q^h(t)-\bs q(t)\|=\|\bs q(t_i)-\bs q(t)\|$ there.

\emph{Jump-free steps.} If the step contains no jump, $\bs q$ is
$L$-Lipschitz on it, so $\|\bs q(t_i)-\bs q(t)\|\le L(t-t_i)\le L h_i\le Lh$.
This proves~\ref{interp:sup}, because an interval at distance $>h$ from
the jumps meets only jump-free steps (each step has length $\le h$).
Moreover $\int_{t_i}^{t_{i+1}}\|\bs q^h-\bs q\|\dd t\le\int_{t_i}^{t_{i+1}}L(t-t_i)\dd t=\tfrac12 L h_i^2$.

\emph{Steps containing a jump $t^\ast_j$.} The total variation of
$\bs q$ from $t_i$ to $t$ is its Lipschitz part plus the single crossed
jump, so $\|\bs q(t_i)-\bs q(t)\|\le L(t-t_i)+\Delta_j\le L h_i+\Delta_j$,
whence $\int_{t_i}^{t_{i+1}}\|\bs q^h-\bs q\|\dd t\le(L h_i+\Delta_j)h_i\le L h_i^2+\Delta_j\,h$.

\emph{Summation.} Adding over all steps, using $\sum_i h_i=T$,
$h_i\le h$, and that at most $m$ steps contain a jump,
\[
  \|\bs q^h-\bs q\|_{L^1}
  \le\tfrac12 L\sum_i h_i^2+\sum_{j=1}^m\bigl(L h_{i(j)}^2+\Delta_j\,h\bigr)
  \le\tfrac12 L T h+mL\,h^2+J\,h,
\]
which is~\ref{interp:L1}; both bounds are $O(h)$.
\end{proof}

\begin{remark}\label{rem:interp_scope}
\Cref{prop:interp} presumes the node-exactness $\bs q^h(t_i)=\bs q(t_i)$,
which is \emph{not} a generic property of the scheme; it holds in the
scalar benchmarks of \cref{sec:experiments} because, on any interval of
monotone loading, the return map solves the same algebraic yield
condition that the exact solution satisfies at the node, and reproduces
the lock values exactly. It fails whenever the loading carries structure
on a time scale comparable to, or finer than, the step $h$---so that the
exact state changes regime more than once within a step---the simplest
instance being a loading extremum strictly interior to a step, and a more
striking one a sub-step oscillation, for which the state error remains
$O(1)$ until $h$ resolves that scale (\cref{sec:conv_osc}). It fails
likewise, in the vectorial case, where the transition path at a jump is
genuinely path-dependent and the discrete and continuous post-jump states
may differ. A general error rate without node-exactness is not addressed
here.
\end{remark}

\section{Details for the bistable benchmark}
\label{app:experiments}

We prove the energetic jump threshold~\eqref{eq:l_jump_energetic} of
\cref{sec:benchmark2} and record the local-stability threshold quoted in
\cref{rem:energetic_vs_bv}. Throughout, $k,a,\rho>0$ and $\ell\ge0$ is
the (increasing) instantaneous load; we use the incremental
objective~\eqref{eq:Ikstar} with previous state $q_k=-a$,
\begin{equation}\label{eq:I_bistable}
  I(q):=E(q,\ell)+\rho\,|q-(-a)|=F(q)-q\ell+\rho\,|q+a|,
\end{equation}
with $F$ the piecewise-quadratic well~\eqref{eq:benchmark2}. Condition
(S) at $q=-a$ holds iff $-a$ is a global minimiser of $I$ over $\R$.

\begin{proposition}[Energetic jump threshold]\label{prop:jump_energetic}
For~\eqref{eq:benchmark2} with initial datum $q(0)=-a$ and increasing
loading, and provided $\rho<4ka$, the state $q=-a$ is globally stable
exactly on $\{\ell:\ell\le\rho\}$ and loses global stability for
$\ell>\rho$, the comparison being realised by the bottom of the other
well, $q=+a$. Equivalently, the energetic solution jumps at
$\ell_{\mathrm{jump}}=\rho$, from $q^-=-a$ to $q^+=+a$.
\end{proposition}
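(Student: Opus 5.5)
The plan is to reduce global stability of $q=-a$ to an explicit comparison of the incremental objective $I$ of \eqref{eq:I_bistable} with its value at $-a$. Since $F(-a)=0$, we have $I(-a)=a\ell$. I will minimise $I$ separately on the two pieces of the domain where $F$ is quadratic, $q\le0$ and $q>0$, and compare each piecewise infimum with $a\ell$. By \cref{prop:Dr_linear}, condition (S) at $q=-a$ is exactly the statement that $-a$ is a global minimiser of $I$.

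\emph{Left well.} For $q\le0$ I substitute $u=q+a\le a$, which gives
\[
I=a\ell+\tfrac12ku^2-u\ell+\rho|u|.
\]
For $u<0$ the extra terms $\tfrac12ku^2-u(\ell+\rho)$ are positive. For $0\le u\le a$ they equal $\tfrac12ku^2+u(\rho-\ell)$, which is nonnegative when $\ell\le\rho$. When $\ell>\rho$ this expression is negative for small $u>0$, so $-a$ is not even a local minimiser. Hence on the left well $-a$ is optimal iff $\ell\le\rho$.

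\emph{Right well.} For $q>0$ I write $v=q-a>-a$, which gives
\[
I=2a\rho-a\ell+\tfrac12kv^2+v(\rho-\ell).
\]
At $v=0$ this yields $I(+a)-I(-a)=2a(\rho-\ell)$. This is negative for $\ell>\rho$, so global stability is lost there and the competitor is the bottom of the other well. For $\ell\le\rho$ I bound $I$ below by its unconstrained minimum over $v\in\R$, attained at $v=(\ell-\rho)/k$. This gives
\[
\inf_{q>0}I-I(-a)\ \ge\ (\rho-\ell)\Bigl(2a-\frac{\rho-\ell}{2k}\Bigr).
\]
Since $0\le\rho-\ell\le\rho<4ka$, the right-hand side is nonnegative.

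This last comparison is the only delicate step, and it is where the hypothesis $\rho<4ka$ enters. The possible issue that the vertex lies outside $q>0$ is harmless, because the unconstrained minimum only gives a weaker lower bound.

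Combining the two wells, $-a$ is globally stable exactly for $\ell\le\rho$. Finally, at $\ell=\rho$ the right-well vertex is $v=0$ and $I(+a)=I(-a)=a\rho$. So $+a$ is a co-minimiser at the threshold and becomes the strict competitor beyond it. Since $\ell$ is increasing, the energetic solution leaves $-a$ at $\ell_{\mathrm{jump}}=\rho$ and lands at $q^+=+a$, with the post-jump state stable at that instant by the same computation.
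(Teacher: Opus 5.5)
Your proof is correct and follows essentially the same route as the paper. Both arguments split $I$ into its two convex pieces $q\le0$ and $q>0$, use $I(-a)=a\ell$, compare this with the vertex value $2a\rho-a\ell-(\rho-\ell)^2/(2k)$ of the right piece, and invoke $\rho<4ka$ to sign the bracket.

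Your two refinements only tighten the argument. First, you exhibit $I(+a)-I(-a)=2a(\rho-\ell)<0$ directly for $\ell>\rho$. Second, you note that the vertex value remains a valid lower bound even when $q_R=a+(\ell-\rho)/k\le0$, a case in which the paper's phrase ``$q_R$ is $I$'s minimiser on the right branch'' is not literally true.
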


\begin{proof}
We minimise $I$ in~\eqref{eq:I_bistable} separately on the two branches
$q\le0$ and $q>0$, each of which is strictly convex, then compare their
minima; the switch occurs where the two are equal, a \emph{Maxwell}
(equal-area) construction in the sense of phase-transition
theory~\cite{mielke2002}: the globally stable branch is the one of lower
incremental energy, and the transition occurs where the two branches
are energetically balanced.

\emph{Left branch ($q\le0$).} Here $F(q)=\tfrac12k(q+a)^2$ and
$|q+a|=q+a$ for $q \ge -a$ (note that $I(q)$ is strictly decreasing for $q < -a$, so the minimiser in the left branch must satisfy $q \ge -a$), so
$I(q)=\tfrac12k(q+a)^2-q\ell+\rho(q+a)$ with
$I'(q)=k(q+a)-\ell+\rho$. At $q=-a$, $I'(-a)=\rho-\ell$, which is
$\ge0$ precisely when $\ell\le\rho$; since $I$ is convex on this branch,
$q=-a$ is its minimiser over $\{q\le0\}$ for all $\ell\le\rho$, with
minimum value
\begin{equation}\label{eq:I_left}
  I(-a)=F(-a)+a\ell=a\ell
\end{equation}
(the quadratic term vanishes at the well bottom $q=-a$).

\emph{Right branch ($q>0$).} Here $F(q)=\tfrac12k(q-a)^2$ and
$|q+a|=q+a$, so $I(q)=\tfrac12k(q-a)^2-q\ell+\rho(q+a)$ with unconstrained
minimiser $q_R:=a+(\ell-\rho)/k$ (positive once $\ell>\rho-ka$). Using
$q_R-a=(\ell-\rho)/k$,
\[
  F(q_R)=\frac{(\ell-\rho)^2}{2k},\qquad
  E(q_R,\ell)=F(q_R)-q_R\ell
  =\frac{(\ell-\rho)^2}{2k}-a\ell-\frac{\ell(\ell-\rho)}{k},
\]
and, since $q_R+a=2a+(\ell-\rho)/k$,
\begin{equation}\label{eq:I_right}
  I(q_R)=E(q_R,\ell)+\rho(q_R+a)
  =\frac{(\ell-\rho)^2}{2k}-a\ell-\frac{\ell(\ell-\rho)}{k}+2a\rho+\frac{\rho(\ell-\rho)}{k}.
\end{equation}
The three $\tfrac1k$-terms in~\eqref{eq:I_right} combine as
$\tfrac1k\bigl[\tfrac12(\ell-\rho)^2-\ell(\ell-\rho)+\rho(\ell-\rho)\bigr]
=\tfrac1k(\ell-\rho)\bigl[\tfrac12(\ell-\rho)-(\ell-\rho)\bigr]
=-\tfrac{(\ell-\rho)^2}{2k}$, so
\begin{equation}\label{eq:I_right_simplified}
  I(q_R)=-\frac{(\ell-\rho)^2}{2k}-a\ell+2a\rho .
\end{equation}

\emph{Comparison.} Subtracting~\eqref{eq:I_right_simplified}
from~\eqref{eq:I_left},
\begin{equation}\label{eq:maxwell}
  I(-a)-I(q_R)
  =a\ell-\Bigl(-\frac{(\ell-\rho)^2}{2k}-a\ell+2a\rho\Bigr)
  =2a(\ell-\rho)+\frac{(\ell-\rho)^2}{2k}
  =(\ell-\rho)\Bigl[2a+\frac{\ell-\rho}{2k}\Bigr].
\end{equation}
The bracket $2a+(\ell-\rho)/(2k)$ is increasing in $\ell$; over
$\ell\in[0,\rho]$ its minimum is at $\ell=0$, equal to
$2a-\rho/(2k)$, which is positive precisely when $\rho<4ka$. Under this
hypothesis the bracket is positive throughout $[0,\rho]$, so the sign of
\eqref{eq:maxwell} is that of $(\ell-\rho)$: it is $\le0$ for
$\ell\le\rho$ (with equality only at $\ell=\rho$) and $>0$ for $\ell$
just above $\rho$ (the bracket stays positive there). Consequently
$I(-a)\le I(q_R)$ for all $\ell\in[0,\rho]$, and since $q_R$ is $I$'s
minimiser on the right branch while $-a$ is $I$'s minimiser on the left
branch, $q=-a$ is the global minimiser of $I$ over $\R$ exactly for
$\ell\in[0,\rho]$; it ceases to be so immediately beyond $\ell=\rho$. At
$\ell=\rho$ one has $q_R=a$, so the losing comparison is with $q=+a$,
proving the claim.
\end{proof}

\begin{remark}[Local threshold]\label{rem:local_threshold}
The distinction invoked in \cref{rem:energetic_vs_bv} concerns not the
pre-jump state but \emph{how long the left branch remains locally
admissible}. Tracking the left-branch equilibrium
$q_L(\ell)=-a+(\ell-\rho)/k$ (the stationary point of $I$ on
$\{q\le0\}$ once it moves off $-a$), local stability of that branch
persists as long as its associated force stays within the elastic
interval, i.e.\ until $q_L(\ell)$ reaches the corner $q=0$; this happens
at $\ell=ka+\rho$, strictly larger than the energetic threshold
$\ell=\rho$ whenever $a>0$. A locally (balanced-viscosity) selected
solution would therefore remain on the left branch up to
$\ell=ka+\rho$ before jumping. Identifying the corresponding
BV post-jump state requires the vanishing-viscosity construction of
\cref{sec:bv} and is not carried out numerically here; the incremental
scheme~\eqref{eq:Ikstar}, minimising globally, realises instead the
energetic threshold $\ell=\rho$ of \cref{prop:jump_energetic}.
\end{remark}
\clearpage